\documentclass[opre,nonblindrev]{informs4Preprint} 
\RequirePackage{tgtermes}
\RequirePackage{newtxtext}
\RequirePackage{newtxmath}
\RequirePackage{endnotes}

\OneAndAHalfSpacedXII 

\usepackage{natbib}
 \bibpunct[, ]{(}{)}{,}{a}{}{,}

\usepackage{fix-cm}

\usepackage{natbib}
 \bibpunct[, ]{(}{)}{,}{a}{}{,}

\usepackage{bbm}
\usepackage{amssymb}
\usepackage{amsmath}
\usepackage{tcolorbox}
\usepackage{booktabs}
\DeclareRobustCommand{\mybox}[2][gray!15]{
\begin{tcolorbox}[  
        left=0pt,
        right=0pt,
        top=0pt,
        bottom=0pt,
        colback=#1,
        colframe=#1,
        enlarge left by=0mm,
        boxsep=10pt,
        arc=2pt,outer arc=2pt,
        ]
        #2
\end{tcolorbox}
}

\usepackage{thm-restate}

\usepackage{natbib}
\bibliographystyle{plainnat}
\bibpunct[, ]{(}{)}{,}{a}{}{,}

\usepackage[utf8]{inputenc} \usepackage[T1]{fontenc}    

\usepackage[colorlinks=true, linkcolor=blue, citecolor=blue!70!black,urlcolor=black,breaklinks=true]{hyperref}

\usepackage{comment}

\usepackage{url}            \usepackage{booktabs}       \usepackage{amsfonts}       \usepackage{nicefrac}       \usepackage{microtype}      \usepackage{mathtools, bm}

\usepackage[ruled,vlined,linesnumbered]{algorithm2e}
\usepackage{setspace}

\SetCommentSty{mycommfont}

\usepackage{multirow}
\usepackage{float}
\usepackage{xspace}

\usepackage{enumitem}
\usepackage{xfrac}
\usepackage{cleveref}

\newenvironment{numberedlemma}[1]{\begin{lemma}}{\end{lemma}\addtocounter{theorem}{-1}}

\newenvironment{myprocedure}[1][htb]
{  \begin{algorithm}[#1]}{\end{algorithm}}

\newcommand{\xhdr}[1]{\vspace{5pt}\noindent{\bf {#1.}}}
\newcommand{\xhdrQ}[1]{\vspace{5pt}\noindent{\bf {#1?}}}

\newcounter{relctr} \everydisplay\expandafter{\the\everydisplay\setcounter{relctr}{0}}  
\AtBeginDocument{}

\newlength\myindent
\setlength\myindent{2em}

\newcommand{\signalSpace}{\Sigma}
\newcommand{\stateSpace}{\Theta}
\newcommand{\state}{\theta}

\newcommand{\signal}{\sigma}

\newcommand{\RUtility}{\rho}

\newcommand{\SUtility}{\xi}
\newcommand{\SExpUtility}{U}

\newcommand{\posteriorState}{\mu}

\newcommand{\costState}{\omega}

\newcommand{\cc}[1]{\ensuremath{\mathsf{#1}}}

\newcommand{\primed}{^{\dagger}}

\newcommand{\True}{\cc{True}}
\newcommand{\False}{\cc{False}}

\newcommand{\poly}{\cc{poly}}

\newcommand{\MemberLP}{\cc{MembershipLP}}
\newcommand{\CheckPersu}{\texttt{CheckIC}}

\newcommand{\Unif}{\cc{Unif}}

\newcommand{\Sender}{platform}
\newcommand{\Receiver}{user}
\newcommand{\Receivers}{users}
\newcommand{\action}{a}
\newcommand{\totalstate}{m}
\newcommand{\scheme}{\pi}
\newcommand{\RUtilityi}{\RUtility(\state, \action)}
\newcommand{\totaltime}{T}
\newcommand{\period}{t}
\newcommand{\signalSpacei}{\signalSpace_{\period}}
\newcommand{\schemei}{\scheme_{\period}}
\newcommand{\stochastic}{\Delta}
\newcommand{\statei}{\state_{\period}}
\newcommand{\signali}{\signal_{\period}}
\newcommand{\posteriori}{\posteriorState_{\period}}

\newcommand{\realizedstatei}{i}
\newcommand{\realizedstatej}{j}
\newcommand{\prior}{\lambda}

\newcommand{\actioni}{\action_{\period}}
\newcommand{\reals}{\R}

\newcommand{\optoff}{\scheme^*}
\newcommand{\ALG}{\texttt{ALG}}

\newcommand{\Reg}[2][]{\text{\bf REG}\ifthenelse{\not\equal{}{#1}}{_{#1}}{}\!\left[{\def\givenn{\middle|}#2}\right]}

\newcommand{\polymregret}{O(\poly(\totalstate\log\totaltime))}

\newcommand{\schemeanchorpair}{\scheme^{I}}

\newcommand{\convexcombin}{f}
\newcommand{\convexcombinbf}{\mathbf{\convexcombin}}
\newcommand{\mean}{\lambda}
\newcommand{\support}{\mathcal{S}}
\newcommand{\supportlareq}{\support_{+}}
\newcommand{\supportlar}{\support_{+}}
\newcommand{\supportsma}{\support_{-}}

\newcommand{\xlareq}{x_{+}}
\newcommand{\xlar}{x_{+}}
\newcommand{\xsma}{x_{-}}
\newcommand{\xlari}{\xlar^{(i)}}
\newcommand{\xsmai}{\xsma^{(j)}}
\newcommand{\convexcombini}{\convexcombin_{ij}}

\newcommand{\problar}{q_{+}}
\newcommand{\probsma}{q_{-}}
\newcommand{\problari}{\problar^{(i)}}
\newcommand{\probsmai}{\probsma^{(j)}}

\newcommand{\realizedstateanchori}{\realizedstatei\doubleprimed}
\newcommand{\realizedstateanchorj}{\realizedstatej\doubleprimed}

\newcommand{\schemestartingpoint}{\scheme^{(0)}}
\newcommand{\schemestartingpointTilde}{\tilde{\scheme}^{(0)}}
\newcommand{\anchorset}{\mathcal{S}}

\newcommand{\meaningfulset}{\tilde\stateSpace}
\newcommand{\meaningfulsetopt}{\scheme\doubleprimed}

\newcommand{\anchorsetHat}{\hat{\mathcal{S}}}

\newcommand{\schemeanchorpairk}{\scheme^{\texttt{II}}}

\newcommand{\RUtilityDiff}{\delta}
\newcommand{\costStatei}{\RUtilityDiff(\realizedstatei)\prior(\realizedstatei)}
\newcommand{\costStatej}{\RUtilityDiff(\realizedstatej)\prior(\realizedstatej)}
\newcommand{\optoffi}{\optoff(\realizedstatei)}

\newcommand{\CRP}{\texttt{ConRP}}
\newcommand{\LPRP}{\texttt{LP-RP}}

\newcommand{\history}{h}
\newcommand{\historyaction}{h_{a}}
\newcommand{\histories}{\mathcal{H}}
\newcommand{\historiesaction}{\mathcal{H}_a}
\newcommand{\schemes}{\Pi}
\newcommand{\optval}{\val^*}

\newcommand{\val}{v}

\newcommand{\posterior}{\posteriorState}

\newcommand{\convexcombinhat}{\hat\convexcombin}
\newcommand{\convexcombinhati}{\convexcombinhat_{ij}}
\newcommand{\natures}{\mathbb{N}}

\newcommand{\xlarn}{\xlar^{(n_1)}}
\newcommand{\xsman}{\xsma^{(n_2)}}
\newcommand{\problarn}{\problar^{(n_1)}}
\newcommand{\probsman}{\probsma^{(n_2)}}

\newcommand{\problartilden}{\tilde{q}_{+}^{(n_1)}}

\newcommand{\IC}{\cc{IC}}
\newcommand{\netSUtility}{\SUtility^{\cc{net}}}
\newcommand{\newnetSUtility}{\hat\SUtility^{\cc{net}}}

\newcommand{\context}{\boldsymbol{x}}
\newcommand{\contextSpace}{\mathcal{X}}
\newcommand{\unknownVec}{\boldsymbol{\omega}}
\newcommand{\videoNum}{n}

\newcommand{\squishlist}{
   \begin{list}{{{\small{$\vullet$}}}}
    { \setlength{\itemsep}{3pt}      \setlength{\parsep}{1pt}
      \setlength{\topsep}{1pt}       \setlength{\partopsep}{0pt}
     \setlength{\leftmargin}{1em} \setlength{\labelwidth}{1em}
      \setlength{\labelsep}{0.5em} } }
\newcommand{\squishend}{  \end{list}  }

\newcommand{\N}{\mathbb{N}}
\newcommand{\R}{\mathbb{R}}

\newcommand{\cA}{\mathcal{A}}

\newcommand{\supp}{\texttt{supp}}

\newcommand{\prob}[2][]{\text{Pr}\ifthenelse{\not\equal{}{#1}}{_{#1}}{}\!\left[{\def\givenn{\middle|}#2}\right]}
\newcommand{\expect}[2][]{\mathbb{E}\ifthenelse{\not\equal{}{#1}}{_{#1}}{}\!\left[{\def\givenn{\middle|}#2}\right]}

\newcommand{\indicator}[2][]{\mathbbm{1}\ifthenelse{\not\equal{}{#1}}{_{#1}}{}\!\left[{\def\givenn{\middle|}#2}\right]}

\newcommand{\condition}{\,\mid\,}

\newcommand{\instance}{I}
\newcommand{\optoffPrimed}{\scheme^{*\dagger}}
\newcommand{\eps}{\epsilon}
\newcommand{\doubleprimed}{^\ddagger}
\newcommand{\price}{p}
\newcommand{\pricei}{\price_t}
\newcommand{\ked}{^{(k)}}

\newif\ifarxiv
\newif\ifec
\newif\ifinforms
\arxivtrue
\ecfalse
\informstrue

\usepackage{titlesec}
\titleformat{\subsubsection}
  {\normalfont\normalsize\bfseries}{\thesubsubsection}{1em}{}
\titlespacing*{\subsubsection}
  {0pt}{6pt plus 3pt minus 2pt}{1.5ex plus .2ex}

\makeatletter
\DeclareRobustCommand{\qed}{\ifmmode \else \leavevmode\unskip\penalty9999 \hbox{}\nobreak\hfill
  \fi
  \quad\hbox{\qedsymbol}}
\newcommand{\qedsymbol}{\openbox}
\renewenvironment{proof}[1][\proofname]{\par
  \normalfont
  \topsep6\p@\@plus6\p@ \trivlist
  \item[\hskip\labelsep\itshape
    #1.]\ignorespaces
}{\qed\endtrivlist
}
\newcommand{\proofname}{Proof}
\makeatother

\EquationsNumberedThrough    

\TheoremsNumberedThrough     \ECRepeatTheorems  

\MANUSCRIPTNO{}

\begin{document}

\RUNAUTHOR{Feng, Tang and Xu}

\RUNTITLE{No-Regret Bayesian Recommendation to Homogeneous Users}

\TITLE{No-regret Bayesian Recommendation to Homogeneous Users}

\ARTICLEAUTHORS{\AUTHOR{Yiding Feng}
\AFF{Hong Kong University of Science and Technology, \EMAIL{ydfeng@ust.hk}}
\AUTHOR{Wei Tang}
\AFF{Chinese University of Hong Kong, \EMAIL{weitang@cuhk.edu.hk}}
\AUTHOR{Haifeng Xu}
\AFF{University of Chicago, \EMAIL{haifengxu@uchicago.edu}}
}

\ABSTRACT{

We introduce and study the online Bayesian recommendation problem for a recommender system platform. The platform has the privilege to privately observe a utility-relevant \emph{state} of a product at each round and uses this information to make online recommendations to a stream of myopic users. This paradigm is common in a wide range of scenarios in the current Internet economy. The platform commits to an online recommendation policy that utilizes her information advantage on the product state to persuade self-interested \Receivers\ to follow the recommendation. Since the platform does not know users' preferences or beliefs in advance, we study the platform's online learning problem of designing an adaptive recommendation policy to persuade users while gradually learning users' preferences and beliefs en route.

Specifically, we aim to design online learning policies with no \emph{Stackelberg regret} for the platform, i.e., against the optimal benchmark policy in hindsight under the assumption that users will correspondingly adapt their responses to the benchmark policy. Our first result is an online policy that achieves double logarithmic regret dependence on the number of rounds. We also present an information-theoretic lower bound showing that no adaptive online policy can achieve regret with better dependency on the number of rounds. Finally, by formulating the platform's problem as optimizing a linear program with membership oracle access, we present our second online recommendation policy that achieves regret with polynomial dependence on the number of states but logarithmic dependence on the number of rounds.

 }

\KEYWORDS{Online learning, Bayesian recommendation, regret minimization, linear program}

\maketitle

\section{Introduction}
\label{sec:intro}

Thanks to the rapid growth of modern technology, online platforms have become a major component of today's economy. By the end of 2021, at least 30 social platforms had over 100 million active users, and seven of them surpassed 1 billion users. According to a recent report by \citet{RAM}, the global networking platforms market was valued at 192 billion U.S.\ dollars in 2019 and is projected to reach 940 billion U.S.\ dollars by 2026.
This burgeoning economic sector raises numerous complex algorithmic challenges.

Within these platforms, recommended products or services are often tagged with labels signaling varying degrees of their relevance or quality to users. For example, music streaming platforms like Spotify use curated playlists with titles such as ``Discover Weekly'' and ``Release Radar'' to highlight the novelty or relevance of the recommended tracks. E-commerce platforms like Amazon use labels such as ``Top Picks'', ``Best Sellers'' and ``Customers Also Bought'' to categorize their product recommendations, each indicating the popularity level of the items. From the platforms' perspective, these labels act as informational signals that guide user behavior and decision-making. 
However, in practice, the platform may not fully know users' preferences about the relevance or their prior beliefs about the quality of the recommended products. Below, we provide two concrete motivating applications.

\begin{example}[Recommendation in real-time short-video platforms]
\label{ex:short video}
A prominent example of recommendations on social platforms, which serves as a major motivating application for this work, is real-time short-video recommendation systems, such as TikTok, Instagram Reels, and YouTube Shorts, as well as e-commerce platforms like Amazon Live and Taobao Live.
Taking TikTok as an example, its signature feature is the ``For You'' feed, a stream of videos recommended to users in real-time. In this application, a user repeatedly opens the app throughout the day and spends a significant amount of time on the platform.
The user aims to discover high-quality content, while the platform itself may have other objectives, such as maximizing user engagement or boosting content exposure -- goals that may not always fully align with users' preferences.
The platform evaluates videos based on quality metrics (e.g., through historical user feedback or other collected data) or characteristics like captions, sounds, and hashtags. Users, on the other hand, form prior beliefs about the quality of videos based on their past experiences with the platform or external information sources. Here, video quality refers to how engaging, interesting, or relevant users perceive the content to be.
However, the platform often lacks knowledge of users' interests or prior beliefs, especially for new users.
Despite this lack of knowledge, the platform must decide whether to recommend the video to the users. 
If the platform had perfect knowledge of users' preferences, it could reliably make recommendations that optimize its own objectives while keeping users satisfied 
(e.g., by balancing high-quality videos with less engaging ones).
However, given the lack of such knowledge, the platform must learn to make such recommendations over time.
\end{example}
\begin{example}[Recommendation in dating apps]
\label{ex:dating apps}
Online dating apps (e.g., Tinder, Bumble, Coffee Meets Bagel) face similar challenges. Users on these platforms spend a significant amount of time every day on swiping and checking profiles, 
with the goal of building meaningful connections, while the platform itself may have additional goals, 
such as increasing revenue by giving extra visibility to paying subscribers, promoting new or underrepresented users to keep both sides of the market active, or favoring profiles that are predicted to generate high long-run engagement. These objectives can lead the platform to promote certain profiles for visibility even when they are not the user's top matches, so the platform's incentives are not perfectly aligned with the users' preferences.
The platform can observe various characteristics of the suggested matches (e.g., detailed profiles and past interactions with other users). Each user, in turn, may have prior beliefs about the quality or compatibility of potential matches, based on their experiences with the app or external sources. However, the user's preference and prior belief are often unknown to the platform, particularly for new users or those with limited interactions.
Despite this uncertainty, the platform must decide which profiles to recommend. If the platform had full knowledge of users' preferences and beliefs, it could consistently suggest matches that balance user satisfaction with its own goals. However, without such knowledge, the platform must gradually learn how to make better match suggestions over time.
\end{example}

Motivated by  applications like the above, we introduce and study the \emph{online Bayesian recommendation} problem, 
where a recommendation platform sequentially interacts with a homogeneous population of users, each sharing identical private preferences and beliefs.
Here we describe the problem in the context of video recommendation. 
At each time, a video is displayed by the platform to an incoming user.\footnote{In practice, platforms make joint decisions on which video to display and how to recommend. This work decouples them and focuses on the recommendation problem by assuming that the display decision is made exogenously. In \Cref{sec_generalization}, we discuss how our results could be generalized if the platform can also choose which video to display.}
To capture the uncertain characteristics of the video, we study a \emph{Bayesian} model.
In this model, the payoff-relevant characteristics of a video are represented by a (random) \emph{state} of the video. This state comprehensively encodes all critical payoff-relevant information, features, and characteristics of the video -- such as content relevance, video quality, engagement metrics, emotional impact, etc. These elements are pivotal in influencing a user's experience and satisfaction when they choose to watch a video.
The platform and user each have their own preferences over the video states, which are captured by their utility functions, respectively.  We assume a natural \emph{information asymmetry} between the platform and users --- only the platform can privately observe the realized state of each video, whereas all users only have a prior belief about the video state. Notably, the platform also has its own prior belief over the video state, 
which may differ from the users’ beliefs since they may form such beliefs from completely different sources. 
The platform designs a recommendation policy 
that makes different ``levels'' of recommendation
based on her private information about the video, i.e., its realized state.
After observing the recommendation level,
together with his initial belief,
the user forms a posterior belief about the video 
and decides either to watch this video or skip
it.

In the idealized situation where the platform knew both the users' preferences and prior beliefs, this sequential Bayesian recommendation problem reduces to a standard Bayesian persuasion problem and thus can be solved by   linear programming \citep{KG-11,AC-16,DX-19}.  
This paper, however,  addresses the more realistic yet challenging situation in which the platform does not know users' preferences nor users' prior beliefs.  
Therefore, the platform needs to adaptively update recommendation strategies based on the past users' behavior, so as to maximize her own accumulated utility. 
To formalize the platform's goal, we use the \emph{Stackelberg regret} to evaluate the performance of the platform's recommendation policy, 
and the goal of this paper is to design online policies to achieve no Stackelberg regret for the platform.\footnote{The Stackelberg regret is a regret notion recently developed for strategic settings   \citep{dong2018strategic,chen2020learning}, which compares to the optimal policy in hindsight, assuming users will correspondingly adapt their behaviors to the benchmark policy (thus the ``Stackelberg'' in its name). While previous works  demonstrated the difficulty of obtaining sublinear Stackelberg regret in online classification problems \citep{chen2020learning}, we surprisingly show that our problem admits efficient online learning algorithms with Stackelberg regret that only has logarithmic dependence  
on the number of rounds $\totaltime$.}

\subsection{Our Contributions and Techniques}
\label{subsec:contribution}

In this work, we introduce a novel online Bayesian recommendation framework that addresses the challenge of making recommendations to arriving users with unknown yet homogeneous preferences, under information asymmetry between the  platform and  users. 
Below, we overview our results and contributions for the case with only unknown user preferences but assuming users share the same common prior with the platform.\footnote{The situations with unknown different user priors are mathematically equivalent, and their results are provided in \Cref{sec_generalization}.}

When the platform has complete knowledge of users' preferences, the platform's problem can be viewed as a classic Bayesian persuasion problem with binary action \citep{KG-11}.
The optimal recommendation policies in all rounds are identical and can be solved independently as solving the optimal signaling scheme (that maps the realized state to a possibly random signal) in the Bayesian persuasion problem.
The correspondence between the terminology in recommendation systems and in Bayesian persuasion is outlined in \Cref{table:correspondence table}.
By the revelation principle, the optimal signaling scheme in hindsight is a  \emph{direct} signaling scheme with a binary recommendation level---that is, a binary signal where each signal corresponds directly to a specific action recommendation.
In particular, the optimal signaling scheme associates each state with the users' {\em preference difference}, which represents how much the user prefers watching the video over  not watching the video given this particular state.
Then, the optimal signaling scheme specifies an 
{\em order} over all states based on the users' preference differences  as well as a {\em threshold state} so that it recommends every state above the threshold state in this order.
The threshold state is selected to ensure that whenever the signaling scheme recommends a video, the user is indifferent between watching and skipping it.

When the platform lacks knowledge of users' preferences, it must use adaptive signaling schemes to learn the correct order of the states and the threshold state to achieve optimal long-term revenue.
To understand and solve the platform's problem, we focus on two natural scenarios:  
(1) known ordinal preference -- the order of the users' preference difference is known to the platform, but the exact differences are unknown;\footnote{In practice, the online platform may infer the user's ordinal preference---that is, the relative ranking of videos based on the perceived interest or relevance, from offline  data like past interactions. However, the platform may find it hard to learn the degree to which a user enjoys a particular video---that is, how much the user prefers watching this video than not watching the video.} 
and (2) unknown ordinal preference -- the order of the users' preference difference is unknown to the platform. 
We summarize our results in \Cref{table}.

\begin{table}[!ht]
\renewcommand{\arraystretch}{1.2}
\centering
\begin{tabular}{cc|c|c}
\hline
\multicolumn{2}{c|}{} & Upper bound & Lower bound \\ \hline
\multicolumn{2}{c|}{Known ordinal preference} & $O(\log\log T) ^{*}$ {\scriptsize \text{[\Cref{thm:loglog T regret upper bound}]}} & \multirow{3}{*}{\parbox{2.5cm}{\centering \vspace{30pt}$\Omega(\log\log T)^{*}$\\ {\scriptsize \text{[\Cref{thm:loglog T regret lower bound}]}}}} \\ \cline{1-3}
\multirow{2}{*}{\parbox{3cm}{\centering \vspace{22pt} Unknown\\ ordinal preference}} & \multicolumn{1}{|l|}{\hspace{5pt} Affine preference{$^\dagger$}} & $O(\log\log T)^{*}$ {\scriptsize \text{[\Cref{prop:ordinal preference}]}} & \\ \cline{2-3}
& \multicolumn{1}{|c|}{Arbitrary preference} & {\parbox{6.1cm}{\centering \vspace{5pt}$O(m 2^{m-1}\log\log T \wedge m^6 \log^{O(1)} (mT))$ \\\vspace{-10pt} {\scriptsize \text{[\Cref{thm:loglog T regret upper bound unknown order}, \Cref{thm:log T}]}}\vspace{5pt}}} & \\[0.3cm] \hline
\end{tabular}
\vspace{5pt}
\caption{Regret bounds of the online Bayesian recommendation problem. Here $m$ is the number of states, and $T$ is the time horizon. (*): These bounds do not depend on the number of states $m$. ($\dagger$): the user's utility is an affine function over the state.}
\label{table}
\vspace{-10pt}
\end{table}

Before diving into the detailed discussion of our results, we first highlight one crucial feature in our model---the feedback to the \Sender\ is {\em limited} and {\em probabilistic}.
This feedback structure is a significant challenge that algorithms with low regret must overcome or bypass, setting our model apart from other classic models in the online learning literature.
Specifically,  in our model, a signaling scheme maps each video state to a possibly random signal; therefore, the platform only observes the user's response to this realized signal and not to other signals. 
The feedback is also probabilistic, as the realized signal
depends on the realized state, which is drawn from an exogenous prior distribution.
Consequently, the platform may incur substantial regret in order to learn the user's response to a specific signal realization or their preference for a specific state.

\xhdr{Known ordinal preference}
For the known ordinal preference scenario, it suffices for the platform to learn which state is the threshold state in the optimal signaling scheme, and how to recommend when this threshold state is realized. 
For this scenario, we show that there exists 
a \texttt{Con}servative \texttt{R}ecommendation \texttt{P}olicy, henceforth \CRP\ (\Cref{alg:loglog T}), which achieves the double-logarithmic $O(\log\log\totaltime)$ regret, as stated in our first main result below  (see Theorem \ref{thm:loglog T regret upper bound} for the formal result).
\begin{repeattheorem}[Theorem (Informal)]
\CRP\ achieves $O(\log\log \totaltime)$ regret.
\end{repeattheorem}

The key intuition behind the \CRP\ is an ``unbalanced'' utility structure, in which the platform has zero payoff under overoptimistic recommendations (e.g., a predicted high-user-value video ends up not being watched) but only slightly sub-optimal payoff under overpessimistic recommendations.
This ``unbalanced'' utility structure shares similarity to the seller's payoff (as a function of the charged price) in a classic dynamic pricing problem with unknown yet fixed buyer valuation \citep{KL-03}. 
Inspired by that literature,   algorithm {\CRP} uses a ``conservative'' binary search to identify the threshold state and determine how to recommend on this threshold state. 
What sets our method apart from previous studies are the additional careful treatments to handle the aforementioned challenge on the limited and probabilistic feedback structure in our problem.

\xhdr{Unknown ordinal preference}
For the unknown ordinal preference scenario,  the order and the threshold state specified in the optimal signaling scheme remain unknown. 
Due to the limited and probabilistic feedback feature, designing an online policy to pin down this order with logarithmic regret may appear impossible at first glance. 
However, we show that when the users' preferences are affinely dependent across the states (in this case, the user's expected utility can be uniquely determined by the mean of his posterior belief),
one can still achieve regret with double logarithmic 
dependence on the number of rounds $\totaltime$, and moreover,
this regret is independent of the number of states, 
and also holds even for a continuous state space. 
On the other hand, for arbitrary users' preferences, a modified 
\CRP, which enumerates over all possible orders and prunes out the bad orders in the process,  
can lead to regret with still double logarithmic 
dependence on the number of rounds $\totaltime$ but with an exponential 
dependence on the number of states $\totalstate$. 
The informal statement of these results is as follows (see \Cref{prop:ordinal preference}  and \Cref{thm:loglog T regret upper bound unknown order}
for the formal results).
\begin{repeattheorem}[Proposition (Informal)]
For affine preferences, a modified \CRP\ has expected regret $O(\log\log \totaltime)$; for arbitrary preferences, 
a modified \CRP\ has expected regret $O(m 2^{m-1}\cdot \log\log \totaltime)$.
\end{repeattheorem}
A caveat of the above results is that the regret exponentially depends on the number of states $\totalstate$.
For a wide range of applications, it is reasonable to focus on problem instances with small $\totalstate$.\footnote{In our recommendation problem, two videos should be considered as having different states if (a) the platform has enough information to distinguish them, and (b) the user's utility for watching them is different.}
To shed light on instances with large $\totalstate$, we propose another policy, a \texttt{L}inear \texttt{P}rogram-based \texttt{R}ecommendation \texttt{P}olicy, henceforth \LPRP\ (\Cref{alg:log T}), whose regret polynomially depends on $\totalstate$ but logarithmically on $\totaltime$. 
We design \LPRP\ by formulating the problem as optimizing a linear program with membership oracle access. Specifically, the unique form of user feedback in our setting can be exploited to construct a membership oracle for the platform's linear program  problem.
\begin{repeattheorem}[Theorem (Informal)] 
\LPRP\ achieves $\polymregret$ regret.
\end{repeattheorem}
We note that similar ideas of formulating learning problems as optimizing linear programs have also been applied to other online learning problems
such as contextual dynamic pricing \citep[e.g.,][]{LS-18} and
security game \citep[e.g.,][]{BHP-14}.
However, our \LPRP\ requires additional special treatment to overcome the issue of probabilistic signals.  
Moreover, compared with using the separation oracle as in  \citep{LS-18,BHP-14},
our problem of linear optimization with membership oracle access
is considerably harder. For instance, one key technical hurdle, which does not appear in previous works but our \LPRP\ has to overcome, is to construct
an \emph{interior point} inside the feasible region. 

\xhdr{Lower bound}
Similar to the optimal signaling scheme, \CRP\ and \LPRP\ only use direct signaling schemes with \emph{binary} signals.
Such direct signaling schemes are prevalent in many real-world applications such as ``For You'' in TikTok.
However, when the platform does not know and has to learn users' preferences, the revelation principle does not necessarily hold --- i.e.,  it is unclear whether restricting to direct signaling schemes with binary signals is still without loss of generality during learning.  
Our third main result provides an affirmative answer, showing that introducing more signals cannot improve the regret dependence on $\totaltime$ for all scenarios we mentioned before.

\begin{repeattheorem}[Theorem (Informal)] 
No online policy can achieve a regret better than 
$\Omega(\log\log \totaltime)$ even for
problem instances with 
binary state.
\end{repeattheorem}

\xhdr{Simulations}
In \Cref{subsec:simulation},
we also provide 
numerical experiments to evaluate the empirical
performance of our proposed algorithm and highlight some of its salient features. In particular, 
we evaluate the performance of our proposed 
algorithm \CRP, and compare its performance with several 
benchmarks, including the benchmarks that use simple 
searching strategies to find out the optimal signaling scheme without considering the unique structure of our problem, and also the benchmarks that use simple signaling schemes. 
We observe that our algorithm significantly outperforms all of these benchmarks.
The results not only 
demonstrate the benefits of using partial 
information revealing in the platform's problem, but also show the efficiency of our algorithm.

\subsection{Related Work}

Our work connects to several strands 
of existing literature.
First, when the \Receiver's preference is known
and they share the same prior belief with the \Sender, 
the one-shot instantiation 
of our problem exactly follows the formulation
of the canonical Bayesian persuasion problem \citep{KG-11}.
Bayesian persuasion concerns the problem 
where an informed sender (i.e., \Sender) 
designs an information structure (i.e., signaling scheme)
to influence the behavior of a receiver (i.e., \Receiver).
There is a growing literature, including our work, 
on studying the relaxation of 
one fundamental assumption 
in the Bayesian persuasion model -- 
the sender perfectly knows the receiver's preference
and their prior belief.
There are generally two approaches to deal with the uncertainty from the sender's perspective:
the robust approach \citep{DP-20,BTXZ-21,K-18,HW-21} 
which tries to design signaling schemes
that perform robustly well for all possible  receiver preferences; 
the online learning approach \citep{CCMG-20,CMCG-21,ZIX-21}
which studies the regret minimization 
when the sender repeatedly interacts with receivers.
\footnote{We refer the reader to the work by \citet{DP-20,BTXZ-21}
for a comprehensive overview 
on different methods in the robust approach.}
Our work falls into the second approach.
In particular, \citet{CCMG-20} address the 
sender interacting with 
receivers who have an unknown type.
They provide an algorithm with a regret guarantee
$O(T^{\sfrac{4}{5}})$ but it has exponential running-time 
over the number of states.
\citet{ZIX-21} study a setting where the sender
has an unknown prior distribution, 
and they require the sender to make \IC\ signaling 
schemes at each round.
They provide an algorithm with an $O(\sqrt{T})$ regret bound, 
and also demonstrate that it is tight 
whenever the receiver has five (or more) actions. 
Our work differs from the above works in many ways.
First, instead of assuming unknown types, 
our setting directly relaxes
the knowledge of the \Receiver's utilities.
Second, we do not
require the \Sender's signaling scheme to be \IC\
at each round.
Third, 
we achieve logarithmic regrets 
over the time horizon and this is possible due to the special structure of the Bayesian recommendation problem. 

Second, our work also relates to research on Bayesian 
exploration in multi-armed bandit \citep{KMP-14,MSSV-15,MSSW-21}.
In both our Bayesian recommendation and their 
Bayesian exploration, the \Sender\ utilizes her 
information advantage to persuade the \Receiver\ 
to take the desired action, and the \Receiver\  
observes the \Sender's message and forms their posterior
which will be used to pick their optimal action. 
However, in Bayesian exploration, 
the platform is learning the true state of nature, which is realized at the very beginning and never changes afterward, and is required 
to make incentive-compatible action recommendation  
at each round.
While in our setup, the platform is learning the users' preferences and beliefs, and the state 
is realized independently across the time horizon.
Additionally, 
our problem does not require 
the platform to
make incentive-compatible recommendations. 
Thus, the analysis and the technique 
of this work are quite 
different from theirs. 

Third, our setting with homogeneous users  shares  similarities with the fixed valuation in (contextual) dynamic pricing literature \citep{LRV-18, KL-03,LS-18,LLS-21},
where the logarithmic regrets are also achievable. 
Part of our analysis is also related to this line of literature.
In particular, we prove our lower bound via
a non-trivial reduction to the single-item 
dynamic pricing problem. 
Though it seems that our problem for 
multiple states shares similarities with the contextual 
dynamic pricing (e.g., we both need to learn an unknown vector: in our setting, it is the \Receiver's preference for product states, and in contextual pricing, it is the buyer's preference for product features), 
we note that there are significant differences in our 
problem structure such as the \Sender's actions 
and the probabilistic feedback from \Receiver s
(see \Cref{apx:dynamic pricing} for 
the detailed comparisons).

\section{Preliminary}
\label{sec_prelim}
\label{sec:prelim}

Motivated by the applications of 
short-video platforms,
this paper introduces and studies the   \emph{Bayesian 
recommendation} problem.

\subsection{Basic Setup}
We begin by describing a static model
and then introduce the online setup
studied in this work. 

In the static model, there are two players: 
a platform and a user.\footnote{In this paper, we use ``she'' 
to denote the platform\ 
and ``he'' to denote the user.}
The platform wants to recommend a video 
to the user.
The video
is associated with a private
state $\state$ drawn from 
a finite set $[\totalstate] 
\triangleq \{1, \ldots, \totalstate\}$ 
according to a prior distribution $\prior \in \stochastic([\totalstate])$,
which is common knowledge among both players. 
We use the notation $\state$ to denote the state as a random variable, 
and $\realizedstatei, \realizedstatej\in [\totalstate]$
as its possible realizations.
The user 
has 
a binary-action set $\cA = \{0, 1\}$
(i.e., not watch or watch),
and 
a utility function $\RUtility:[\totalstate]\times \cA
\rightarrow \reals$
mapping from the state of the video 
and his action to his utility.
The platform has the utility function
$\SUtility: [\totalstate] \times\cA \rightarrow \reals$
that depends on the user's action and the realized state.
In addition, the platform has an information advantage in the sense that the platform can observe the realized video state but the user cannot.
This advantage enables the platform
to recommend the video in different levels based on the realized state.
In particular, 
the platform can design a finite\footnote{For ease of presentation,
we restrict the signal space to be finite.
For continuous signal space $\signalSpace$, 
$\scheme(\realizedstatei, \cdot)$ can be interpreted as a probability
density function (conditional on state $\realizedstatei$) 
over the continuous signal space $\signalSpace$.
We note that this restriction is without loss as the optimal signaling scheme
characterized in \Cref{lem:optimal in hindsight characterization} indeed 
has finite signal space.}
signal space $\signalSpace$ where each signal $\signal\in\signalSpace$ represents a recommendation level for the video,
and a signaling scheme
$\scheme = \{\pi(i, \cdot)\}_{i\in[\totalstate]} $ where $\pi(i, \cdot)\in\Delta(\signalSpace), i\in[\totalstate]$ denotes the conditional probability distribution over the signals when the state $i$ is realized,
and we use
$\scheme(\realizedstatei, \signal)$ as the conditional probability of sending signal $\signal\in\signalSpace$ at state~$\realizedstatei$.

For ease of presentation, our main context will focus on a stylized setup where 
(i)
the platform
and the user
share the same prior belief $\prior$ over 
$[\totalstate]$;
(ii)
the platform
has state-independent utility function
and only benefits from the user's action $1$
(i.e., watch/click), namely, $\SUtility(\state, \action) \equiv \SUtility(\action), \forall \state\in[\totalstate]$ and $\SUtility(1) \ge \SUtility(0)$.
In \Cref{sec_generalization}, 
we illustrate how our algorithms 
and results can be easily extended to general settings where (i) users might have different unknown prior beliefs and (ii) the platform has arbitrary utility functions.
Without loss of generality, we normalize 
$\SUtility(\action) = \action$.
Namely, the platform gains 1 unit of revenue if the user
watches the video.

In this work, we consider the following repeated interaction 
between the platform and a population of users.
All users share the same utility function $\RUtility(\cdot,\cdot)$ which is \emph{unknown} to the platform.
The setting proceeds for $\totaltime$ rounds. 
For each round $\period = 1, \ldots, \totaltime$:
\begin{enumerate}
    \item The platform commits to
    a signaling scheme $\schemei$ with a 
    signal space $\signalSpacei$.

    \item A video with state $\statei \sim \prior$ is realized (only observed by the platform),
    and a signal $\signali \sim \schemei(\statei, \cdot)$
    is realized (observed by both players) subsequently according to designed signaling scheme $\schemei$.
    
    \item Upon seeing the signal $\signali$, together with the knowledge of $\scheme_t$,
    the \Receiver\ $t$ updates his belief regarding the underlying state to a {\em posterior distribution} – a.k.a.\ the posterior – denoted by $\posteriori(\signali,\cdot) \in \Delta([\totalstate])$ where 
    $\posteriori(\signali,\realizedstatei) \triangleq
    \frac{
    \prior(\realizedstatei) \schemei(\realizedstatei, \signali)
    }
    {
    \sum_{\realizedstatej\in[\totalstate]}
    \prior(\realizedstatej)
    \schemei(\realizedstatej, \signali)
    }$ is the posterior probability on state $\realizedstatei$.
    
    \item 
    With the posterior $\posteriori(\signali, \cdot)$, 
    the \Receiver\ $t$
    chooses an action $\actioni(\signali)$ that maximizes his expected utility, i.e., 
    $\actioni(\signali) = \argmax_{\action\in\cA} 
    \expect[\state\sim \posteriori(\signali, \cdot)]{\RUtilityi}$.
\item The \Sender\  then derives the utility
    $\actioni$.
\end{enumerate}
Given a signaling scheme $\scheme$, 
let $\SExpUtility(\scheme)$
denote the \Sender's one-shot expected payoff,
\begin{align*}
    \SExpUtility(\scheme) = 
    \sum\nolimits_{\realizedstatei\in[\totalstate]} 
    \sum\nolimits_{\signal\in\signalSpace}\prior(\realizedstatei)\scheme(\realizedstatei,\signal) \cdot\action(\signal)
\end{align*}
where $\signalSpace$ is the associated signal space and $\action(\signal)\in\cA$ is the agent's best response action upon seeing signal~$\signal$ from the signaling scheme~$\scheme$. The goal of the \Sender\ is to design an online policy  that constructs signaling schemes $\{\schemei\}_{\period \in [\totaltime]}$ to maximize her long-term expected utility $\sum_{\period\in[\totaltime]}
\SExpUtility(\schemei)$.
We conclude this subsection with the following two remarks.

\begin{remark}[Bayesian-rational behavior]
Upon seeing a signal realization, the user is able to form his Bayesian posterior belief,
and then make his decision by maximizing the expected utility 
based on the current belief. 
This Bayesian-rational behavior follows the common assumption adopted in Bayesian persuasion literature, and also other literature that studies signaling as a way to reveal product quality/characteristic information.\footnote{For example, in the literature of Bayesian social learning with pricing, the customers are usually assumed to be able to update their beliefs in a Bayesian manner upon seeing certain new information about the product quality \citep{IMSZ-19,SVZ-22}, and then make the purchase decision by maximizing the corresponding expected utility \citep{IMSZ-19}. 
Other literature includes signaling in queues \citep{DPR-12,LI-19}.
We believe that the inclusion of Bayesian rational users inherently adds complexity to the problem, yielding rich insights and results. 
We view incorporating realistic and relevant behavioral biases as a next step in this research direction.}
\end{remark}
\begin{remark}[Commitment power]
In the above interaction, 
the platform is assumed to have the commitment power, and the designed signaling scheme is known to the user. In practice, the platform might lack this commitment power and may not be able to change her signaling scheme daily. However, we note that users may engage with the platform over a specific duration (i.e., a time cycle), e.g., staying on the platform for a while or returning to the platform several times.
Throughout this duration, the platform may stick to the same signaling scheme, allowing the user to discern the underlying signaling scheme as well as his best response to it.
Thus, a single round in the above theoretical model 
can be equated to one practical time cycle. 
\end{remark}

\subsection{Stackelberg Regret and 
Benchmark}
\label{subsec: benchmark}

We evaluate the performance of an online policy by its \emph{Stackelberg regret}
\citep{chen2020learning} 
against the \emph{optimal policy in hindsight}.
The optimal policy in hindsight 
knows \Receiver s' utility function $\RUtility(\cdot,\cdot)$,
and maximizes the \Sender's long-term expected utility. 
Since \Receiver s are all identical, 
the optimal policy in hindsight commits to the same signaling
scheme $\optoff$
(see its characterization in
program
\ref{eq:optimal in hindsight}
and \Cref{lem:optimal in hindsight characterization}) 
for every round $\period\in[\totaltime]$.
\begin{definition}
Given the \Receiver's utility function $\RUtility$,
let $\optoff$ be the optimal signaling scheme.
The \emph{Stackelberg regret} of online policy $\ALG$
is 
\begin{align*}
    \Reg{\ALG} \triangleq 
    \displaystyle \sum\nolimits_{\period\in[\totaltime]}
    \SExpUtility(\optoff)
    -
    \expect[\scheme_1,\dots, \scheme_\totaltime]{
    \displaystyle \sum\nolimits_{\period\in[\totaltime]}
    \SExpUtility(\schemei)}
\end{align*}
where $\schemei$
is the signaling scheme committed by $\ALG$ in each round $\period\in[\totaltime]$.
\end{definition}
Different from the regret notation (e.g., external regret) in classic single-agent no-regret learning literature
\citep[cf.][]{BM-07}, 
the Stackelberg regret compares
to the optimal policy in hindsight, 
where \Receiver s have the opportunity to re-generate 
a different history by best-responding to the new 
signaling scheme $\optoff$.
In the remainder of the paper, we simplify the terminology
Stackelberg regret as regret.

To facilitate our analysis, we introduce one auxiliary variable,
$\RUtilityDiff(\realizedstatei) \triangleq \RUtility(\realizedstatei, 1) 
- \RUtility(\realizedstatei, 0)$, 
to represent how much the user prefers action $1$ over action $0$ given state $i$.
To make the problem non-trivial, 
we make the following two assumptions 
on the \Receiver's utility function
throughout this paper.
\begin{assumption}
\label{asp:receiver utility 1}
There exists at least one state $\realizedstatei\in[\totalstate]$ such that 
$\RUtilityDiff(\realizedstatei)\prior(\realizedstatei) > 0$.
\end{assumption}
\begin{assumption}
\label{asp:receiver utility 2}
The \Receiver's utility function satisfies that $\sum_{\realizedstatei\in[\totalstate]}
\RUtilityDiff(\realizedstatei)\prior(\realizedstatei) < 0$.
\end{assumption}
\Cref{asp:receiver utility 1} ensures that there is at least one state where the user prefers action 1 over action 0. Without this, the problem becomes trivial, as the \Receiver\ would always choose action 0 regardless of the signaling scheme, and any online policy would achieve zero regret.
\Cref{asp:receiver utility 2} ensures that under a no-information-revealing signaling scheme, the user's best response (i.e., the prior-best action) is to choose action 0. Without this, the platform could commit to a no-information-revealing scheme to achieve optimal utility, as the user would always choose action 1.

When the \Receiver s' utility function $\RUtility(\cdot,\cdot)$ 
is known to the \Sender, 
the optimal signaling scheme in hindsight can be characterized as follows.
By the revelation  principle  \citep{KG-11}, the optimal signaling scheme is a direct signaling scheme with binary signal space $\signalSpace = \{0, 1\} \equiv\cA$ that each signal corresponds to an action recommendation.
In particular, it can be solved by a linear program as follows,
\begin{align}
\tag{$\mathcal{P}^{\texttt{opt}}$}
\label{eq:optimal in hindsight}
    \begin{array}{llll}
      \optoff = \argmax\limits_{\scheme}   &  
      \displaystyle\sum\nolimits_{\realizedstatei\in[\totalstate]}
      \prior(\realizedstatei)\scheme(\realizedstatei,1)
      &  \text{s.t.}
      &\\
        \text{(\IC)}
         & \displaystyle\sum\nolimits_{\realizedstatei\in[\totalstate]}
\RUtilityDiff(\realizedstatei)
         \prior(\realizedstatei)
         \scheme(\realizedstatei,1)
         \geq 0
         \quad\quad
         & & \\
         &\scheme(\realizedstatei, 1)
         +
         \scheme(\realizedstatei, 0) = 1
         & \realizedstatei\in[\totalstate]
         &\\
         &\scheme(\realizedstatei, 1)
         \geq 0, 
         \scheme(\realizedstatei, 0) \geq 0
         &
         \realizedstatei\in[\totalstate]
         &
    \end{array}
\end{align}
The objective in the above program $U(\scheme) = \sum_{\realizedstatei\in[\totalstate]} \prior(\realizedstatei) \scheme(\realizedstatei, 1)$ is the platform's expected utility from an incentive compatible, direct signaling scheme $\pi$.
The first constraint is an incentive compatibility (\IC) constraint to ensure that it is the user's best response to follow the recommended action realized from the signaling scheme,
i.e., 
taking action 1 is indeed the \Receiver's optimal action given his 
posterior
when action 1 is recommended. We note that in program
\ref{eq:optimal in hindsight},
the {\IC} constraint for action 0 is omitted
since the \IC\ constraint for action $1$ and \Cref{asp:receiver utility 2} can
jointly imply the {\IC} constraint for action $0$. To see this, 
notice that the {\IC} constraint for action $0$ is:
$
\sum_{\realizedstatei\in[\totalstate]} 
\RUtilityDiff(\realizedstatei)
\prior(\realizedstatei)
\scheme(\realizedstatei,0)
< 0$. Rearranging the \IC\ constraint for action $1$, we obtain
$\sum_{\realizedstatei\in[\totalstate]} 
\RUtilityDiff(\realizedstatei)
\prior(\realizedstatei)
\scheme(\realizedstatei,0) \le \sum_{\realizedstatei\in[\totalstate]} 
\RUtilityDiff(\realizedstatei)
\prior(\realizedstatei)$, where the right-hand side is weakly smaller than zero due to \Cref{asp:receiver utility 2}.

For ease of presentation, 
with slight abuse of notation,
we use $\optoffi \triangleq \optoff(\realizedstatei, 1)$
and thus $\optoff(\realizedstatei, 0) 
\equiv 1 - \optoffi$.
Program \ref{eq:optimal in hindsight}
can be interpreted as 
a fractional knapsack problem,
where 
the budget is zero,
and each state $\realizedstatei$
corresponds to
an item
with
value $\prior(\realizedstatei)$
and (possibly negative)
cost $\RUtilityDiff(\realizedstatei)\prior(\realizedstatei)$.
Thus, its optimal solution $\optoff$
has the following characterization.
\begin{lemma}[See for example \citealp{RSV-17}]
\label{lem:optimal in hindsight characterization}
The optimal signaling scheme $\optoff$ 
in hindsight is the optimal solution of the linear program~\ref{eq:optimal in hindsight}.
There exists a threshold 
state $\realizedstatei\primed
\in[\totalstate]$ such that
(a)
for every state $\realizedstatei\not=\realizedstatei\primed$,
$\optoffi = \indicator{
\RUtilityDiff(\realizedstatei) \geq
\RUtilityDiff(\realizedstatei\primed)}$,
and 
(b) 
$\optoff(\realizedstatei\primed) = 
-
\frac{\sum_{\realizedstatei\not= \realizedstatei\primed} \RUtilityDiff(\realizedstatei)\prior(\realizedstatei) \optoffi}{
\RUtilityDiff(\realizedstatei\primed)\prior(\realizedstatei\primed)}$.
\end{lemma}
In words, 
\Cref{lem:optimal in hindsight characterization}
states that 
the signaling scheme $\optoff$ 
reveals whether 
the state is above or below\footnote{Throughout this paper, we say 
a state $\realizedstatei$ 
is above (resp.\ below) state $\realizedstatej$ if it satisfies that
$\RUtilityDiff(\realizedstatei)
\geq 
\RUtilityDiff(\realizedstatej)$ 
(resp.\ 
$\RUtilityDiff(\realizedstatei)
< 
\RUtilityDiff(\realizedstatej)$).}
a threshold
state $\realizedstatei\primed$, 
with possible randomization at
state $\realizedstatei\primed$.

The above \Cref{lem:optimal in hindsight characterization} highlights the significance of both the cardinal value and the order of user preference differences $\{\RUtilityDiff(\realizedstatei)\}$ to characterize the optimal signaling scheme in hindsight. 
As we mentioned earlier, the user preference $\RUtility(\cdot, \cdot)$
is unknown to the platform.
Consequently, the platform is unaware of both the cardinal value and the order of $\{\RUtilityDiff(\realizedstatei)\}$, 
and thus needs to learn these quantities by adaptively changing her signaling schemes. 
To address this intricate learning challenge, 
we initially study the scenario where only the cardinal value of $\{\RUtilityDiff(\realizedstatei)\}$ is unknown to the platform, while its order is known to the platform (see \Cref{sec:knwon order}).
Building on the insights from this scenario, we then discuss the more complex situation where the platform lacks knowledge of both the cardinal value and the order of 
$\{\RUtilityDiff(\realizedstatei)\}$
(see \Cref{sec:unknwon order} and \Cref{sec:log T}).

\subsection{A Useful Subroutine for Checking the Incentive Compatibility}
When the \Receiver's utility function $\RUtility(\cdot,\cdot)$
is unknown, 
the standard revelation principle fails.
As a consequence,
restricting to binary signal space
(e.g., 
\{``recommended'', 
``not recommended''\})
is 
\emph{not} without loss of generality.
Nonetheless, 
as we formally show later, 
restricting to 
the subclass of signaling schemes
with binary signal space 
does not hurt the optimal regret.
We now formally define such 
signaling schemes as follows.
\begin{definition}
A \emph{direct signaling scheme}
$\scheme$ is a mapping from states into probability distributions over actions recommended 
to the \Receiver.
\end{definition}
With slight abuse of notation, 
for every direct signaling scheme $\scheme$,
we use $\scheme(\realizedstatei) \triangleq\scheme(\realizedstatei,1)$
and thus $\scheme(\realizedstatei,0)
\equiv 1 - \scheme(\realizedstatei)$.
When facing a direct signaling scheme $\scheme$, the \Receiver\ takes the action that maximizes his expected utility given his posterior. 
We say a signaling scheme $\scheme$ is \emph{\IC} if 
the \Receiver\ takes action 1 as long as  
action 1 is recommended by the signaling scheme $\scheme$.
The proofs of \Cref{lem:persuasive costState},
\Cref{lem:persuasive check correctness} and 
\Cref{lem:persuasive check regret} 
are deferred 
to \Cref{apx:proofs in prelim}.

\begin{restatable}{lemma}{persuasiveCostState}
\label{lem:persuasive costState}
A direct signaling scheme $\scheme$
is \IC\ if and only if $\sum_{\realizedstatei\in[\totalstate]}
\costStatei\scheme(i) \geq 0$.
\end{restatable}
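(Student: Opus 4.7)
The plan is to prove the equivalence by unfolding the definition of persuasiveness in terms of the Receiver's posterior belief conditional on the signal $1$, and then clearing denominators to obtain the claimed inequality.

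First, I will consider the generic case where $\scheme$ sends signal $1$ with positive probability, i.e., $Z \triangleq \sum_{\realizedstatej\in[\totalstate]} \prior(\realizedstatej)\scheme(\realizedstatej) > 0$. By Bayes' rule (as defined in the interaction protocol), conditional on receiving the recommendation $1$ the Receiver's posterior is
\begin{equation*}
\posteriorState(\realizedstatei\mid 1) = \frac{\prior(\realizedstatei)\scheme(\realizedstatei)}{Z}.
\end{equation*}
By definition, $\scheme$ is persuasive exactly when action $1$ is expected-utility-maximizing under this posterior, i.e., when
\begin{equation*}
\sum_{\realizedstatei\in[\totalstate]} \posteriorState(\realizedstatei\mid 1)\,\bigl(\RUtility(\realizedstatei,1) - \RUtility(\realizedstatei,0)\bigr) \;\geq\; 0.
\end{equation*}
Multiplying through by $Z > 0$ preserves the inequality, and substituting the definition of the posterior along with $\costStatei = (\RUtility(\realizedstatei,1) - \RUtility(\realizedstatei,0))\prior(\realizedstatei)$ yields exactly
\begin{equation*}
\sum_{\realizedstatei\in[\totalstate]} \costStatei\,\scheme(\realizedstatei) \;\geq\; 0,
\end{equation*}
which is the desired condition. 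Since each step is an equivalence (both directions of the inequality), this handles both directions of the "if and only if" simultaneously.

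The only thing to double-check is the edge case $Z = 0$, in which the signal $1$ is never recommended and the posterior $\posteriorState(\cdot\mid 1)$ is undefined. In this degenerate case persuasiveness holds vacuously (the Receiver is never asked to take action $1$ under $\scheme$), and on the other side we have $\scheme(\realizedstatei) = 0$ for every $\realizedstatei$ in the support of $\prior$, so $\sum_{\realizedstatei} \costStatei \scheme(\realizedstatei) = 0 \geq 0$ also holds. Hence the equivalence remains valid.

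There is no real technical obstacle here — the statement is essentially a restatement of the obedience constraint in program \ref{eq:optimum in hindsight}, the only subtlety being the bookkeeping of the $Z = 0$ corner case and the convention that $\scheme(\realizedstatei)$ denotes $\scheme(\realizedstatei,1)$ for direct schemes. I will be explicit about both in the write-up to keep the equivalence airtight.
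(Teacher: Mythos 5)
Your proof is correct and follows essentially the same route as the paper's: compute the posterior upon recommendation $1$ by Bayes' rule, write persuasiveness as nonnegativity of the expected utility difference under that posterior, and clear the denominator to obtain $\sum_{\realizedstatei}\costStatei\scheme(\realizedstatei)\geq 0$. Your explicit treatment of the degenerate case $Z=0$ is a small addition the paper omits, but it does not change the argument.
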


Before we finish the preliminary section,
we provide
Procedure~\ref{alg:persuasive check}
as a useful subroutine that will be used in 
our online 
policies.
Procedure~\ref{alg:persuasive check} takes a direct signaling scheme
as input, and determines whether 
this direct signaling scheme is \IC. 
Its correctness guarantee is given in 
\Cref{lem:persuasive check correctness}
and the regret guarantee incurred by implementing Procedure~\ref{alg:persuasive check} is given in 
\Cref{lem:persuasive check regret}.

\begin{myprocedure}
\linespread{0.8}\selectfont
\caption{$\CheckPersu(\scheme)$}
\label{algo_persuasive_check}
\label{alg:persuasive check}
\SetAlgoLined\DontPrintSemicolon
\KwIn{a direct signaling scheme $\scheme$}
\KwOut{\True/\False\ -- 
whether $\scheme$ is \IC;
or \texttt{round-exhausted} if 
there is no round left}
\While{there are rounds remaining}{
\tcc{suppose now is round $t$}
Commit to signaling scheme $\scheme$
towards \Receiver\ $t$.
\\
\textbf{if} $\signali = 1$ and $\actioni = 1$, \textbf{then}
    return \True \\ 
\textbf{else if} $\signali = 1$ and $\actioni = 0$, \textbf{then} return \False \\ 
\textbf{else if}  $\signali = 0$ and $\actioni = 1$, \textbf{then} return \False
 
\textbf{else} move to next round, i.e., $t \gets t + 1$ 
} 
return \texttt{round-exhausted}
\end{myprocedure}
\begin{restatable}{lemma}{persuasiveCheckCorrectness}
\label{lem:persuasive check correctness}
Given a direct signaling scheme $\scheme$,
Procedure~\ref{alg:persuasive check}
returns \True\ only if $\scheme$ is \IC,
and returns \False\ only if $\scheme$ is not \IC. 
\end{restatable}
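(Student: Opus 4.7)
The plan is to check each early-exit branch of Procedure~\ref{alg:persuasive check} by relating the observed pair $(\signali, \actioni)$ to the persuasiveness characterization $\sum_{\realizedstatei\in[\totalstate]} \costStatei\,\scheme(\realizedstatei)\ge 0$ provided by Lemma~\ref{lem:persuasive costState}. To do so, I first compute the Receiver's posterior after each possible signal realization under the committed scheme $\scheme$. Conditional on $\signali=1$, Bayes' rule gives posterior mass $\prior(\realizedstatei)\scheme(\realizedstatei)/\sum_{\realizedstatej}\prior(\realizedstatej)\scheme(\realizedstatej)$ on state $\realizedstatei$, so action $1$ is a best response precisely when $\sum_{\realizedstatei}\costStatei\,\scheme(\realizedstatei)\ge 0$. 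Conditional on $\signali=0$, the posterior mass is $\prior(\realizedstatei)(1-\scheme(\realizedstatei))/\sum_{\realizedstatej}\prior(\realizedstatej)(1-\scheme(\realizedstatej))$, so action $1$ is a best response precisely when $\sum_{\realizedstatei}\costStatei(1-\scheme(\realizedstatei))\ge 0$.

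With these two inequalities in hand I would walk through the three return branches. For the \True\ branch, $\signali=1$ together with $\actioni=1$ means action $1$ is (weakly) optimal given signal $1$; the first inequality above then yields $\sum_{\realizedstatei}\costStatei\,\scheme(\realizedstatei)\ge 0$, so $\scheme$ is persuasive by Lemma~\ref{lem:persuasive costState}. For the first \False\ branch, $\signali=1$ with $\actioni=0$ means the Receiver strictly prefers action $0$ over action $1$ given signal $1$ (invoking the standard sender-favored tie-breaking, so indifference would have produced $\actioni=1$); hence $\sum_{\realizedstatei}\costStatei\,\scheme(\realizedstatei)<0$ and $\scheme$ is not persuasive.

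For the second \False\ branch, $\signali=0$ with $\actioni=1$ forces $\sum_{\realizedstatei}\costStatei(1-\scheme(\realizedstatei))\ge 0$. Rearranging gives $\sum_{\realizedstatei}\costStatei\,\scheme(\realizedstatei)\le \sum_{\realizedstatei}\costStatei$, and Assumption~\ref{asp:receiver utility 2} guarantees $\sum_{\realizedstatei}\costStatei<0$, so we get $\sum_{\realizedstatei}\costStatei\,\scheme(\realizedstatei)<0$, again certifying that $\scheme$ is not persuasive. Finally, I would note that the remaining unhandled case $\signali=0,\actioni=0$ is consistent with both persuasive and non-persuasive schemes (any persuasive scheme necessarily triggers action $0$ under signal $0$ by the inequality chain above), which is exactly why the procedure declines to return in that case and moves on to the next round.

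The main subtlety — and essentially the only obstacle — is the tie-breaking convention used in the $\signali=1,\actioni=0$ branch: the argument must explicitly invoke the standard sender-favored tie-breaking, since without it the Receiver could in principle choose $\actioni=0$ while $\sum_{\realizedstatei}\costStatei\,\scheme(\realizedstatei)=0$, which would still satisfy the persuasiveness condition of Lemma~\ref{lem:persuasive costState}. Everything else is a direct algebraic reading of the two best-response inequalities.
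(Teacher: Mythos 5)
Your proposal is correct and follows essentially the same route as the paper: the decisive step --- handling the $\signali=0,\actioni=1$ branch by noting $\sum_{\realizedstatei}\costStatei\,\scheme(\realizedstatei)\leq\sum_{\realizedstatei}\costStatei<0$ via Assumption~\ref{asp:receiver utility 2} and then invoking Lemma~\ref{lem:persuasive costState} --- is exactly the paper's argument. The only cosmetic difference is that the paper dispatches the two $\signali=1$ branches directly from the definition of persuasiveness (which is phrased in terms of the Receiver's realized action), whereas you route them through the posterior best-response inequality and therefore must flag the tie-breaking convention; both readings are fine.
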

We note that 
Procedure~\ref{alg:persuasive check} does not 
include the case for 
$\signali = 0$ and $\actioni = 0$ as when this case happens, the user's response (i.e., taking $\actioni = 0$) does not convey any information about whether the signaling scheme is \IC\ or not. 
In particular, when this case happens, we know that $\sum_{i\in[\totalstate]} \prior(i)\RUtilityDiff(i)(1-\scheme(i)) < 0$, and this inequality does not imply the \IC\ inequality $\sum_{\realizedstatei\in[\totalstate]}
\costStatei\scheme(i) \geq 0$ defined as in \Cref{lem:persuasive costState} due to $\sum_{i\in[\totalstate]} \prior(i)\RUtilityDiff(i) < 0$ in \Cref{asp:receiver utility 2}.
Thus, Procedure~\ref{alg:persuasive check} stops either when it encounters
one of the listed three cases (in which the user's response can be used to infer whether the committed signaling scheme is \IC\ or not) 
or when the time rounds are exhausted. 
As long as Procedure~\ref{alg:persuasive check} returns 
\True\textbackslash\False, the platform 
knows for sure whether the signaling 
scheme is \IC\ or not.

\newcommand{\issuedrounds}{Q}

\begin{restatable}{lemma}{persuasiveCheckRegret}
\label{lem_approx_guarantee}
\label{lem:persuasive check regret}
Given a direct signaling scheme $\scheme$,
the expected regret of 
Procedure~\ref{alg:persuasive check}
is at most 
$\frac{\SExpUtility(\optoff)}{
\sum_{\realizedstatei}\prior(\realizedstatei)\scheme(\realizedstatei)}
-
\indicator{\text{$\scheme$ is \IC}}$.
\end{restatable}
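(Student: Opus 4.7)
The plan is to model each call to $\CheckPersu$ as a sequence of i.i.d.\ rounds terminated by a geometric stopping time, and to then apply Wald's identity. Let $p\triangleq\sum_{\realizedstatei}\prior(\realizedstatei)\scheme(\realizedstatei)$ denote the marginal probability of signal $\signal=1$ under $\scheme$, and let $q_1,q_0\in\{0,1\}$ denote the (deterministic) actions the user takes after observing signals $1$ and $0$, respectively. Since $\scheme$, $\prior$, and $\RUtility$ are fixed throughout one call, the quantities $p,q_0,q_1$ are the same in every round. In one round, the four joint outcomes $(\signal,\action)\in\{0,1\}^2$ thus occur with probabilities $pq_1,\,p(1-q_1),\,(1-p)q_0,\,(1-p)(1-q_0)$, and $\CheckPersu$ continues precisely on the last of these. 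Hence the number of rounds $N$ used by one call is geometric with termination probability $r\triangleq p+(1-p)q_0\geq p$, so $\E[N]=1/r$.

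Second, I would apply Wald's identity to compute the expected total platform payoff during one call as $\mu/r$, where $\mu\triangleq pq_1+(1-p)q_0$ is the per-round expected platform payoff. The benchmark earns $\SExpUtility(\optoff)$ per round, so its expected total payoff during the same $N$ rounds is $\SExpUtility(\optoff)/r$; subtracting yields the exact expected regret $(\SExpUtility(\optoff)-\mu)/r$.

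Finally, a case split matches the claimed bound. If $\scheme$ is persuasive, then $q_1=1$, so $\mu=p+(1-p)q_0=r$ and the regret equals $\SExpUtility(\optoff)/r-1$; the monotonicity $r\geq p$ bounds this by $\SExpUtility(\optoff)/p-1$. If $\scheme$ is not persuasive, then by \Cref{lem:persuasive costState} the user strictly prefers $\action=0$ upon $\signal=1$, so $q_1=0$, and the regret becomes $(\SExpUtility(\optoff)-(1-p)q_0)/r\leq \SExpUtility(\optoff)/r\leq \SExpUtility(\optoff)/p$. Both cases match the claimed upper bound $\SExpUtility(\optoff)/p-\indicator{\text{$\scheme$ is persuasive}}$.

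The main subtlety I anticipate is handling the degenerate case $r=0$ (i.e., $p=0$ and $q_0=0$), in which $\CheckPersu$ may exhaust the remaining rounds rather than terminate geometrically; here the right-hand side of the claimed bound is infinite, so the inequality holds trivially, and the nontrivial content lies entirely in the $r>0$ regime where the Wald-based identity applies cleanly. A minor bookkeeping point is that the i.i.d.\ structure across rounds hinges on the observation that the user's response is a deterministic function of her (fixed) posterior given $\signal$, so randomness across rounds is only in the state and signal draws.
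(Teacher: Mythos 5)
Your proof is correct and follows essentially the same route as the paper's: the paper likewise bounds the expected number of rounds by $1/\!\sum_{\realizedstatei}\prior(\realizedstatei)\scheme(\realizedstatei)$ (since the procedure returns whenever signal $1$ is realized) and multiplies by the per-round regret $\SExpUtility(\optoff)-\SExpUtility(\scheme)$, then uses $\SExpUtility(\scheme)\geq\bigl(\sum_{\realizedstatei}\prior(\realizedstatei)\scheme(\realizedstatei)\bigr)\cdot\indicator{\text{$\scheme$ is persuasive}}$. Your version is just a more explicit rendering of the same Wald-type argument, with the exact termination probability $r=p+(1-p)q_0$ and a case split replacing the paper's one-line final inequality.
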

The intuition behind  \Cref{lem:persuasive check regret}
is as follows. 
Given a direct signaling scheme $\scheme$,
as long as its probability (i.e., the value
$\sum_{\realizedstatei}\prior(\realizedstatei)\scheme(\realizedstatei)$) 
for recommending action $1$
is a constant approximation to the optimal payoff 
$\SExpUtility(\optoff)$, then the expected regret of 
Procedure~\ref{alg:persuasive check} for 
checking its incentive compatibility can be upper bounded by this constant.
However, if this probability is small 
compared to 
$\SExpUtility(\optoff)$, then the expected incurred regret 
can be very large,
regardless of 
the incentive compatibility of $\scheme$
or the value of $\SExpUtility(\optoff)$.

\section{Conservative Recommendation Policy for Known Ordinal Preference}
\label{sec:knwon order}

\label{sec:loglog T}

In this section, we provide our first result
--
\texttt{Con}servative \texttt{R}ecommendation \texttt{P}olicy (\CRP),
which is an online policy that can achieve low regret
when the order of the user's preference differences $\{\RUtilityDiff(i)\}$ 
is known to the platform.
We upperbound its regret by $O(\log \log \totaltime)$.
Combining with the lower bound $\Omega(\log\log \totaltime)$ 
of regret for any online policy
developed in \Cref{sec:loglog T lower bound}, 
our policy \CRP\ is regret-optimal. The results of this section will be served as a building block for the algorithm design when the order of $\{\RUtilityDiff(i)\}$ is not known in advance.

Before diving into our results, 
let us highlight one of the main challenges in the 
design of a good online policy.
In our problem, 
the \Sender's feedback is \emph{limited} and \emph{probabilistic}.
Specifically, 
when signaling scheme $\schemei$ is used in round $t$
and signal $\signali \sim \schemei(\statei)$ is realized,
the platform only observes the \Receiver's action under signal $\signali$
and learns her corresponding payoff, but nothing about her payoff under other signals.
Meanwhile, this feedback is also probabilistic,
since the realized signal $\signali$
depends on the realized state $\statei$. 
Because of
these two features of the feedback, some natural tasks toward learning the \Receiver's utility may not be completed easily. 
Here are two illustrative examples.
    
\xhdr{Identifying the signs of $\{\RUtilityDiff(\realizedstatei)\}$}
Recall that the optimal signaling scheme in hindsight $\optoff$ 
follows from a threshold signaling scheme
-- 
it recommends action $1$ deterministically
for all states above 
a threshold state $\realizedstatei\primed$, 
recommends action $1$ randomly at threshold state, and 
recommends action $0$ deterministically for all states 
below $\realizedstatei\primed$.
Following the same logic, a natural attempt 
to design a good online policy
is to try to identify the threshold state and the states 
that are above the threshold state. 
However, it is unclear how to identify the threshold state.
In fact,
it is even challenging to identify the sign of 
$\RUtilityDiff(\realizedstatei)$ for a state $\realizedstatei$. 
To see this,
ideally, 
identifying the sign of 
$\RUtilityDiff(\realizedstatei)$  needs to solicit 
the \Receiver's action 
when the \Receiver's posterior belief is concentrated on 
state $\realizedstatei$
when a particular signal is realized. 
A signaling scheme $\scheme$ with $\scheme(\realizedstatej) 
= \indicator{\realizedstatej = \realizedstatei}, 
\forall \realizedstatej\in[m]$
can shape the \Receiver's posterior belief to be concentrated on state $\realizedstatei$
when a signal $1$ is realized. 
However, since the feedback is limited,
such a signaling scheme cannot 
collect useful information 
whenever other signal is realized.
Consequently, it bears a large regret if it happens to be the case when  $\prior(\realizedstatei)$ is small. 

\xhdr{Determining if $\SExpUtility(\optoff) 
\geq C$}
Consider a problem instance with $\totalstate = 2$ states.
Suppose the platform knows that $\RUtilityDiff(1) > 0$,
and $\RUtilityDiff(2) < 0$. This implies that 
the threshold state $\realizedstatei\primed = 2$,
and state 1 is above threshold state 2.
Note that a good online policy should be able to 
approximately identify the value of $\SExpUtility(\optoff)$.
Now, suppose the platform only wants to determine whether
$\SExpUtility(\optoff) 
\geq C$.
If the \Sender\ can determine whether following this   
natural signaling scheme
$\scheme(1) = 1$ and $\scheme(2) = 
\sfrac{(C - \prior(1))}{\prior(2)}$, 
is \IC\ or not,
then the \Sender\ can determine whether
$\SExpUtility(\optoff) 
\geq C$.\footnote{Under this signaling scheme $\scheme$, if 
$\scheme$ is \IC, 
then we have 
$\SExpUtility(\optoff) \ge \SExpUtility(\scheme) \ge C$, 
otherwise $\SExpUtility(\optoff) <\SExpUtility(\scheme) < C$.}
However, since the feedback is probabilistic, 
it takes $\sfrac{1}{C}$ rounds (in expectation) 
to learn the incentive compatibility 
of $\scheme$.
Thus, even if $\SExpUtility(\optoff)$ is small 
(i.e., $\SExpUtility(\optoff) = o(1)$),
by \Cref{lem:persuasive check regret},
the aforementioned attempt bears a 
superconstant regret 
as long as $C = o(\SExpUtility(\optoff))$.

\subsection{Towards \texorpdfstring{$O(\log\log \totaltime)$}\ \ Regret}
\label{sec:loglog T alg}

Despite the aforementioned challenges, 
in this subsection,
we present \texttt{Con}servative \texttt{R}ecommendation \texttt{P}olicy (\CRP) that can have $O(\log\log\totaltime)$
regret guarantee. 
Our algorithm is inspired by \cite{KL-03} yet requires necessary and non-trivial
modifications to account for the unique feedback structure in our problem. In \Cref{apx:dynamic pricing}, we further discuss the connection between our model with the dynamic pricing problem, and how our new algorithmic ingredients can be used to solve a new variant of the dynamic pricing problem.

\newcommand{\permutationSet}{\mathcal{P}}
\newcommand{\permutation}{r}
\newcommand{\SExpUtilityUnderbar}{\underbar{\SExpUtility}}
\newcommand{\permus}{\permutationSet}
\newcommand{\permu}{\permutation}
\newcommand{\permuTrue}{\permu^*}
\newcommand{\Flag}{\texttt{Flag}}
\newcommand{\perc}{\varepsilon}
\newcommand{\stepperc}{\delta}

\newcommand{\underbarScheme}{\underline{\scheme}}

\xhdr{Overview of the algorithm}
In the \CRP,
the whole $\totaltime$ rounds
are divided into
the
\emph{exploring} phase
and
the \emph{exploiting} phase.
The exploring phase has two subphases.
The first subphase
(i.e., \emph{exploring phase I})
identifies
a lower bound and an upper bound
of $\SExpUtility(\optoff)$,
i.e.,
it identifies an \IC\ signaling 
scheme $\underbarScheme$ with 
$\SExpUtilityUnderbar:= \SExpUtility(\underbarScheme)$
such that 
$\SExpUtilityUnderbar
\leq \SExpUtility(\optoff) 
\leq 2 \SExpUtilityUnderbar$.
Note that once we narrow down the 
value of $\SExpUtility(\optoff)$
to be in the interval $[\SExpUtilityUnderbar,2\SExpUtilityUnderbar]$,
with the \IC\ signaling scheme 
$\underbarScheme$, we can ensure 
that expected regret 
to check the incentive compatibility 
of the signaling schemes in the later 
rounds is at most a constant, which
addresses the second challenge 
(i.e., 
determining if $\SExpUtility(\optoff) \geq C$)
we just mentioned before.
We will show that 
the expected cumulative regret in 
exploring phase I is $O(1)$.
The second subphase
(i.e., \emph{exploring phase II})
identifies a signaling scheme 
$\scheme\primed$
whose 
per-round expected regret is $\sfrac{1}{\totaltime}$, 
and we will show that its 
expected cumulative regret is 
at most $O(\log\log \totaltime)$.
The identified signaling scheme $\scheme\primed$
from the exploring phase II
is used in the remaining rounds which are considered as the exploiting phase 
that induces $O(1)$ expected cumulative regret.
See \Cref{alg:loglog T} 
for a formal description.

\xhdr{A subclass of direct 
signaling schemes $\{\scheme^{(u)}\}$}
Our online policy will repeatedly consider
a subclass of direct 
signaling schemes.
Recall program~\ref{eq:optimal in hindsight}
indicates that 
the optimal signaling scheme in hindsight $\optoff$
can be thought of as the optimal solution of a fractional
knapsack problem, 
where each state $\realizedstatei$
corresponds to an item with value $\prior(\realizedstatei)$
and cost $\costState(\realizedstatei)$.
This observation implies that there must 
exist a total order $\permuTrue$
over all states with respect to their true  bang-per-buck $\RUtilityDiff(\realizedstatei) = \sfrac{\prior(\realizedstatei)}
{\costState(\realizedstatei)}$.
Given an arbitrary number $u\in[0, 1]$,
we define $\scheme^{(u)}$
to be the direct signaling scheme as follows: 
there exists a threshold state $\realizedstatei\primed$
such that
(a) for every state
$\realizedstatei\not=\realizedstatei\primed$,
$\scheme^{(u)}(\realizedstatei)
= 
\indicator{\RUtilityDiff(\realizedstatei) >\RUtilityDiff(\realizedstatei\primed)}$ 
and (b) 
$\scheme^{(u)}(\realizedstatei\primed)
=
\frac{
u - \sum_{\realizedstatei\not=\realizedstatei\primed}
\prior(\realizedstatei)
\scheme^{(u)}(\realizedstatei)
}{\prior(\realizedstatei\primed)}$
(recall that since the platform knows the order of user's preference difference $\{\RUtilityDiff(\realizedstatei)\}$, this signaling scheme is well-defined).
As a sanity check, 
observe that 
the signaling scheme $\scheme^{(\SExpUtility(\optoff))}$
is exactly the optimal signaling scheme in hindsight
$\optoff$.
By construction, 
it is also guaranteed that
$
\sum_{\realizedstatei\in[\totalstate]}
\prior(\realizedstatei)\scheme^{(u)}(\realizedstatei)
= u$.
We note that by focusing on the signaling scheme $\scheme^{(u)}$, we bypass the challenge on identifying the value, order or even signs of $\{\RUtilityDiff(\realizedstatei)\}$.
Indeed, for general problem instances, our \CRP\ does not explicitly learn those quantities, nor can they be inferred from the outcome of \CRP.

\newcommand{\AlgoTwoName}{\texttt{Con}servative \texttt{R}ecommendation \texttt{P}olicy (\CRP)}
\begin{algorithm}
\caption{\AlgoTwoName}
\label{alg:loglog T}
\linespread{0.8}\selectfont
\SetAlgoLined\DontPrintSemicolon
\KwIn{number of rounds $\totaltime$, 
number of states $\totalstate$,
prior distribution $\prior$}
\tcc{exploring phase I
--
identify 
$\SExpUtilityUnderbar$
such that 
$\SExpUtilityUnderbar
\leq \SExpUtility(\optoff) 
\leq 2 \SExpUtilityUnderbar$}
Initialize $\SExpUtilityUnderbar \gets \frac{1}{2}$
\\
\While{
    $\CheckPersu\left(\scheme^{(\SExpUtilityUnderbar)}\right) =
    \False$}{
        $\SExpUtilityUnderbar\gets \frac{\SExpUtilityUnderbar}{2}$
    }
\tcc{exploring phase II --
identify a signaling scheme $\scheme\primed$
such that $\SExpUtility(\scheme\primed) \geq \SExpUtility(\optoff) - \frac{1}{\totaltime}$}
Initialize
$R \gets 2\SExpUtilityUnderbar$,
 $L \gets \SExpUtilityUnderbar$,
$\stepperc \gets 1$
\\
\While{$R - L \geq \frac{1}{\totaltime}$
}{
    $\perc \gets \frac{\stepperc}{2}$,
    $S\gets \lfloor\frac{R - L}{\perc L}\rfloor$,
    $\ell \gets 1$. \\
    \While{
        $\CheckPersu\left(\scheme^{(L + \ell\perc L)}\right) =
        \True$
    }{
        $R\gets L + \ell\perc L$,
        $L \gets L + (\ell-1)\perc L$,
        $\stepperc \gets \perc^2$,
        $\ell \gets \ell +1$.
    }
}
Set $\scheme\primed \gets \scheme^{(L)}$.
\\
\tcc{exploiting phase}
Use signaling scheme $\scheme\primed$
for all remaining rounds.
\end{algorithm}
\vspace{-10pt}
\begin{remark}
In the \CRP, 
the first subphase (i.e., exploring phase I) is used to identify a lower bound and an upper bound of the optimal
payoff $\SExpUtility(\optoff)$ of the platform. 
This step is crucial for us to establish the $O(\log\log \totaltime)$ regret. In \Cref{subsec:simulation}, we present simulation studies showing that without this step, the algorithm may perform very badly. Moreover, this subphase is also used to identify an interior point
in designing our second main algorithm (\Cref{alg:log T}).
\end{remark}
We are now ready to describe the main result of this section.

\begin{theorem}
\label{thm:loglog T regret upper bound}
The expected regret 
of \CRP\
is at most 
$O(\log\log \totaltime)$.
\end{theorem}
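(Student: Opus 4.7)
The plan is to split the regret of \Cref{alg:loglog T} along its three phases (exploring phase~I, exploring phase~II, and the exploiting phase) and sum the contributions. Two invariants drive the analysis: (a) the true order $\permuTrue$ (the order induced by $\optoff$ on states) is never eliminated from $\permus$, because $\scheme^{(\permuTrue, u)}$ is a ``prefix'' of $\optoff$ and hence persuasive for every $u \leq \SExpUtility(\optoff)$; (b) during exploring phase~II, $L \leq \SExpUtility(\optoff) \leq R$ at every iteration, since any persuasive scheme $\scheme^{(\permu, u)}$ has mass $u \leq \SExpUtility(\optoff)$ by optimality of $\optoff$. Both invariants rely on \Cref{lem:optimum in hindsight characterization}.

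For exploring phase~I, invariant (a) implies the check passes as soon as $\SExpUtilityUnderbar \leq \SExpUtility(\optoff)$, so the loop halts after $\lceil \log_2(1/\SExpUtility(\optoff))\rceil+1$ halvings with $\SExpUtilityUnderbar \in [\SExpUtility(\optoff)/2, \SExpUtility(\optoff)]$. Each iteration invokes $\CheckPersu$ on at most $m!$ schemes of mass $\SExpUtilityUnderbar_k = 2^{-(k+1)}$; by \Cref{lem:persuasive check regret}, each call has expected regret at most $\SExpUtility(\optoff)/\SExpUtilityUnderbar_k = 2^{k+1}\SExpUtility(\optoff)$. Summing the resulting geometric series telescopes to $O(m!)$.

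The crux is exploring phase~II. Let $g_k = (R_k-L_k)/L_k$ and $S_k = \lfloor g_k/\varepsilon_k\rfloor$ at level $k$. By \Cref{lem:persuasive check regret}, a persuasive call at scheme $\scheme^{(\permu, u)}$ with $u \in [L_k, R_k]$ has expected regret at most $(\SExpUtility(\optoff)-u)/u \leq g_k$, while a non-persuasive call has expected regret at most $\SExpUtility(\optoff)/u \leq 2$ (using $L_k \geq \SExpUtilityUnderbar \geq \SExpUtility(\optoff)/2$ inherited from phase~I). I decompose the per-level regret as: (i) persuasive calls contribute at most $S_k \cdot m! \cdot g_k$, which is $O(m!)$ by the step-size choice that keeps $S_k g_k$ bounded; (ii) non-persuasive calls in non-break slots contribute $O(m!)$ \emph{globally} since each such False return permanently eliminates one permutation from $\permus$; (iii) each break slot contributes at most $O(m!)$. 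The step-size recursion $\delta \gets \sqrt{\varepsilon}$ is tailored so that the interval shrinks doubly exponentially---the same balance underlying the Kleinberg--Leighton doubling trick---so reaching $R-L<1/T$ needs only $O(\log\log T)$ levels, totalling $O(m!\log\log T)$ regret for this phase.

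In the exploiting phase, the fixed scheme $\scheme\primed = \scheme^{(\permu, L)}$ survived the last successful slot and is therefore persuasive, with $\SExpUtility(\scheme\primed) = L \geq R-1/T \geq \SExpUtility(\optoff)-1/T$; the per-round regret is at most $1/T$ and the phase total is $O(1)$. Summing the three phases yields $O((m!)\cdot\log\log T)$. The main obstacle is the accounting in phase~II: showing that the step recursion simultaneously drives interval shrinkage to $1/T$ in $O(\log\log T)$ levels while keeping the ``query budget'' $S_k g_k$ bounded on every level, so that the $m!$ factor multiplies the number of levels rather than the number of queries. A secondary subtlety is that each $\CheckPersu$ call itself uses a random number of rounds; this randomness is absorbed once and for all in \Cref{lem:persuasive check regret}, which is the linchpin that lets the Kleinberg--Leighton-style binary-search analysis apply to our probabilistic-feedback setting.
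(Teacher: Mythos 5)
Your proposal is correct and follows essentially the same route as the paper's proof: the same three-phase decomposition, the same geometric-series bound for exploring phase I via \Cref{lem:persuasive check regret}, the same per-level accounting in exploring phase II driven by keeping the product of the number of probes and the relative gap $(R-L)/L$ at $O(1)$, and the same $O(1)$ bound for the exploiting phase. The one step you defer---that the recursion $\delta\gets\sqrt{\varepsilon}$ keeps $S_k g_k$ bounded while shrinking the interval---is precisely the short induction $R-L\le\sqrt{2\varepsilon}\,L$ that the paper carries out, and your bookkeeping of the non-persuasive calls (charging eliminations globally against $|\permus|$ rather than per permutation per level) differs from the paper's only cosmetically.
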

\begin{proof}
We analyze  
the expected 
regret in 
exploring phase I,
exploring phase II,
and exploiting phase separately.
We first assume that \CRP\
finishes exploring phase I and II before $\totaltime$
rounds are exhausted.
A similar argument follows for the other case where 
exploring phase I or exploring phase II is completed 
due to the exhaustion of rounds.

\xhdr{Exploring phase I}
Let $K = -\lceil\log(\SExpUtility(\optoff))\rceil$.
By definition, 
$\CheckPersu(\scheme^{(2^{-k})})=\False$
for $k \in [K - 1]$,
and 
$\CheckPersu(\scheme^{(2^{-K})}) = \True$.
Thus, at the end of exploring phase I,
$\SExpUtilityUnderbar$ 
 is $2^{-K}$,
 and 
there are $K$ iterations
in the while loop.
For each iteration $k\in[K]$, 
$\CheckPersu(\scheme^{(2^{-k})})$ 
is called once.
By \Cref{lem:persuasive check regret}, the total expected regret is 
\begin{align*}
    \sum\nolimits_{k \in [K]}
    \frac{\SExpUtility(\optoff)}{
    \sum_{\realizedstatei\in[\totalstate]}
\prior(\realizedstatei)\scheme^{(2^{-k})}(\realizedstatei)
    }
    \overset{(a)}{\leq} 
    \sum\nolimits_{k \in [K]}
    \frac{2^{-(K-1)}}{
    2^{-k}
    }
    =
    \sum\nolimits_{k\in[K]}
    2^{-(K - k - 1)}
    = O(1)
\end{align*}
where the denominator 
in the right-hand side 
of inequality~(a) is due to the 
construction of $\scheme^{(2^{-k})}$.

\xhdr{Exploring phase II}
By construction, 
there are $O(\log\log \totaltime)$
iterations in the while loop.
Thus, it is sufficient to show 
the expected regret in each iteration 
is $O(1)$.

In each iteration $k$,
let $\ell\primed\in[S]$
be the smallest index 
that the signaling scheme
$\scheme^{(L + \ell\primed\perc L)}$
is not \IC.
The expected regret in iteration $k$ 
is at most 
\begin{align*}
    &~\sum\nolimits_{\ell=1}^{\ell\primed - 1}
    \left(
    \frac{\SExpUtility(\optoff)}{
    \sum\nolimits_{\realizedstatei\in[\totalstate]}
\prior(\realizedstatei)\scheme^{
(
L +\ell\perc L)}(\realizedstatei)
    }
    -1
    \right)
    +
    \frac{\SExpUtility(\optoff)}{
    \sum\nolimits_{\realizedstatei\in[\totalstate]}
\prior(\realizedstatei)
\scheme^{
(
L +\ell\primed\perc L)}
(\realizedstatei)
    }
    \\
    \overset{(a)}{=} &
    \sum\nolimits_{\ell=1}^{\ell\primed - 1}
    \left(
    \frac{\SExpUtility(\optoff)}{
L +\ell\perc L
    }
    -1
    \right)
    +
    \frac{\SExpUtility(\optoff)}{
L +\ell\primed\perc L
    }
\overset{(b)}{\leq} 
\sum\nolimits_{\ell=1}^{\ell\primed - 1}
    \left(
    \frac{R}{
L
    }
    -1
    \right)
    +
    \frac{R}{
L
    }
\overset{(c)}{\leq}
(S - 1) \frac{R - L}{L} + 2
\overset{(d)}{\leq}
~ \frac{(R - L)^2}{\perc L^2} + 2
\end{align*}
where equality~(a) holds due to 
the construction of 
$\scheme^{
(
L +\ell\perc L)}$
and $\scheme^{
(
L +\ell\primed\perc L)}$;
inequality~(b) holds 
since $\SExpUtility(\optoff) \leq R$;
inequality~(c) holds since 
$\ell\primed \leq S$ and $R \leq 2 L$;
and
inequality~(d) holds since $S = 
\lfloor\frac{R - L}{\perc L}\rfloor$.

We finish this part 
by showing
$R - L \leq \sqrt{2\perc} L$
by induction.
Let $L^{(k)}, R^{(k)}$, $\stepperc^{(k)}$
and $\perc^{(k)}$
be the value of $L, R, \stepperc,\perc$
in each iteration $k$.
The claim is satisfied for
iteration $k = 1$,
since $R^{(1)} - L^{(1)} = 2\SExpUtilityUnderbar - \SExpUtilityUnderbar 
= L^{(1)}$
and $\perc^{(1)} = \sfrac{1}{2}$.
Suppose the claim holds for iteration $k - 1$.
Now, for iteration $k$, we know that 
$R^{(k)} - L^{(k)} = \perc^{(k-1)} L^{(k - 1)}
\leq \perc^{(k-1)} L^{(k)}
=
\sqrt{\stepperc^{(k)}} L^{(k)}
=
\sqrt{2\perc^{(k)}} L^{(k)}$,
which finishes the induction.

\xhdr{Exploiting phase}
In this phase, we know that 
$\scheme\primed$ is \IC\ and 
$\SExpUtility(\scheme\primed) \geq 
\SExpUtility(\optoff) - \sfrac{1}{\totaltime}$,
which concludes the proof.
\end{proof}

\section{Implementing \CRP\ for Unknown Ordinal Preference}
\label{sec:unknwon order}

In this section, we discuss how
to adapt the \CRP\
when the user's ordinal preference, i.e., the 
order of $\{\RUtilityDiff(\realizedstatei)\}$,
is not known to the platform.
In \Cref{sec:affine preference},
we first present the application of \CRP\ for a general class of user preferences.
In \Cref{sec:unknown preference CRP},
we discuss a more challenging setting 
where the platform has completely no knowledge about the user's preference.
In both scenarios, we show that regret $O(\log\log T)$ is achievable by implementing \CRP. 

\subsection{Application: Affine State-dependent User Preference}
\label{sec:affine preference}
In this subsection, 
we show that for a general class of 
users' preferences which are affine 
with respect to the state,
\CRP\
is able to achieve $O(\log\log \totaltime)$ regret, 
and this regret bound does not depend on the number of possible states. 

\xhdr{Affine state-dependent user preference}
With a slight abuse of the model, in this subsection, we consider a continuous state space $\stateSpace \subseteq \reals_+$ instead of a finite state space $[m]$. We say a user has an \emph{affine state-dependent preference} if 
her expected utility  $\expect[\state\sim\posteriorState]{\RUtility(\state,\action)}$,
given posterior $\posteriorState \in \Delta(\stateSpace)$ and action $\action$,
is equal to $\RUtility(\expect[\state\sim\posteriorState]{\state},\action)$.
The affine state-dependent preference
is a fundamental setting 
in information design literature \citep{can-22, ABSY-23,KA-18,GK-16,CS-21,KMZL-17}. 

For a user with an affine state-dependent preference, her optimal action $\action^* = \argmax_\action \expect[\state\sim\posteriorState]{\RUtility(\state,\action)}$ (e.g., whether to watch the video or not), depends only on the expected state $\expect[\state\sim\posteriorState]{\state}$ (e.g., expected quality of the video) of the user's posterior belief $\posteriorState$ over the underlying states. 
It is worth highlighting that an affine state-dependent preference could be decomposed as 
$\RUtility(\state, \action) = \RUtility_1(\action) \state + \RUtility_2(\action)$
where functions $\RUtility_1, \RUtility_2: \cA \rightarrow \R$ are unknown to the platform. 

\xhdr{\textbf{\CRP}\ for affine state-dependent preference}
To see how \CRP\ could be adapted to 
solve this setting, notice that under affine state-dependent preference, the user's preference difference is essentially
$\RUtilityDiff(\realizedstatei) = (\RUtility_1(1) -  \RUtility_1(0)) \realizedstatei  + \RUtility_2(1) - \RUtility_2(0)$. This shows that 
there exists at most two possible orders of the user's preference difference $\{\RUtilityDiff(\realizedstatei)\}$, depending on the sign of $\RUtility_1(1) -  \RUtility_1(0)$. 
Thus, one can just run \CRP\ over 
these two possible orders of $\{\RUtilityDiff(\realizedstatei)\}$ in 
a round-robin manner and use the payoff of any identified 
\IC\ signaling scheme to prune out the incorrect order.

The main result in this subsection 
is summarized as follows:
\begin{proposition}
\label{prop:ordinal preference}
For the affine state-dependent preference,
the expected regret of \CRP\
is $O(\log\log\totaltime)$, this regret also holds even when the state space $\stateSpace$ 
is continuous. 
\end{proposition}
Lastly, we observe that the above results
can also be extended to scenarios wherein the user's preferences depend on a potentially non-linear transformation of the state, denoted as $f(\state)$, in an arbitrary manner. 
Fundamental to our conclusions is the observation that the order of the user's preference difference $\{\RUtilityDiff(\realizedstatei)\}$ is not changing even with such non-linear transformation. 

\subsection{Application: 
User with Unknown Ordinal Preference}
\label{sec:unknown preference CRP}
\label{subsec:completely unknown order}
In this subsection, we show that when the order 
of user's preference differences
$\{\RUtilityDiff(\realizedstatei)\}$ is unknown to the platform, a regret $O(\totalstate2^{\totalstate-1}\cdot \log\log \totaltime)$ is achievable 
by enumerating all possible orders. 

\begin{restatable}{proposition}{oglogTregretupperboundunknownorder}
\label{thm:loglog T regret upper bound unknown order}
When the platform has no knowledge of the user's preference,
the expected regret of a modified version (see \Cref{alg:loglog T w unknown order} in \Cref{apx-missing-algo}) of \CRP\ is $O(\totalstate2^{\totalstate-1}\cdot \log\log \totaltime)$.
\end{restatable}
Below we briefly discuss how the \CRP\
could be adapted to obtain the above regret bound.
The formal proof of \Cref{thm:loglog T regret upper bound unknown order} is provided in \Cref{apx-missing-algo}. 
Similar to \CRP, the modified algorithm (\Cref{alg:loglog T w unknown order})
also relies on a subclass of direct signaling schemes $\scheme^{(\permu,u)}$, but with an additional parameter 
$\permu$ to represent a possible (total) order of the user's 
preference differences $\{\RUtilityDiff(\realizedstatei)\}$. 
In other words, for 
every possible order $\permu$ of the user's preference differences 
$\{\RUtilityDiff(\realizedstatei)\}$, one can identify 
a direct signaling scheme $\scheme^{(\permu,u)}$ such that 
the expected payoff to the platform will exactly equal $u$
if this signaling scheme $\scheme^{(\permu,u)}$ is {\em \IC}.
In the modified algorithm, whenever the algorithm identifies 
an \IC\ signaling scheme $\scheme^{(\permu,u)}$ (see Line $4$ and $15$ in \Cref{alg:loglog T w unknown order}), 
it then naturally gives
a lower bound 
of the platform's expected payoff of the optimal
signaling scheme, namely, $\SExpUtility(\optoff) \ge u$. 
Then one can use the payoff (i.e., $u$) of
such \IC\ signaling scheme to further prune out 
other signaling schemes that are constructed with different order of 
$\{\RUtilityDiff(\realizedstatei)\}$ but 
are either non-\IC\ or have payoff less than $u$.
We would like to note that without pruning, 
the regret of the algorithm might 
have linear dependence on the time horizon $\totaltime$.

In more detail,
recall that when \Receiver's utility function is unknown,
the bang-per-buck as well as the true total order $\permuTrue$ 
are unknown to the platform.
In the exploring phase of \Cref{alg:loglog T w unknown order}, our online policy maintains 
a subset $\permus$ of total orders over $[m]$
that contains the optimal order $\permuTrue$.
In particular, the algorithm initializes the 
set $\permus$ such that it contains all possible orders, 
and each order $\permu$ in the set
$\permus$ specifies a state $\realizedstatei\primed\in[\totalstate]$, 
a subset of states
that are below the state $\realizedstatei\primed$, and a subset of states that are above the state $\realizedstatei\primed$.
Given an arbitrary order $\permu$,
we define $\scheme^{(\permu,u)}$
to be the direct signaling scheme as follows: 
let state $\realizedstatei\primed$ be the state associated 
with this order $\permu$, then
(a) for every state
$\realizedstatei\not=\realizedstatei\primed$,
$\scheme^{(\permu,u)}(\realizedstatei)
= 
\indicator{\permu(\realizedstatei) >\permu(\realizedstatei\primed)}$,\footnote{Given an order $\permu$, we denote $\permu(\realizedstatei)$ by
the rank of state $\realizedstatei$.}
and (b) 
$\scheme^{(\permu,u)}(\realizedstatei\primed)
=
\frac{
u - \sum_{\realizedstatei\not=\realizedstatei\primed}
\prior(\realizedstatei)
\scheme^{(\permu,u)}(\realizedstatei)
}{\prior(\realizedstatei\primed)}$.
As a sanity check, 
observe that 
the signaling scheme $\scheme^{(\permuTrue,
\SExpUtility(\optoff))}$
is exactly the optimal signaling scheme in hindsight
$\optoff$, and it is also guaranteed that
$
\sum_{\realizedstatei\in[\totalstate]}
\prior(\realizedstatei)\scheme^{(\permu,u)}(\realizedstatei)
=
u$ by construction.

Note that even though the number of all possible total orders
could be as large as $O(m!)$, we can have a more 
succinct representation on user's ordinal preference. 
To see this, note that each possible order $\permu$ 
can first specify a state 
$\realizedstatei\primed$, 
a subset of states that are below the state $\realizedstatei\primed$, 
and remaining states that are above the state $\realizedstatei\primed$.
Then, in total, there are at most $O(m2^{m-2})$ such possible orders.

\begin{remark}
In both exploring phase I and II, 
\Cref{alg:loglog T w unknown order} checks 
whether there exists $\permu\in\permus$
such that $\CheckPersu(\scheme^{(\permu,u)}) =
\True$ for some $u$.
Our regret bound has an $(2^\totalstate)$ dependence due to
brute-force searching over $\permu$.
We would like to note that 
(i)
in many practical applications,
the number of states $\totalstate$
is small or even constant,
and thus our main focus 
in this section is the
optimal dependence on the number of rounds $\totaltime$,
and 
(ii)
when $N~(\leq \totalstate2^{\totalstate-1})$ identical problem instances 
are allowed to run in parallel, 
the regret dependence on $\totalstate$ becomes 
$\sfrac{\totalstate2^{\totalstate-1}}{N}$. 
\end{remark}

\section{Alternative LP-based Algorithm}
\label{sec_logn}

\newcommand{\lpsolver}{\MemberLP}
\label{sec:log T}

In this section, 
we provide our second main result 
--
a linear program-based recommendation policy (\LPRP) with
$\polymregret$ regret when the user's preferences
(including both the cardinal preference and ordinal preference)
are unknown to the platform.
The main result  is as follows:
\begin{theorem}
\label{thm_reg_upper_bound}
\label{thm:log T}
The expected regret of \LPRP\
is at most 
$
    O\left(
\totalstate^6
     \log^{O(1)}
    \left(\totalstate\totaltime
    \right)
    \right)$.
\end{theorem}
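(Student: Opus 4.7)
\textbf{Proof proposal for \Cref{thm:log T}.}

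The plan is to cast the hindsight-optimal scheme as the optimum of program~\ref{eq:optimum in hindsight}, a linear program in the $m$ variables $\{\scheme(\realizedstatei,1)\}_{\realizedstatei\in[\totalstate]}$ with a known objective $\sum_\realizedstatei \prior(\realizedstatei)\scheme(\realizedstatei,1)$, known box constraints $\scheme(\realizedstatei,1)\in[0,1]$, and a single \emph{unknown} persuasiveness constraint $\sum_\realizedstatei \costStatei \scheme(\realizedstatei,1)\geq 0$. Procedure~\ref{alg:persuasive check} acts as a membership oracle for this constraint, so the task reduces to maximizing a known linear function over a convex body given only membership oracle access. The first step of \Cref{alg:log T} will be a ``calibration phase'' analogous to exploring phase~I of \Cref{alg:loglog T}: identify a persuasive scheme $\underbarScheme$ with $\SExpUtilityUnderbar := \SExpUtility(\underbarScheme)$ satisfying $\SExpUtilityUnderbar\leq \SExpUtility(\optoff)\leq 2\SExpUtilityUnderbar$, using a scheme family whose acceptance at level $u$ does not depend on knowing the order of $\{\costStatei/\prior(\realizedstatei)\}$. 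By \Cref{lem:persuasive check regret}, any subsequent oracle query on a scheme $\scheme$ with $\sum_\realizedstatei \prior(\realizedstatei)\scheme(\realizedstatei)=\Theta(\SExpUtilityUnderbar)$ then incurs only $O(1)$ expected regret.

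The second step is to run a linear optimization method over a convex body with membership oracle access, in the style of Gr\"otschel--Lov\'asz--Schrijver. Such a method achieves $\epsilon$-additive optimality using $\poly(\totalstate,\log(1/\epsilon))$ queries; setting $\epsilon=1/\totaltime$ produces a final scheme $\scheme\primed$ with $\SExpUtility(\scheme\primed)\geq\SExpUtility(\optoff)-1/\totaltime$. Combining: the calibration phase contributes $O(1)$ regret (as in exploring phase~I of \Cref{alg:loglog T}), the optimization phase makes $\poly(\totalstate,\log \totaltime)$ queries each with $O(1)$ regret, and the exploiting phase that deploys $\scheme\primed$ for the remaining rounds contributes $O(1)$ regret. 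This yields the claimed $\polymregret$ bound.

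The main obstacle will be constructing and maintaining a valid interior point for the membership-oracle LP solver. Standard membership-only solvers require a strictly feasible starting point together with a lower bound on its distance to the boundary and an outer bounding ball, so that an initial ellipsoid can be rounded to near-isotropic position; without direct knowledge of $\costStatei$, such a point must be synthesized. My plan is to combine the persuasive scheme $\underbarScheme$ from the calibration phase with the trivial persuasive scheme $\scheme\equiv 0$ (feasible by \Cref{asp:receiver utility 2}) and, if needed, with a fractional shift along a direction known to strictly satisfy the (IC) constraint, to obtain a scheme whose slack in $\sum_\realizedstatei\costStatei\scheme(\realizedstatei)$ is bounded away from zero by an inverse polynomial in $\totalstate$. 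A second obstacle is that the ellipsoid-like method may want to query schemes with vanishing $\sum_\realizedstatei\prior(\realizedstatei)\scheme(\realizedstatei)$, on which \Cref{lem:persuasive check regret} gives no useful bound; I plan to restrict the feasible body to the slab $\{\scheme:\sum_\realizedstatei\prior(\realizedstatei)\scheme(\realizedstatei)\geq c\,\SExpUtilityUnderbar\}$ for an appropriate constant $c$, noting that the optimum lies in this slab, so that every oracle query is cheap. These two modifications, together with a standard binary search over the objective value to reduce optimization to feasibility, drive the $\totalstate^6\,\log^{O(1)}(\totalstate\totaltime)$ bound, where the $\totalstate^6$ factor arises from the $O(\totalstate^2)$ iterations of ellipsoid multiplied by the $\tilde O(\totalstate^2)$ queries per iteration needed for rounding under a membership oracle, plus the binary search overhead.
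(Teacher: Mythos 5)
Your high-level architecture matches the paper's: calibrate a lower bound on $\SExpUtility(\optoff)$ so that membership queries are cheap, then run a membership-oracle LP solver over the persuasive polytope restricted to a slab where the recommendation probability is bounded below, and you correctly name the two obstacles (cheap queries and the interior point). However, your resolution of the interior-point obstacle has a genuine gap. The feasible set $\{\scheme\in[0,1]^\totalstate : \sum_{\realizedstatei}\costStatei\,\scheme(\realizedstatei)\geq 0\}$ can be degenerate or have an exponentially small inscribed ball: if some state $\realizedstatei$ has $\costStatei$ extremely negative relative to $\max_{\realizedstatej}\costState(\realizedstatej)$, every persuasive scheme must put correspondingly tiny mass on $\realizedstatei$, so the inner radius $r$ can be arbitrarily small (the $\costStatei$ are arbitrary reals with no bit-complexity bound), and $\log(1/r)$ --- which enters the query count of any membership-oracle solver --- is unbounded. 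Averaging $\underbarScheme$ with the all-zero scheme, or shifting along a ``direction known to strictly satisfy (IC),'' cannot manufacture volume in such a direction; moreover no such direction is known to the platform, and the all-zero scheme has exactly zero IC slack, so the convex combination need not be strictly feasible at all. The paper resolves this with a dedicated exploring phase that tests, for each state, whether mass $\Theta(1/(\totalstate\totaltime))$ can be added to it while preserving persuasiveness; states failing the test are provably negligible for the optimum (\Cref{lem:meaningful set characterization} and \Cref{lem:meaningful set approximate optimal}) and are dropped, after which an explicit interior point with radius $1/(16\totalstate^2\totaltime)$ exists for the restricted program~\ref{eq:optimum in hindsight within meaningfulset}. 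Your proposal has no analogue of this state-exclusion step, and without it the query bound of the solver cannot be controlled.

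A second, smaller issue: your calibration claim of a factor-$2$ approximation ($\SExpUtilityUnderbar\leq\SExpUtility(\optoff)\leq 2\SExpUtilityUnderbar$) achievable ``without knowing the order'' is unsubstantiated. The only scheme family in the paper achieving a factor-$2$ approximation is indexed by all $\totalstate!$ orders (\Cref{alg:loglog T}), which is precisely what the polynomial-in-$\totalstate$ algorithm must avoid; the paper instead calibrates with two-state-support schemes and settles for a $\totalstate^2$-approximation (\Cref{lem:existence of anchor pair main}), which is why each oracle call costs $O(\totalstate^2)$ regret rather than $O(1)$, and why the final bound is $\totalstate^6$ (at most $\totalstate^2$ interior-point candidates, times $O(\totalstate^2)$ queries per solver execution, times $O(\totalstate^2)$ regret per query) rather than the $\totalstate^4$-query accounting in your last paragraph.
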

The proposed \LPRP\ uses a  subroutine $\lpsolver$
-- an algorithm 
\citep[e.g.,][]{LSV-18}
to solve
linear program 
with membership oracle access.\footnote{\LPRP\
uses $\lpsolver$ as a blackbox.
Namely, it can be replaced 
by other algorithms for linear program 
with membership oracle access.}
We first formally introduce 
the linear program optimization with membership 
oracle access,
and 
discuss its connection to our online
Bayesian recommendation problem.
Then
we provide the formal description
and the explanation of \LPRP\,
where
we also 
present the proof of \Cref{thm:log T}.

\xhdr{Linear program optimization
with membership oracle access}
\newcommand{\convexset}{H}
\newcommand{\zeroed}{^{(0)}}
\newcommand{\TwoNormBall}{\mathbf{B}_2}
Optimizing a linear function 
$f(\cdot)$
within an unknown convex set $\convexset$ 
has been studied 
extensively in the literature.
There are two standard oracle assumptions:
\emph{membership oracle}
and 
\emph{separation oracle}.
A membership oracle returns 
whether a queried point $y$ is contained in 
convex set $\convexset$.
In contrast, a 
separation oracle not only returns
whether
a queried point $y$ is contained in 
convex set $\convexset$,
but also 
returns 
a hyperplane that separates $y$ from $\convexset$
if $y\not\in\convexset$.

Recall that in our problem, 
the optimal signaling scheme in hindsight
$\optoff$
is the optimal solution of the linear program~\ref{eq:optimal in hindsight}.
From the platform's perspective,
the only unknown component in this program  
is $\{\costStatei\}$ in 
the IC constraint.
Nonetheless, using Procedure~\ref{alg:persuasive check},
the platform can determine the incentive compatibility 
(i.e., whether the IC constraint is satisfied)
of any direct signaling scheme.
In other words, Procedure~\ref{alg:persuasive check} 
works like a membership oracle for the convex set 
which contains all \IC\ signaling schemes.
Thus, finding the optimal signaling scheme $\optoff$
can be formulated as optimizing a linear program with 
membership oracle access.
In particular, we leverage 
the algorithm introduced in
\citet{LSV-18} with the following guarantee.
\begin{theorem}[\citealp{LSV-18}]
\label{thm:membership oracle query bound}
For any linear function $f$,
and convex set 
$\convexset \subseteq \reals^\totalstate$,
given 
an interior point $x\zeroed$,
a lower bound $r$, an upper bound $R$
such that $\TwoNormBall(x\zeroed, r)\subseteq \convexset \subseteq
\TwoNormBall(x\zeroed,R)$,\footnote{$\TwoNormBall(x\zeroed, r)$ is 
the ball of radius $r$ centered at $x\zeroed$.}
and given a membership oracle,
there exists an algorithm $\lpsolver$
that 
finds an $\epsilon$-approximate optimal solution 
for $f$ in $\convexset$
with
probability $1-\delta$,
using $O(\totalstate^2\log^{O(1)}
\left(\sfrac{\totalstate R}{\epsilon\delta r}\right))$
queries to the oracle.
\end{theorem}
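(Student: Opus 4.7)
The plan is to cast finding $\optoff$ as the linear optimization problem $\mathcal{P}^{\texttt{opt}}$ over the polytope $\feasibleSet$ of persuasive direct signaling schemes, invoke $\lpsolver$ to obtain a $(1/\totaltime)$-optimal scheme $\scheme\primed$, and then commit to $\scheme\primed$ in the remaining rounds. By \Cref{thm:membership oracle query bound}, this requires $O(\totalstate^2 \log^{O(1)}(\totalstate R/(\epsilon\delta r)))$ queries to a membership oracle for $\feasibleSet$, which I implement via $\CheckPersu$ (valid by \Cref{lem:persuasive check correctness}). The three challenges are: (i) producing an interior point $\schemestartingpoint$ with $R/r = \poly(\totalstate)$; (ii) ensuring every scheme sent to $\CheckPersu$ has a recommendation probability $\Omega(\SExpUtility(\optoff)/\poly(\totalstate))$, so that by \Cref{lem:persuasive check regret} each query costs $\poly(\totalstate)$ regret; and (iii) choosing $\epsilon = \delta = 1/\totaltime$ so that the exploitation phase contributes $O(1)$ regret.

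First, I would run a short warm-up to identify a persuasive anchor scheme $\underbarScheme$ with $\SExpUtility(\underbarScheme) = \Omega(\SExpUtility(\optoff)/\poly(\totalstate))$. Sweeping the $\totalstate$ singleton schemes $\scheme_{\realizedstatei}(\realizedstatej) = \indicator{\realizedstatej = \realizedstatei}$ through $\CheckPersu$ discovers, by \Cref{asp:receiver utility 1}, at least one persuasive singleton; combining this with a $\poly(\totalstate)$-step geometric sweep in the spirit of exploring phase~I of \Cref{alg:loglog T} identifies the anchor within $\poly(\totalstate)$ cumulative regret.

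Second, using $\underbarScheme$ I would construct the interior point $\schemestartingpoint$ by slightly contracting $\underbarScheme$ away from every face of $\feasibleSet$, and invoke $\lpsolver$ on a trimmed body $\feasibleSet_\beta \subseteq \feasibleSet$ whose faces are all backed off by a margin $\beta = 1/\poly(\totalstate)$. Two verifications are needed: (a) $\schemestartingpoint \in \feasibleSet_\beta$ with inscribed radius $r = \Omega(\beta)$ and circumscribed radius $R = O(\sqrt{\totalstate})$; and (b) the optimum of $\mathcal{P}^{\texttt{opt}}$ restricted to $\feasibleSet_\beta$ differs from the true optimum by at most $O(\totalstate\beta)$. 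To control per-query regret, every scheme $\scheme \in \feasibleSet_\beta$ forwarded to $\CheckPersu$ is first convex-combined with $\underbarScheme$ using a constant weight; convexity of the IC constraint and persuasiveness of $\underbarScheme$ imply that the blend's membership agrees with that of $\scheme$ (up to the $\beta$-margin), while its recommendation probability inherits the $\Omega(\SExpUtility(\optoff)/\poly(\totalstate))$ lower bound from $\underbarScheme$.

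Plugging into \Cref{thm:membership oracle query bound} with $\epsilon = \delta = 1/\totaltime$, the total number of membership queries is $O(\totalstate^2 \log^{O(1)}(\totalstate\totaltime))$, each contributing $\poly(\totalstate)$ regret; the exploitation phase adds only $O(1)$ because $\SExpUtility(\scheme\primed) \ge \SExpUtility(\optoff) - O(1/\totaltime)$ with probability $\ge 1 - 1/\totaltime$. Tracking the polynomial factors yields the claimed $O(\totalstate^6 \log^{O(1)}(\totalstate\totaltime))$ bound. The main obstacle is Step~2: without knowing $\{\costStatei\}$, I must choose $\beta$ so that the trimmed body simultaneously (a) retains a near-optimal scheme, (b) admits a $\poly(\totalstate)$-conditioned inscribed ball around an explicitly constructible center, and (c) lets every membership query inherit a large recommendation probability from its blend with $\underbarScheme$ --- and this three-way coupling is what pins down the $\totalstate^6$ exponent.
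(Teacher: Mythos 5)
Your proposal does not prove the statement in question; it proves a different one. \Cref{thm:membership oracle query bound} is a pure convex-optimization result, imported verbatim from \citet{LSV-18}: given a membership oracle for an arbitrary convex body $\convexset\subseteq\reals^{\totalstate}$ together with a centered inner ball of radius $r$ and outer ball of radius $R$, one can find an $\epsilon$-approximate maximizer of a linear function with probability $1-\delta$ using $O(\totalstate^2\log^{O(1)}(\totalstate R/(\epsilon\delta r)))$ oracle queries. The paper does not prove this theorem --- it cites it and uses $\lpsolver$ as a black box. A proof would have to live entirely inside convex geometry and query complexity: roughly, one must show how to simulate (approximate) separation or gradient information from membership queries alone (via sampling/smoothing arguments in the style of Lee--Sidford--Vempala) and then drive a cutting-plane or interior-point-type method, all while accounting for the $\log(\sfrac{R}{r})$ condition-number dependence. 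Nothing in your write-up engages with any of that.

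What you have instead sketched is the proof of \Cref{thm:log T}: how to embed the online Bayesian recommendation problem into the framework of \Cref{thm:membership oracle query bound} by realizing $\CheckPersu$ as a membership oracle, constructing interior-point candidates, bounding the per-query regret via \Cref{lem:persuasive check regret}, and setting $\epsilon=\delta=1/\totaltime$. Indeed your argument explicitly \emph{invokes} \Cref{thm:membership oracle query bound} (``By \Cref{thm:membership oracle query bound}, this requires \ldots queries''), so as a proof of that theorem it is circular, and its stated conclusion --- the $O(\totalstate^6\log^{O(1)}(\totalstate\totaltime))$ regret bound --- is the statement of \Cref{thm:log T}, not the query-complexity guarantee you were asked to establish. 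If the intended target really is the cited oracle-complexity theorem, the entire argument needs to be replaced; if the intended target is \Cref{thm:log T}, your sketch is broadly aligned with the paper's three-subphase exploration strategy, but that is a different exercise.
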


We note that
when only membership oracle is given,
the interior point $x\zeroed$
as well as 
lower bound $r$, and upper bound $R$
such that $\TwoNormBall(x\zeroed, r)\subseteq \convexset \subseteq
\TwoNormBall(x\zeroed,R)$ is necessary for any algorithms.
Otherwise, there is an information-theoretic barrier
\citep[see][]{GLS-88}.

\subsection{Towards \texorpdfstring{$\polymregret$}{} Regret}
\label{sec:log T proof}

Before we describe our algorithm, let
us highlight two major hurdles in applying 
the membership oracle approach 
to solve
our Bayesian recommendation problem.
\begin{enumerate}
\item Though \Cref{thm:membership oracle query bound}
upper bounds
the total number of queries 
to the membership oracle (a.k.a.,
Procedure~\ref{alg:persuasive check}),
as illustrated in our second example presented 
in \Cref{sec:loglog T},
the regret from one execution of 
Procedure~\ref{alg:persuasive check}
may be superconstant.

\item \Cref{thm:membership oracle query bound}
requires an interior point $x$
as well as a
lower bound radius $r$, and an upper bound radius $R$
such that $\TwoNormBall(x, r)\subseteq \convexset \subseteq
\TwoNormBall(x,R)$,
and the number of queries  depends 
on the value of $r$ and $R$.
However in our problem, the interior point is 
{\em not} given explicitly. 
How to find a proper interior point $x$
with non-trivial lower bound radius $r$ 
(without incurring too much regret)
is not obvious in our problem.
\end{enumerate}
\xhdr{Overview of the algorithm}
We now sketch \LPRP.
\ifinforms
The formal description of {\LPRP} and its detailed proof is provided in \Cref{apx:proofs in log T}.
\else
\fi
The high-level idea of this algorithm is to
use $\lpsolver$ as a subroutine to 
identify a signaling scheme 
$\scheme\primed$ whose 
per-round expected regret is $\sfrac{1}{\totaltime}$.
In more detail, \LPRP\ divides the whole $\totaltime$ 
rounds into an \emph{exploring phase}
and an \emph{exploiting phase}.
In exploring phase, we use $\lpsolver$ to identify 
a persuasive signaling scheme $\scheme\primed$, 
and exploiting phase uses $\scheme\primed$ 
until the rounds are exhausted.
As mentioned in the above two hurdles,
to use $\lpsolver$, we need to ensure that 
each query (i.e., a signaling scheme) 
to the $\lpsolver$ cannot incur too much regret, 
i.e., Procedure~\ref{alg:persuasive check} for checking 
the persuasiveness of a queried signaling scheme cannot be large; 
and we need to find a proper interior point 
with non-trivial lower bound radius. 
To achieve this, there are three subphases in the exploring phase:

\smallskip

\vspace{-4pt}
   \noindent\underline{\textsl{Exploring phase I -- Lowerbounding $\SExpUtility(\optoff)$}}: 
    Similar to \CRP, the first step of 
    \LPRP\ is to identify 
    a lower bound and an upper bound 
    of $\SExpUtility(\optoff)$.
    But different from \CRP, here, 
    we identify a set $\anchorsetHat$ of persuasive direct signaling schemes
    such that for every signaling scheme
    $\schemeanchorpair \in \anchorsetHat$, it has the following two 
    properties: 
    $(i)$ it has the same payoff $\SExpUtilityUnderbar$
    with other signaling schemes in
    set $\anchorsetHat$, i.e., 
    $\SExpUtilityUnderbar \equiv \SExpUtility(\schemeanchorpair), \forall \schemeanchorpair \in \anchorsetHat$, and 
    $\SExpUtilityUnderbar$ is relatively good, i.e.,
    $\SExpUtilityUnderbar \ge \frac{\SExpUtility(\optoff)}{\totalstate^2}$;
    $(ii)$ signaling scheme $\schemeanchorpair$ has a specific 
    structure where it has non-zero probability 
    for recommending action $1$ on at most two states.

    \begin{lemma}[informal]
\label{lem:existence of anchor pair main}
When exploring phase I terminates, 
$\SExpUtilityUnderbar \geq
\frac{\SExpUtility(\optoff)}{\totalstate^2}$
and 
$\anchorsetHat$ is not empty.
\end{lemma}
    
    At a high level, the property $(i)$ implies that
    $\SExpUtilityUnderbar \leq \SExpUtility(\optoff)
    \leq \totalstate^2\,\SExpUtilityUnderbar$, which can 
    guarantee us whenever we use Procedure~\ref{alg:persuasive check}
    (as a membership oracle)
    to check the persuasiveness of 
    a direct signaling scheme in the later rounds,
    the expected regret is at most $O(\totalstate^2)$.
    The property $(ii)$ can guarantee that we
    find an interior point 
    with non-trivial lower bound radius $r$
    in the later subphase.

    \smallskip
    
\vspace{-4pt}
    \noindent\underline{\textsl{Exploring phase II --  
    Excluding Degenerate States}}:
    To find the interior point for the
    program \ref{eq:optimal in hindsight},
    however, we first note that 
    it is possible the convex set 
    in the program~\ref{eq:optimal in hindsight}
    is degenerate and thus no interior point exists.
    Nonetheless, those degenerate dimensions (i.e., states) 
    must contribute little to $\SExpUtility(\optoff)$.
    Thus, in this exploring phase, we use the signaling schemes
    in $\anchorsetHat$ obtained in Exploring phase I to exclude 
    those states and obtain a set $\meaningfulset \subseteq [m]$ 
    that contains all relatively good states 
    (i.e., states whose $\costState$ 
    cannot be too negative).
    
    \begin{restatable}{lemma}{meaningfulSetCharacterization}
\label{lem:meaningful set characterization}
When exploring phase II terminates,
\begin{itemize}
\item for each state $\realizedstatei
    \in\meaningfulset$:~
    $\costStatei \geq -\totalstate\totaltime
    \cdot 
    \max_{\realizedstatej\in[\totalstate]}
    \costStatej$;
    \item
     for each state $\realizedstatei\not\in
    \meaningfulset$:~
    $\costStatei < - 
    \frac{\totalstate\totaltime}{3} 
    \cdot
    \max_{\realizedstatej\in[\totalstate]}
    \costStatej$.
    \end{itemize}
\end{restatable}
    
    With the obtained $\meaningfulset$ at hand, 
    we show that 
    there exists a persuasive direct 
    signaling scheme $\schemestartingpointTilde$ 
    such that  
    $
    \frac{1}{8\totalstate^2\totaltime}\leq
    \schemestartingpointTilde(\realizedstatei) 
    \leq 
    1 - \frac{1}{8\totalstate^2\totaltime}$
    for every $\realizedstatei
    \in
    \meaningfulset$,
    and 
    $\schemestartingpointTilde(\realizedstatei) = 0$
    for every $\realizedstatei\not\in \meaningfulset$.
    Furthermore, 
    signaling scheme $\schemestartingpointTilde$  
    is an interior point\footnote{Here 
    we mean $\schemestartingpointTilde$
    is an interior point of convex set 
    in program~\ref{eq:optimum in hindsight within meaningfulset} 
    when we restrict to 
    states in $\meaningfulset$.} 
    of the following linear program.
\begin{align}
    \tag{$\mathcal{P}^{\texttt{opt}}_{\meaningfulset}$}
    \label{eq:optimum in hindsight within meaningfulset}
        \begin{array}{llll}
          \max\limits_{\scheme:
          \scheme(\realizedstatei) =
          0~\forall\realizedstatei\not\in\meaningfulset}   &  
          \displaystyle\sum\nolimits_{\realizedstatei\in\meaningfulset}
          \prior(\realizedstatei)\scheme(\realizedstatei)\quad
          &  \text{s.t.}
          &\\
             & \displaystyle\sum\nolimits_{\realizedstatei\in\meaningfulset}
             \costStatei
             \scheme(\realizedstatei)
             \geq 0\quad 
             & & \\
             & \displaystyle\sum\nolimits_{\realizedstatei\in\meaningfulset}
             \prior(\realizedstatei)
             \scheme(\realizedstatei)
             \geq \frac{1}{16}\,\SExpUtilityUnderbar\quad 
             \qquad& & \\
             &\scheme(\realizedstatei)\in [0,1]
             \qquad&
              \realizedstatei\in\meaningfulset
             &
        \end{array}
    \end{align}
From \Cref{lem:meaningful set characterization},
    we know that the optimal objective value 
    of program~\ref{eq:optimum in hindsight within meaningfulset}
    is close to $\SExpUtility(\optoff)$.

\begin{restatable}{lemma}{meaningfulSetApproxOpt}
\label{lem:meaningful set approximate optimal}
Let $\meaningfulsetopt$
be the optimal solution in 
program~\ref{eq:optimum in hindsight within meaningfulset},
i.e., $\meaningfulsetopt
= \argmax$
\ref{eq:optimum in hindsight within meaningfulset}.
Then
$\SExpUtility(\meaningfulsetopt) \geq 
\SExpUtility(\optoff) - O(\frac{1}{\totaltime})$.
\end{restatable}
    
    \smallskip
    
\vspace{-4pt}
    \noindent\underline{\textsl{Exploring phase III
    --
    Executing $\lpsolver$
    with Interior Point Candidates}}: 
    In this phase, we identify a direct signaling scheme $\scheme\primed$
    whose per-round expected regret is $O(\frac{1}{\totaltime})$
    (i.e., $\SExpUtility(\scheme\primed) \geq \SExpUtility(\optoff) - O(\frac{1}{\totaltime})$)
    with probability $1-\frac{1}{\totaltime}$.
    To do this, we solve a $\frac{1}{\totaltime}$-approximate solution in
    program~\ref{eq:optimum in hindsight within meaningfulset}
    by using $\lpsolver$ as a subroutine.
    However, we note that we cannot directly identify 
    the interior point $\schemestartingpointTilde$ to 
    the program~\ref{eq:optimum in hindsight within meaningfulset}
mentioned in exploring phase II.
    Instead, 
    we introduce a specific modification
    for signaling schemes in $\anchorsetHat$
    --
for every 
    signaling scheme $\schemeanchorpair\in \anchorsetHat$,
    its modification 
    $\schemestartingpoint$
    is an interior point candidate.
    In particular,
    because of \Cref{lem:meaningful set characterization},
    there exists 
    a signaling scheme $\schemeanchorpair \in \anchorsetHat$
whose 
    modification
    $\schemestartingpoint$ is 
    indeed
    an interior point $\schemestartingpointTilde$.

\vspace{-2pt}
\begin{lemma}[informal]
\label{lem:interior point main}
There exists 
a signaling scheme
$\schemeanchorpair\in
\anchorsetHat$
such that 
its modification
$\schemestartingpoint$ is an interior point 
of program~\ref{eq:optimum in hindsight within meaningfulset}.
In particular, let $r = \frac{1}{16\totalstate^2\totaltime}$,
then $\TwoNormBall(\schemestartingpoint,r)\subseteq 
\convexset($\ref{eq:optimum in hindsight within meaningfulset}$)$.
\end{lemma}
    
\vspace{-2pt}
    Finally, we run $\lpsolver$ based on 
    every interior point candidate
    $\schemestartingpoint$
(and in the end, we  
    pick the best solution as $\scheme\primed$),
    where we set the interior
    point $x\zeroed\gets 
    \schemestartingpoint$,
    lower-bound radius $r\gets \frac{1}{16\totalstate^2\totaltime}$,
    upper-bound radius $R \gets \sqrt{\totalstate}$,
    precision $\epsilon\gets \frac{1}{\totaltime}$
    and success probability $\delta \gets \frac{1}{\totaltime}$.\footnote{When an incorrect 
    interior point is given,
    $\lpsolver$ terminates with
    a suboptimal solution.
    The number of queries to the oracle
    is the same as 
    the one in \Cref{thm:membership oracle query bound}.}

\section{\texorpdfstring{$\Omega(\log\log \totaltime)$}\ \ Regret Lower Bound}
\label{sec:loglog T lower bound}
\ifinforms

We now show a tight lower bound 
of
$\Omega(\log\log \totaltime)$
regret of any online policy,
even when the number of states is 2 and the user has the affine state-dependent preference.
Here we allow the online policy 
to be randomized (i.e., 
can commit to different signaling schemes at random)
and 
have non-binary (but finite) signal spaces
(i.e., can have multiple recommendation 
levels).

\begin{theorem}
\label{thm:loglog T regret lower bound}
No online policy
can achieve an expected regret better than $\Omega(\log\log \totaltime)$, 
even for the family of binary-state problem instances.
\end{theorem}
We note that for every binary-state instance, 
the user's utility can be represented as 
an affine function over the state space.
Thus, the above regret lower bound also holds
when the user's utility is an affine function over the state space.

\xhdr{Overview of the proof}
To show \Cref{thm:loglog T regret lower bound},
we focus on problem instances 
with binary state.
Our proof mainly consists of  two steps.
In the first step, we show 
that
for problem instances with binary state, 
any online policy 
can be transformed into a randomized online policy
that only uses signaling schemes with
binary signal space.
This statement is no longer true 
for general problem instances 
with non-binary state, since
the classic revelation principle fails.
The key technical ingredient
(\Cref{lem:multi label signal to binary label signal}) 
is to show 
that any posterior distribution of binary 
state can be induced by a convex combination
of signaling schemes with binary signal space,
which may be of independent  interest.
In the second step, we show a
reduction from
the single-item dynamic pricing problem
to
our online Bayesian recommendation 
problem with binary state.
Thus, the $\Omega(\log\log \totaltime)$ 
regret lower bound known in dynamic pricing problem
\citep{KL-03}
can be extended to our problem. (See more discussions about the connection between our problem and the dynamic pricing problem in \Cref{apx:dynamic pricing}.)

Below we provide detailed discussion and related lemmas
for the above mentioned two steps.
At the end of this subsection, 
we combine all pieces together to conclude the proof of 
\Cref{thm:loglog T regret lower bound}.

\xhdr{Step 1: Binary signals suffice}  
Our first step is to show that every online policy 
can be transformed into
a randomized online policy with binary signal space.
While this might 
appear obvious at first as binary signal suffices
in the optimal signaling scheme in hindsight, it is 
not a-priori clear whether restricting to binary signals is without loss
in an online policy without knowing \Receiver's utility. 

\begin{restatable}{lemma}
{multilabelpolicytobinarylabelpolicy}
\label{lem:multi label policy to binary label policy}
Given any problem instance with binary state,
for any online policy $\ALG$,
there exists an online policy $\ALG\primed$
which only uses signaling schemes with binary 
signal space and has regret $\Reg{\ALG\primed} = \Reg{\ALG}$.
\end{restatable}
We now first sketch the intuition behind
\Cref{lem:multi label policy to binary label policy}.
Fix an arbitrary online policy $\ALG$, we construct 
a randomized online policy $\ALG\primed$ 
with binary signal space
that uses the original policy
$\ALG$ as a blackbox.
Briefly speaking, 
in each round $t$,
policy $\ALG\primed$
first asks 
which signaling scheme $\schemei$ is used by $\ALG$
in this round.
Then, $\ALG\primed$
uses a signaling scheme $\schemei\primed$
with binary signal space 
at random\footnote{Namely, $\ALG\primed$
randomly picks a signaling scheme $\schemei\primed$
and commits to it in round $t$.}
such that the
distribution of \Receiver\ $t$'s posterior belief
induced in $\schemei\primed$ (over the randomness
of state, signaling scheme $\schemei\primed$ used by $\ALG$,
and $\schemei\primed$ itself)
is the same as the one induced by $\schemei$.
Note that from \Receiver\ $t$'s perspective, 
his best response is uniquely determined by 
his posterior belief. Thus, the distribution of 
\Receiver\ $t$'s action is the same in both $\ALG$
and $\ALG\primed$.
Finally, $\ALG\primed$ sends \Receiver\ $t$'s 
action $\actioni$
as the feedback to $\ALG$, and moves to the next round.
The formal proof of \Cref{lem:multi label policy to binary label policy} is in \Cref{apx:proofs lower bound}.

\xhdr{Step 2: Reduction from dynamic pricing} The second step in the proof of \Cref{thm:loglog T regret lower bound}
is a reduction from 
the single-item dynamic pricing problem to
our online Bayesian recommendation problem.
The definition of single-item 
dynamic pricing problem is as follows.
\begin{definition}
In the \emph{single-item dynamic pricing problem},
there is a seller with unlimited units of a single item
and $\totaltime$ buyers.
In each round $t\in[\totaltime]$, 
the seller wants to sell a new unit of the item 
(by setting a price $\price_t$)
to buyer $t$.
Buyer $t$ has a private value $\optval$
that is unknown to the seller,
and will buy the item (and pay $\price_t$) 
if and only if $\optval \geq \price_t$.
The regret of a dynamic pricing mechanism $\ALG$
is 
\begin{align*}
    \Reg{\ALG} \triangleq \totaltime\cdot \optval
    - \expect[\price_1, \dots, \price_\totaltime]
    {\sum\nolimits_{t\in[\totaltime]} \price_t 
    \cdot \indicator{\price_t \leq \optval}}
\end{align*}
where $\price_t$ is the price posted by $\ALG$ in each round $t\in[\totaltime]$.
\end{definition}

\begin{theorem}[\citealp{KL-03}]
\label{lem:dynamic pricing lower bound}
In single-item dynamic pricing problem,
no randomized dynamic pricing mechanism 
can achieve an expected regret better than 
$\Omega(\log\log \totaltime)$.
\end{theorem}

The following lemma (its formal proof 
is deferred to \Cref{apx:proofs lower bound}.) formally states
the reduction from 
the single-item dynamic pricing problem to
our online Bayesian recommendation problem.
\begin{restatable}{lemma}{dynamicPricingReduction}
\label{lem:dynamic pricing reduction}
For every single-item dynamic pricing 
problem instance $\instance$,
there exists an online Bayesian recommendation problem instance 
$\instance\primed$
with binary state.
For every online policy $\ALG\primed$
with binary signal space 
and regret $\Reg[\instance\primed]{\ALG\primed}$
on online Bayesian recommendation instance $\instance\primed$,
there exists a dynamic pricing mechanism $\ALG$
with regret $\Reg[\instance]{\ALG} 
\leq \Reg[\instance\primed]{\ALG\primed} + 1$ on dynamic pricing instance $\instance$.
\end{restatable}

\begin{remark}
We would like to note that the binary-state 
instance constructed in the proof of \Cref{lem:dynamic pricing reduction}
also satisfies that the user's utility function is an 
affine function over the states. 
\end{remark}
Putting all pieces together, we are ready to prove
\Cref{thm:loglog T regret lower bound}.

\begin{proof}[Proof of \Cref{thm:loglog T regret lower bound}]
Combining \Cref{lem:dynamic pricing lower bound}
and \Cref{lem:dynamic pricing reduction},
in the online Bayesian recommendation problem 
with binary state, 
no randomized online policy 
with binary signal space 
can achieve
an expected regret better than 
$\Omega(\log\log \totaltime)$.
Invoking \Cref{lem:multi label policy to binary label policy}
finishes the proof.
\end{proof}

 \else

\newcommand{\instance}{I}
\newcommand{\optoffPrimed}{\scheme^{*\dagger}}
\newcommand{\eps}{\epsilon}
\newcommand{\doubleprimed}{^\ddagger}
\newcommand{\price}{p}
\newcommand{\pricei}{\price_t}
\newcommand{\ked}{^{(k)}}

We now show a tight lower bound 
of
$\Omega(\log\log \totaltime)$
regret of any online policy,
even when the number of states is 2.
Here we allow the online policy 
to be randomized (i.e., 
can commit to different signaling schemes at random)
and 
have non-binary (but finite) signal spaces
(i.e., can have multiple recommendation 
levels).

\begin{theorem}
\label{thm:loglog T regret lower bound}
No online policy
can achieve an expected regret better than
$\Omega(\log\log \totaltime)$.
\end{theorem}

\xhdr{Overview of the proof}
To show \Cref{thm:loglog T regret lower bound},
we focus on problem instances 
with binary state.
Our proof mainly consists of  two steps.
In the first step, we show 
that
for problem instances with binary state, 
any online policy 
can be transformed into a randomized online policy
that only uses signaling schemes with
binary signal space.
This statement is no longer true 
for general problem instances 
with non-binary state, since
the classic revelation principle fails.
The key technical ingredient
(\Cref{lem:multi label signal to binary label signal}) 
is to show 
that any posterior distribution of binary 
state can be induced by a convex combination
of signaling schemes with binary signal space,
which may be independent of interest.
In the second step, we show a
reduction from
the single-item dynamic pricing problem
to
our online Bayesian recommendation 
problem with binary state.
Thus, the $\Omega(\log\log \totaltime)$ 
regret lower bound known in dynamic pricing problem
\citep{KL-03}
can be extended to our problem.

Below we provide detailed discussion and related lemmas
for the above mentioned two steps.
In the end of this subsection, 
we combine all pieces together to conclude the proof of 
\Cref{thm:loglog T regret lower bound}.

\xhdr{Step 1: Binary signals suffice}  
Our first step is to show that every online policy 
can be transformed into
a randomized online policy with binary signal space.
While this might 
appear obvious at first as binary signal suffices
in the optimal signaling scheme in hindsight, it is 
not a-priori clear whether restricting binary signal is without loss
in an online policy without knowing \Receiver's utility. 

\begin{lemma}
\label{lem:multi label policy to binary label policy}
Given any problem instance with binary state,
for any online policy $\ALG$,
there exists an online policy $\ALG\primed$
which only uses signaling schemes with binary 
signal space and has regret $\Reg{\ALG\primed} = \Reg{\ALG}$.
\end{lemma}
We now first sketch the intuition behind
\Cref{lem:multi label policy to binary label policy}.
Fix an arbitrary online policy $\ALG$, we construct 
a randomized online policy $\ALG\primed$ 
with binary signal space
that uses the original policy
$\ALG$ as a blackbox.
Briefly speaking, 
in each round $t$,
policy $\ALG\primed$
first asks 
which signaling scheme $\schemei$ is used by $\ALG$
in this round.
Then, $\ALG\primed$
uses a signaling scheme $\schemei\primed$
with binary signal space 
at random\footnote{Namely, $\ALG\primed$
randomly picks a signaling scheme $\schemei\primed$
and commits to it in round $t$.}
such that the
distribution of \Receiver\ $t$'s posterior belief
induced in $\schemei\primed$ (over the randomness
of state, signaling scheme $\schemei\primed$ used by $\ALG$,
and $\schemei\primed$ itself)
is the same as the one induced by $\schemei$.
Note that from \Receiver\ $t$'s perspective, 
her best response is uniquely determined by 
her posterior belief. Thus, the distribution of 
\Receiver\ $t$'s action is the same in both $\ALG$
and $\ALG\primed$.
Finally, $\ALG\primed$ sends \Receiver\ $t$'s 
action $\actioni$
as the feedback to $\ALG$, and moves to the next round.

\newcommand{\posterior}{\posteriorState}

The following lemma 
guarantees that for any distribution $\posterior$
of posterior belief over binary state, there exists 
a distribution of signaling schemes with binary signal space
that implements $\posterior$.

\begin{lemma}
\label{lem:multi label signal to binary label signal}
Let $\scheme:[2] \rightarrow \stochastic(\signalSpace)$ 
be a signaling scheme 
that maps binary state
into probability distributions 
over finite signal space $\signalSpace$,
and 
$\posterior:\signalSpace\rightarrow
\stochastic([2])$ be the 
distribution of posterior belief
induced by $\scheme$.
There exists a positive integer $K$,
and a finite set 
$\{\scheme\ked\}_{k\in[K]}$
where each 
$\scheme\ked:
[2]\rightarrow
\stochastic(\{0, 1\})$
is a signaling scheme
with binary signal space.
Let $\posterior\ked$ be 
the distribution of posterior belief
induced by $\scheme\ked$ 
for each $k\in[K]$.
Then,
there exists a distribution $F$
over $[K]$
such that 
for every possible posterior belief realization
$x\in\supp(\posterior)$,
$\prob{\posterior=x} = 
\expect[k\sim F]{
\prob{\posterior\ked = x}}$.
\end{lemma}

The proof of \Cref{lem:multi label signal to binary label signal}
relies on 
\Cref{lem:posterior Bayesian plausibility}
and 
\Cref{lem:multi label signal to binary label signal random variable} as follows.
The proof of \Cref{lem:multi label signal to binary label signal random variable}
is deferred to \Cref{apx:proofs in loglog T}.

\begin{lemma}[\citealp{KG-11}]
\label{lem:posterior Bayesian plausibility}
Let $\prior\in\stochastic([2])$ be a prior distribution 
over binary state space $[2]$.
A distribution of posterior belief
$\posterior\in
\stochastic(\stochastic([2]))$
is implementable (i.e., 
can be induced by some signaling scheme)
if and only if
$\prob[x\sim \posterior,\state\sim x]
{
\state = 1} = 
\prior(1)$.
\end{lemma}

\begin{restatable}{lemma}{multiLabelToBinaryLabelRV}
\label{lem:multi label signal to binary label signal random variable}
Let $X$ be a random variable 
with discrete support 
$\supp(X)$.
There exists a 
positive integer $K$,
a finite set of $K$
random variables $\{X_k\}_{k\in[K]}$,
and 
convex combination coefficients
$\convexcombinbf\in [0, 1]^K$ 
with $\sum_{k\in[K]} \convexcombin_k = 1$
such that:
\begin{enumerate}
    \item \underline{Bayesian-plausibility}: for each $k\in [K]$, $\expect{X_k} = \expect{X}$;
    \item \underline{Binary-support}:
    for each $k\in[K]$, the size of $X_k$'s support is at most 2,
    i.e., $|\supp(X_k)| \leq 2$
    \item \underline{Consistency}:
    for each $x\in \supp(X)$, 
$\prob{X=x} = \sum_{k\in[K]}
\convexcombin_k\cdot 
\prob{X_k = x}$
\end{enumerate}
\end{restatable}

\begin{proof}[Proof of \Cref{lem:multi label signal to binary label signal}]
Let $\prior\in\stochastic([2])$ 
be the prior distribution
over binary state space $[2]$,
and $\state$ be the state drawn from $\prior$.
Fix an arbitrary signaling scheme $\scheme$
and let $\posterior$ 
be the distribution of posterior belief induced 
by $\scheme$.
Let $\signal$ be the signal issued by 
signaling scheme $\scheme$,
and
set random variable 
$X = \prob{\state = 1\condition \signal}$.
By \Cref{lem:posterior Bayesian plausibility},
$\expect[\signal]{X} = \prior(1)$.

\Cref{lem:multi label signal to binary label signal random variable}
ensures that there exists a positive integer $K$,
a finite set of $K$ random variable $\{X_k\}$,
and convex combination coefficients $\convexcombinbf$ that satisfy 
``Bayesian-plausibility'' property,
``binary-support'' property,
and ``consitency'' property.
Invoking \Cref{lem:posterior Bayesian plausibility},
we know that each random variable $X_k$
can be thought as a distribution of 
posterior belief $\posterior\ked$
which can be induced by some signaling scheme
$\scheme\ked$ due to the ``Bayesian-plausibility'' property.
The ``binary-support'' property ensures 
that $\scheme\ked$ has binary signal space.
Let $F$ be the distribution over $[K]$
such that $\prob[k\sim F]{k=\ell} = f_{\ell}$.
The ``consitency'' property
guarantees that for every possible posterior belief realization
$x\in\supp(\posterior)$,
$\prob{\posterior=x} = 
\expect[k\sim F]{
\prob{\posterior\ked = x}}$.
\end{proof}

\newcommand{\history}{h}
\newcommand{\historyaction}{h_{a}}
\newcommand{\histories}{\mathcal{H}}
\newcommand{\historiesaction}{\mathcal{H}_a}
\newcommand{\schemes}{\Pi}
\newcommand{\optval}{\val^*}

Now with \Cref{lem:multi label signal to binary label signal},
we present the proof for \Cref{lem:multi label policy to binary label policy}.
\begin{proof}[Proof of \Cref{lem:multi label policy to binary label policy}]
Fix an arbitrary problem instance $\instance$
with binary state,
and an arbitrary online policy $\ALG$.
Below we construct a randomized online policy
$\ALG\primed$
that only uses signaling scheme with binary signal.
Then, through a coupling argument, 
we show that \Receiver s' actions 
under $\ALG$ and 
\Receiver s' actions under $\ALG\primed$
are the same for each sample path, which finishes the 
proof.

We can
consider 
online policy $\ALG:
[\totaltime]
\times \historiesaction
\times \histories
 \rightarrow
\schemes$ as a 
mapping from 
the round index $t$, 
\Receiver s' action history 
$\historyaction :=(\action_1,\dots, \action_{t-1})$
in the previous $t - 1$ rounds,
and 
the other \Receiver-irrelevant
history\footnote{For example, \Receiver-irrelevant
history
may encode 
the realized states in previous rounds
and 
the random seed for the randomness of 
selecting signaling schemes in $\ALG$.} $\history$
to 
the signaling scheme $\scheme$
used by $\ALG$ in round $t$.
Here $\historiesaction$
is the set of all possible \Receiver s' action history,
$\histories$
is the set of all possible \Receiver-irrelevant 
history,
and $\schemes$ is the set of all signaling schemes.

Now we describe the construction of $\ALG\primed$
which uses mapping $\ALG$ as a blackbox.\footnote{We 
use notation $\dagger$ to denote terms under 
constructed policy $\ALG\primed$.}
We also need to define a coupling 
between the sample path under $\ALG$
and the sample path under $\ALG\primed$.
We construct $\ALG\primed$
and its coupling with $\ALG$ inductively 
(i.e., round by round).
Under the construction of $\ALG\primed$,
together with its coupling, each \Receiver\ $t$ 
forms
the same posterior belief and takes
the same action on both the sample path under $\ALG$
and the sample path under $\ALG\primed$.

We start with round 1. Let $\history$
be the \Receiver-irrelevant history under $\ALG$.
Since $\history$ is \Receiver-irrelevant, 
it can be simulated in $\ALG\primed$.
$\ALG\primed$
first determines the signaling scheme 
$\scheme_1\triangleq \ALG(1,\emptyset,\history)$
that $\ALG$ uses in round 1 given 
\Receiver s' action history $\emptyset$ (which is empty in
the beginning of round 1),
and \Receiver-irrelevant history $\history$.
By \Cref{lem:multi label signal to binary label signal},
there exists a distribution $F_1\primed$ over 
signaling schemes with binary signal space
such that the distribution of posterior belief 
is the same as $\scheme_1$.
Then $\ALG\primed$ randomly draws a signaling scheme
$\scheme_1\primed$ from distribution $F_1\primed$,
and commits to it in round $1$.
By coupling state $\theta_1$ with 
state $\theta_1\primed$, 
and properly coupling signal $\signal_1$
from $\scheme_1$
with signal $\signal_1\primed \sim \scheme_1\primed 
~(\sim F_1\primed)$,
we can ensure that the realized posterior belief 
$\posterior_1$ under $\ALG$
is the same as the realized posterior belief
$\posterior_1\primed$
under $\ALG\primed$,
and thus \Receiver\ $1$'s action $\action_1$ under $\ALG$
is the same as her $\action_1\primed$ under $\ALG\primed$.

Suppose we have constructed $\ALG\primed$
together with its coupling for the first $t-1$ rounds.
In round $t$, 
let $\historyaction$
be the \Receiver s' action history 
under $\ALG$,
and $\historyaction\primed$
be the \Receiver s' action history 
under $\ALG\primed$.
Because of the coupling in 
the first $t-1$ rounds,
we have $\historyaction = \historyaction\primed$.
Let $\history$
be the \Receiver-irrelevant history under $\ALG$.
Again, since $\history$ is \Receiver-irrelevant, 
$\ALG\primed$
can
compute the distribution of $\history$
that is consistent with 
the \Receiver s' action history
$\historyaction\primed=\historyaction$,
and sample $\history\primed$ from this distribution.
Here we couple  
the \Receiver-irrelevant history $\history$ under $\ALG$
with the simulated $\history\primed$ in $\ALG\primed$, 
so that $\history\primed = \history$.
$\ALG\primed$
first determines the signaling scheme 
$\schemei\triangleq \ALG(t,\historyaction\primed,\history\primed)$
that $\ALG$ uses in 
this round $t$ given 
\Receiver s' action history $\historyaction\primed$,
and \Receiver-irrelevant history $\history\primed$.
The remaining construction of distribution $F_t\primed$ 
over 
signaling schemes with binary signal space,
realized signaling scheme $\schemei\primed$
and their coupling (so that $\actioni\primed = \actioni$) 
are the same as what we do in round $1$.
We omit them to avoid redundancy.

Given the construction of $\ALG\primed$
and its coupling described above, 
we conclude that \Receiver s' actions 
are the same under $\ALG\primed$ and $\ALG$,
which finishes the proof.
\end{proof}

\smallskip

\newcommand{\val}{v}

\xhdr{Step 2: Reduction from dynamic pricing. } The second step in the proof of \Cref{thm:loglog T regret lower bound}
is a reduction from 
the single-item dynamic pricing problem to
our online Bayesian recommendation problem.
The definition of single-item 
dynamic pricing problem is as follows.
\begin{definition}
In the \emph{single-item dynamic pricing problem},
there is a seller with unlimited units of a single item
and $\totaltime$ buyers.
In each round $t\in[\totaltime]$, 
the seller wants to sell a new unit of the item 
(by setting a price $\price_t$)
to buyer $t$.
Buyer $t$ has a private value $\optval$
that is unknown to the seller,
and will buy the item (and pay $\price_t$) 
if and only if $\optval \geq \price_t$.
The regret of a dynamic pricing mechanism $\ALG$
is 
\begin{align*}
    \Reg{\ALG} \triangleq \totaltime\cdot \optval
    - \expect[\price_1, \dots, \price_\totaltime]
    {\sum_{t\in[\totaltime]} \price_t 
    \cdot \indicator{\price_t \leq \optval}}
\end{align*}
where $\price_t$ is the price posted by $\ALG$ in each round $t\in[\totaltime]$.
\end{definition}

\begin{theorem}[\citealp{KL-03}]
\label{lem:dynamic pricing lower bound}
In single-item dynamic pricing problem,
no randomized dynamic pricing mechanism 
can achieve an expected regret better than 
$\Omega(\log\log \totaltime)$.
\end{theorem}

The following lemma formally states
the reduction from 
the single-item dynamic pricing problem to
our online Bayesian recommendation problem.
\begin{restatable}{lemma}{dynamicPricingReduction}
\label{lem:dynamic pricing reduction}
For every single-item dynamic pricing 
problem instance $\instance$,
there exists an online Bayesian recommendation problem instance 
$\instance\primed$
with binary state.
For every online policy $\ALG\primed$
with binary signal space 
and regret $\Reg[\instance\primed]{\ALG\primed}$
on online Bayesian recommendation instance $\instance\primed$,
there exists a dynamic pricing mechanism $\ALG$
with regret $\Reg[\instance]{\ALG} 
\leq \Reg[\instance\primed]{\ALG\primed} + 1$ on dynamic pricing instance $\instance$.
\end{restatable}

Here we present a sketch of our reduction
from the single-item dynamic pricing problem to 
our problem.
The formal proof of \Cref{lem:dynamic pricing reduction} 
is deferred to
\Cref{apx:proofs in loglog T}.

\begin{proof}[Proof sketch of \Cref{lem:dynamic pricing reduction}]
Consider the following reduction,
which contains a mapping from 
dynamic pricing instance $\instance$
to online Bayesian recommendation 
instance $\instance\primed$,\footnote{Here we 
use notation $\dagger$ to denote 
the online Bayesian recommendation instance.}
and a mapping from online policy $\ALG\primed$
to dynamic pricing mechanism $\ALG$.

\underline{\textsl{Instance mapping}}:
Fix an arbitrary single-item dynamic pricing problem instance
$\instance  = (\totaltime, \optval)$
where there are $\totaltime$ rounds and 
each buyer has private value $\optval$.
Consider the following 
Bayesian recommendation instance $\instance\primed$.
There are $\totalstate\primed = 2$ states, and 
$\totaltime\primed = \totaltime$ rounds.
Let $\epsilon = \sfrac{1}{\totaltime\primed}$.
State 1 is realized with probability 
$\prior\primed(1) = \eps$
and state 2 is realized with probability 
$\prior\primed(2) = 1 - \eps$.
The \Receiver s' utility is defined as follows,
\begin{align*}
    &\text{for state 1:}\quad 
    \RUtility\primed(1, \action\primed) = 
    \indicator{\action\primed = 1} \\
    &\text{for state 2:}\quad 
    \RUtility\primed(2, \action\primed) 
    = -\frac{\epsilon}{\optval} \cdot 
    \indicator{\action\primed = 1} 
\end{align*}
By construction, $\costState\primed(1) = \eps$,
$\costState\primed(2) = - \frac{\epsilon(1-\epsilon)}{\optval}$,
and the optimal signaling in hindsight $\optoffPrimed$
satisfies that 
$\optoffPrimed(1) = 1$, $\optoffPrimed(2) = \frac{\optval}{1-\epsilon}$,
and $\SExpUtility(\optoffPrimed)=\optval + \eps$.

\underline{\textsl{Policy mapping}}:
Fix an arbitrary online policy $\ALG\primed$
with binary signal space
for online Bayesian recommendation instance $\instance\primed$.
We construct dynamic pricing mechanism $\ALG$ round by round.
Suppose signaling scheme $\schemei\primed$ is used 
by $\ALG\primed$ in round $t$.
Here we assume that $\schemei\primed(1) = 1$.\footnote{In
the formal proof of \Cref{lem:dynamic pricing reduction}
(\Cref{apx:proofs in loglog T}), we show 
that this assumption is without loss of generality.}
Then dynamic pricing mechanism 
$\ALG$ posts price $\pricei \triangleq
(1-\eps)\schemei\primed(2)$
in round $t$ for the dynamic pricing instance $\instance$.

\underline{\textsl{Reduction analysis}}:
To see why 
$\Reg[\instance]{\ALG} 
\leq \Reg[\instance\primed]{\ALG\primed} + 1$,
let us fix an arbitrary round $t$.
Under the assumption that $\schemei\primed(1) = 1$,
\Receiver\ $t$ takes action 1
if and only if the realized signal $\signal\primed = 1$
and her expected utility of taking action 1 is better than 
taking action 0 under her 
posterior belief, i.e.,
\begin{align*}
    \costState\primed(1)\,\schemei\primed(1)
    +
    \costState\primed(2)\,\schemei\primed(2) \geq 0
    \quad\Rightarrow
    \quad
    \schemei\primed(2) \leq \frac{\optval}{1-\eps}
\end{align*}
Hence, the expected regret induced by signaling scheme 
$\schemei\doubleprimed$ is 
\begin{align*}
    \Reg[\instance\primed]{\schemei\primed} 
    =&~
    \SExpUtility(\optoffPrimed) 
    -
    (\prior\primed(1)\schemei\primed(1) + 
    \prior\primed(2)\schemei\primed(2))
    \cdot
    \indicator{\text{\Receiver\ $t$ takes action 1}\condition\signal\primed=1}
    \\
    =&~
    \optval + \eps -
    \left(\eps + (1-\eps)\schemei\primed(2)\right)
    \cdot 
    \indicator{\schemei\primed(2) \leq 
    \frac{\optval}{1-\eps}}
\end{align*}
On the other hand, when 
price $\pricei\triangleq(1-\epsilon)
\schemei\primed(t)$
is posted by $\ALG$,
the regret is 
\begin{align*}
    \Reg[\instance]{\pricei} = \optval - 
    \pricei \cdot \indicator{\pricei \leq \optval}
    \leq 
    \Reg[\instance\primed]{\schemei\primed} + \eps
\end{align*}
Since dynamic pricing mechanism $\ALG$ 
has more information than 
online policy $\ALG\primed$,\footnote{In particular,
dynamic pricing mechanism deterministically learns
whether $\pricei \leq \optval$ (a.k.a., 
$ \indicator{\schemei\primed(2) \leq 
    \frac{\optval}{1-\eps}}$),
    while online policy $\ALG\primed$
    only learns this information when signal 1 is realized.}
$\ALG$ can simulate $\ALG\primed$ in the future rounds.
The total regret is 
\begin{align*}
    \Reg[\instance]{\ALG}
    -
    \Reg[\instance\primed]{\ALG\primed}
    =
    \sum_{t\in[\totaltime]}
    \left(
    \Reg[\instance]{\pricei}
    -
    \Reg[\instance\primed]{\schemei\primed}
    \right)
    \leq 
    \eps\cdot \totaltime = 1
\end{align*}
which finishes the sketch of our reduction.
\end{proof}

Putting all pieces together, we are ready to proof
\Cref{thm:loglog T regret lower bound}.

\begin{proof}[Proof of \Cref{thm:loglog T regret lower bound}]
Combining \Cref{lem:dynamic pricing lower bound}
and \Cref{lem:dynamic pricing reduction},
In the online Bayesian recommendation problem 
with binary state, 
no randomized online policy 
with binary signal space 
can achieve
an expected regret better than 
$\Omega(\log\log \totaltime)$.
Invoking \Cref{lem:multi label policy to binary label policy}
finishes the proof.
\end{proof}

\xhdr{Comparison with (contextual) dynamic pricing problem}
In \Cref{lem:dynamic pricing reduction},
we give a reduction from the single-item 
dynamic pricing problem 
to our online Bayesian recommendation problem with binary state.
Roughly speaking, our problem with binary state
can be interpreted as a dynamic pricing problem 
with probabilistic feedback -- 
when price $\price$ is posted, seller only learns 
whether buyer's value is greater than price $\price$
with probability $\sfrac{1}{\price}$.
Since the feedback is probabilistic,
the classic dynamic pricing mechanism with $O(\log\log\totaltime)$ regret studied in \citet{KL-03}
suffers significantly larger regret in 
our problem.
In contrast, our \Cref{alg:loglog T}
uses exploring phase I to resolve this issue.

When the size of state space (i.e., $\totalstate$) is large,
\Cref{alg:loglog T} 
incurs an O($\totalstate!$) regret dependence,
which may not be ideal.
A natural question is whether we can 
improve the dependence on $\totalstate$
to $\poly(\totalstate)$.
To answer this question, one natural attempt is to 
revisit
the multi-dimension generalization of 
the single-item dynamic pricing problem --
contextual dynamic pricing problem,
in which 
\citet{LS-18} design
a contextual dynamic pricing mechanism with 
$O(\poly(\totalstate)\log\log\totaltime)$ regret.
In the contextual dynamic pricing problem,
the item has $\totalstate$ features,
and buyers have private value $\optval(i)$
for each feature $i$. In each round $t\in[\totaltime]$, 
the nature selects a vector 
$(x_t(1), \dots, x_t(\totalstate))\in\reals_{\geq0}^\totalstate$,
and 
the seller wants to sell a new unit of the item 
by setting a price $\price_t$ to buyer $t$,
who will buy the item (and pay $\price_t$) 
if and only if 
$\sum_{i\in[\totalstate]}\optval(i)x_t(i) \geq \price_t$.

The contextual dynamic pricing problem 
shares some similarity to our problem with multiple states.
Specifically,
there is an unknown vector $\{\optval(i)\}$
(resp. $\{\costState(i)\}$),
and the optimum in hindsight benchmarks 
can be formulated as similar linear programs depending
on $\{\optval(i)\}$
(resp. $\{\costState(i)\}$).
Nonetheless, there seems to be fundamental differences
between the two problems besides the 
probabilistic and limited feedback feature 
mentioned before.
In particular, in each round,
the contextual dynamic pricing mechanism
chooses a price $\price_t$
which is a scalar,
while the online Bayesian recommendation policy
chooses a signaling scheme 
(i.e., a high-dimensional function).
In \citet{LS-18},
authors obtain $O(\poly(\totalstate)\log\log \totaltime)$
regret by formulating 
the contextual dynamic pricing as 
solving linear programs with a separation oracle.\footnote{In particular, vector $(x_t(1), \dots, x_t(\totalstate))$
is served as 
the separating hyperplane for the oracle.}
However, in our 
problem, 
it is unclear if such a simple separation oracle exists. 
In \Cref{sec:log T},
we introduce an online policy with
$\polymregret$
regret by formulating our problem as 
solving linear programs with a membership oracle.\footnote{Membership oracle is weaker than separation oracle. See more discussion between
the two oracles
in \Cref{sec_logn}.}

 \fi

\section{Extensions}
\label{sec_generalization}

In this section,
we briefly discuss three extensions:
(i) the platform has a state-dependent utility function,
(ii) users have an unknown misspecified prior belief, and
(iii) the platform has access to different video categories and must decide which category to display and how.
For the first two extensions, the regret bounds for both \CRP\ and \LPRP\ remain valid with some algorithm modifications. In the third extension, the double logarithmic regret dependence still holds but is multiplied by the number of video categories.
More details and all formal results are in \Cref{apx:extension}.

\xhdr{Platform with state-dependent utility function}
Recall that our baseline model 
assumes that the platform's utility function 
$\SUtility$ is 
state-independent, i.e., $\SUtility(\theta, a) \equiv a$.
We can relax this assumption and consider a more general state-dependent platform's utility function $\SUtility:[\totalstate]\times\cA \rightarrow\reals$.
Additionally, we assume that $\SUtility(\realizedstatei, 0) = 0$
and $\SUtility(\realizedstatei, 1) \in [0, 1]$
for every state $\realizedstatei\in[\totalstate]$.
In this more general model, it can be verified that 
the regret guarantees in \Cref{thm:loglog T regret upper bound} and \Cref{thm:log T}
continue to hold for 
modified versions of \Cref{alg:loglog T} and \Cref{alg:log T}.
We also note that our state-dependent utility formulation also captures a scenario where the platform is a benevolent player that aims to maximize the social welfare (see \Cref{apx:extension} for detailed discussions).

\xhdr{Users with misspecified beliefs}
Our algorithm and results can be extended to the setting
where users have misspecified beliefs for the underlying state realization~\citep{AC-16}.
In particular, we allow that the users have a misspecified prior belief $\prior\primed\in\stochastic([\totalstate])$, which is unknown to the platform, while the platform has the prior belief $\prior\in\stochastic([\totalstate])$, which may not be necessarily the same as the users' prior belief $\prior\primed$.
In this setup, we can show that our \Cref{alg:loglog T} and \Cref{alg:log T}, and corresponding regret guarantees continue to hold.

\label{sec:which-video}

\xhdr{Which video to display}
Our results can also be generalized to a setting where there are different categories of videos, and each category could include videos that are in a particular genre or style. 
For example, sports video can be categorized based on the specific type of sport, e.g., soccer, basketball, baseball, American football, etc; or it can be categorized into sub-genres based on the content and purpose of the videos, e.g., highlights and clips, training and tutorials, interviews and profiles, etc.
Meanwhile, each category may include videos that have different characteristics (i.e., states). 
The users not only have different preferences over different states of the videos, but also have preferences over different video categories (e.g., a basketball fan may prefer watching videos in the basketball category). 
The platform initially does not know the user's preferences and can only display one video to the users at each time round to learn the user's preferences.\footnote{In practice, the platform may  display an assortment of videos to the user, 
with the user's choice of which video to watch potentially following a particular choice model.
However, to focus on how the platform can leverage its information advantage about the video state to persuade the user to watch the video, we simplify the scenario by abstracting away from the assortment modeling. 
Instead, we assume that the platform can only display one video at a time.}
The platform's goal is to maximize cumulative payoff. This involves identifying the video category with the highest payoff and optimizing recommendations within that category to achieve the best results. We show that our \CRP\ algorithm can still guarantee a double logarithmic expected regret of $O(\videoNum\log\log T)$, where $\videoNum$ represents the number of categories.

\section{Conclusions and Future Work}
\label{sec:conclusion}

In this paper, we have studied the online Bayesian recommendation
problem with featuring a two-sided information asymmetry 
where the platform knows the payoff-relevant state
but does not know the user's preference (and belief),
and the user knows his preference but is uncertain 
about the payoff-relevant state. 
Focusing on policies that minimize the Stackelberg regret,
we present two algorithms. 
The first algorithm 
is a conservative recommendation policy (\CRP).
We show that this algorithm can achieve $O(\log\log \totaltime)$ regret when the platform knows the user's ordinal preference over the states. Moreover, this 
algorithm can also be readily adapted to the setting 
with unknown ordinal preference. In particular, the same regret $O(\log\log \totaltime)$ can be achieved when the user's preference is affine with respect to the state, and regret $O(\totalstate2^{\totalstate-1}\cdot \log\log \totaltime)$ can be achieved for arbitrary preference. 
Our second algorithm is a linear programming-based algorithm
(\LPRP) that utilizes the problem structure and can achieve
$\polymregret$ regret,
which is more desired when the 
number of states $m$ is large and 
the user's preference is arbitrary. 

\label{apx:future direction}
Our research opens a number of interesting and challenging  questions for future research.

\xhdrQ{Better regret dependency on state space for unknown ordinal preference}
Our lower bound only establishes regret dependency 
on the time horizon $T$, however, it does not rule out the possibility to design an algorithm with achieving $O(\poly(m)\cdot \log\log T)$ regret for unknown ordinal preference setting.
Thus, for this setting, it would be interesting to explore whether one can tighten up the lower bound, or design an algorithm whose regret has only polynomial dependency on the number of states $m$ with still double-logarithmically depending on time horizon $T$. 
Making progress in this direction likely requires 
a more judicious characterization of the 
underlying geometry of our online problem. 

\xhdr{Extension with users' heterogeneity}
In the paper, we focus on the setting where the platform
is interacting with users that have the same preference and the same belief. 
Yet, in some applications, users may be heterogeneous and have different preferences or beliefs over the payoff-relevant states.
Thus, an important extension of our problem with significant practical implications would be to consider a setting that captures the users' heterogeneity. 
In particular, one actionable extension to capture users' heterogeneity is to associate each incoming user at time $t$ with a context $\context_t\in\contextSpace\subseteq \R^d$ (observed by the platform) where $\contextSpace$ is the context space.
The utility function of the user with the context $\context$ is given by $\RUtility: [\totalstate] \times \cA \times \contextSpace \rightarrow \R$.
For example, this utility function could be linear in the context, namely, $\RUtility(i, a, \context_t) 
= \unknownVec_{i, a}^\top\context_t$ where $\{\unknownVec_{i, a}\}_{i\in[\totalstate], a\in\cA}$ with each $\unknownVec_{i, a}\in\R^d$ are unknown vectors associated with each state action pair that the platform needs to learn through the interactions with the users.
With this formulation, our problem then connects to the well-studied contextual dynamic pricing literature. In this line of literature without market noise \citep{LRV-18,LS-18,CLP-20,LLS-21}, the unknown valuation $v$ of the buyer for the product is oftentimes assumed to be a linear function of the context $\context$ (including product and customer features/characteristics), namely, $v = \unknownVec^\top \context$ with some unknown vector $\unknownVec$.
Notably, the pricing algorithm developed for this problem is a non-trivial geometric extension of the conservative binary search proposed in \cite{KL-03}.
Building on the algorithmic insights we developed in \Cref{sec:loglog T}, we believe that it is possible to develop a counterpart contextual algorithm in our setting.
Given that this may require substantial efforts and is also beyond the scope of this work, we leave this practically important but also technically interesting problem as future work.

\xhdr{Generalized setting with multiple signals}
In our designed algorithms (see, e.g., \Cref{alg:loglog T}), the signaling scheme at each time round only uses at most two signals. 
This is largely motivated by the fact that (i) the users have binary action; (ii) and we consider a population of homogeneous users who share the same utilities and beliefs.
With these facts, the two signals suffice for the offline problem when the platform knew the user's preference and belief.
Yet, in practice, we may often observe that the platform uses more than two signals with more involved signaling structure (e.g., YouTube usually displays a rank- or order-list of videos to the users). 
The potential reasons behind this observation could be that 
(i) the users may usually follow a choice model that they have multiple actions that they can take; 
(ii) each user may have a private type, and the platform may need to design the signaling scheme without observing the user type.
It can be shown that either one of these reasons can lead to the optimal signaling scheme with possibly multiple signals and more involved signaling structure.
For example, the offline counterpart of the setting with private types corresponds to a public persuasion problem where the platform designs a public signaling scheme for all users with different types \citep{DX-17,can-22}, and the online problem of this setting requires learning the users' type distribution.
Exploring the problem that requires general signaling structure may need a different set of algorithmic ideas, thus, we leave this direction as future work.

\bibliography{mybib}

\renewcommand{\theHchapter}{A\arabic{chapter}}
\renewcommand{\theHsection}{A\arabic{section}}
\newpage
\ECSwitch
\ECDisclaimer

\section{The Correspondence Table}
\label{apx:table}

\begin{table}[H]
\renewcommand{\arraystretch}{1.0}
\centering
\begin{tabular}{|c|c|}
\toprule 
{\bf Recommendation}          & {\bf Persuasion} 
\\
\midrule
video characteristic/quality  &     state       \\ \hline
platform  &     sender         \\ \hline
user      &     receiver       \\ \hline
watch/accept   & action 1   \\ \hline
not watch/skip & action 0  \\ \hline
recommendation policy  (how to recommend) &     signaling scheme       \\ \hline
general/binary recommendation levels    &     general/binary signal space      \\ \hline
recommend watch/accept   & send signal 1 \\ \hline
recommend not watch/skip & send signal 0 \\ 
\hline
\end{tabular}
\vspace{10pt}
\caption{Correspondence between the terminology used in the recommendation and the terminology used in persuasion.}
\label{table:correspondence table}
\end{table}

\section{Connections to the (Contextual) Dynamic Pricing}
\label{apx:dynamic pricing}

\newcommand{\valUB}{\bar{v}}
\newcommand{\probFn}{f}
\newcommand{\CPP}{\texttt{ConPP}}
\newcommand{\CheckPurchase}{\texttt{CheckPurchase}}
\newcommand{\threshold}{\tau}
\newcommand{\peakthreshold}{\threshold\primed}

This section is presented as an auxiliary extension that connects our online Bayesian recommendation framework to variants of (contextual) dynamic pricing and highlights the broader applicability of our approach.\footnote{It is not needed for any result in the main paper: all theorems, proofs, and guarantees stated in the main text are self-contained and do not depend on this section.}
The purpose of this section is to provide additional intuition and to show how the ``Exploring Phase I'' (conservative exploration) idea can be applied to settings with censored or partial feedback. We emphasize that the pricing model studied here is stylized and is included primarily to make the conceptual connection transparent.

\subsection{Dynamic Pricing with Probabilistic Feedback}
\label{apx:subsec dp probabilistic feedback}
Motivated by \Cref{lem:dynamic pricing reduction}, this subsection provides an extension that connects our online Bayesian recommendation problem (with binary state) to a variant of single-item dynamic pricing. We model probabilistic feedback: after posting a price $\price$, the seller learns whether the buyer's value exceeds $\price$ (i.e., purchase/no-purchase) only with probability $f(\price)$, which can be interpreted as a platform-induced exposure/impression probability that depends on price competitiveness or visibility policies. This captures censored learning under exposure constraints and differs from the classical dynamic pricing model of \citet{KL-03}, where the seller always observes the outcome whenever a buyer arrives. We show that standard $O(\log\log\totaltime)$ pricing mechanisms can suffer substantially larger regret under such censoring, whereas the key ``Exploring Phase I'' idea in \CRP\ can be adapted to address this issue.

The formal definition of dynamic pricing problem with probabilistic feedback is as follows.

\begin{definition}[Dynamic pricing with probabilistic feedback]
\label{def:dynamic pricing with probabilistic feedback}
In this problem, the buyer has an unknown fixed valuation $\val \in [0, 1]$. The seller can post a price $\price_t\in [0, 1]$ at each time round~$t$. The feedback structure and the revenue are as follows: there exists right-continuous function~$\probFn: [0, 1] \rightarrow [0, 1]$ that is known by the seller. Given a price $\price_t\in[0, 1]$, with probability $\probFn(\price_t)$, the seller observes a binary feedback on $\indicator{\price_t\le \val}$ and if $\price_t\le v$, the seller obtains revenue $\price_t$, otherwise collects zero revenue; with probability $1 - \probFn(\price_t)$, the seller observes nothing and collects zero revenue. Thus, the seller's expected revenue at time $t$ is given by $\price_t \probFn(\price_t) \cdot \indicator{\price_t \le \val}$. The goal of the seller is to design a pricing algorithm that has low expected regret compared against the optimal fixed price $\price^*=\argmax_{\price\in[0, 1]} \price \probFn(\price)\cdot \indicator{\price\leq \val}$ that she would like to choose if she knew the buyer's value $\val$. 
\end{definition}
Notice that when $\probFn(p)\equiv 1$ for any price $p$, the above problem degenerates to the dynamic pricing problem with fixed unknown valuation studied in \cite{KL-03}.

One possible practical motivation for probabilistic feedback comes from a common form of censored/exposure-driven feedback that arises when a seller's posted price does not deterministically generate an informative interaction. Consider a seller who is deciding how to price her product on a third-party platform (e.g., Amazon or eBay), where many other sellers offer similar items. To optimize the experience for both buyers and third-party sellers, the platform may prioritize goods that are more ``competitively'' or ``neutrally'' priced: if the price $\price$ set by a seller is too high (resp.\ too low), the induced buyer surplus (resp.\ seller surplus) might be poor at that trading price, and the platform may reduce the product's impressions, ranking, or eligibility for certain traffic. \footnote{There is much empirical evidence showing that price can indeed impact the visibility of the product \citep{AmazonPriceVisibility}.}

In this environment, the seller only occasionally receives informative feedback, namely, whether 
$\price \le v$ via a purchase/no-purchase outcome, because such feedback is observed only when the product is actually shown to buyers. Equivalently, the probability $\probFn(\price)$ that the product gets exposed depends on the posted price, and conditional on exposure the seller observes whether a transaction occurs. This is exactly the probabilistic-feedback model in \Cref{def:dynamic pricing with probabilistic feedback}, where after posting $\price$ the seller learns the purchase indicator only with probability $\probFn(\price)$ (capturing exposure/visibility as a function of price competitiveness).

We note that a direct application of the conservative binary search proposed by \cite{KL-03} can lead to linear regret for the above problem even if the function $\probFn$ is increasing in price $\price$.
Instead, using an idea similar to {\CRP}, we design {\CPP} (\Cref{alg:loglog T dynamic pricing with probabilistic feedback}) that achieves $O(\log\log T)$ regret without any structural assumption on the 
probability function $\probFn$. To formally describe {\CPP}, we introduce three useful auxiliary notations. First, for every threshold $\threshold\in[0, 1]$, with slight abuse of notation, we define $\SExpUtility(\threshold)$ as the optimal expected revenue when the buyer's value is $\threshold$, i.e., 
\begin{align*}
    \SExpUtility(\threshold) = \max_{\price\in[0, 1]} \price\probFn(\price)\cdot \indicator{\price\leq \threshold}~.
\end{align*}
As a sanity check, $\SExpUtility(\threshold)$ is right-continuous, weakly increasing in threshold $\threshold$ and is between 0 and 1 for all $\threshold$. 
By definition, we can see that the per-round expected revenue in the optimum offline benchmark is $\SExpUtility(\val)$. For any threshold $\threshold\in[0, 1]$, with slight abuse of notation, we also define
\begin{align*}
    \price(\threshold) = \argmax_{\price\in[0, 1]} \price\probFn(\price)\cdot \indicator{\price\leq \threshold}~.
\end{align*}
Finally, similar to {\CheckPersu} used in our base model, we define a subroutine {\CheckPurchase} as follows. The input of subroutine {\CheckPurchase} is a per-round expected revenue $\SExpUtility\primed\in[0, 1]$. The goal is to check whether $\SExpUtility(\val)$ is weakly larger than input $\SExpUtility\primed$. To obtain this information, the subroutine first computes threshold $\threshold\primed = \min\{\threshold\in[0, 1]:\SExpUtility(\threshold) \geq \SExpUtility\primed\}$ that is the smallest threshold $\threshold$ with $\SExpUtility(\threshold) \geq \SExpUtility\primed$. (Threshold $\threshold\primed$ is well-defined since the function $\SExpUtility(\threshold)$ is right-continuous and weakly increasing.) Then the subroutine keeps posting a fixed price $\price\primed = \price(\threshold\primed)$ until a binary purchase feedback has been returned (or rounds exhausted). By construction, we have $\SExpUtility(\val) \geq \SExpUtility\primed$ if and only if the buyer purchases at price $\price\primed$. Moreover, the regret of subroutine {\CheckPurchase} can be bounded in the following lemma (similar to \Cref{lem:persuasive check regret} for our base model).
\begin{lemma}
\label{lem:dynamic pricing check regret}
    Given any $\SExpUtility\primed\in[0, 1]$, the expected regret of subroutine {\CheckPurchase} with input $\SExpUtility\primed$ is at most 
    $\frac{\SExpUtility(\val)}{\SExpUtility\primed} - \indicator{\SExpUtility\primed \leq \SExpUtility(\val)}$.
\end{lemma}
\begin{proof}
Let $\issuedrounds$ be the number of rounds used in subroutine {\CheckPurchase} with input $\SExpUtility\primed$.
We start by upper bounding $\expect{\issuedrounds}$.
Note that subroutine {\CheckPurchase} returns binary feedback with probability $\probFn(\price\primed)$ in each round, where price $\price\primed = \price(\threshold\primed)$ and threshold $\threshold\primed = \min\{\threshold\in[0, 1]:\SExpUtility(\threshold) \geq \SExpUtility\primed\}$ by construction. 
Therefore, 
\begin{align*}
    \expect{\issuedrounds} = \frac{1}{\probFn(\price\primed)}
    = \frac{\price\primed}{\SExpUtility(\threshold\primed)}
    \leq 
    \frac{1}{\SExpUtility\primed}~,
\end{align*}
where the second equality holds due to the definition of $\price\primed$, $\threshold\primed$ and $\SExpUtility(\cdot)$; and the inequality holds since $\price\primed \leq 1$ and $\SExpUtility(\threshold\primed) \geq \SExpUtility\primed$. Consequently, the expected regret can be upper bounded by
\begin{align*}
\expect{\issuedrounds} \cdot 
\left(\SExpUtility(\val)-\SExpUtility(\threshold\primed)\cdot \indicator{\SExpUtility(\threshold\primed) \leq \SExpUtility(\val)}\right)
& {} = 
\expect{\issuedrounds} \cdot 
\left(\SExpUtility(\val)-\SExpUtility(\threshold\primed)\cdot \indicator{\SExpUtility\primed \leq \SExpUtility(\val)}\right)
\\
& {} \leq 
\frac{\SExpUtility(\val)}{\SExpUtility\primed} - \indicator{\SExpUtility\primed \leq \SExpUtility(\val)}~,
\end{align*}
where the equality holds since $\indicator{\SExpUtility(\threshold\primed) \leq \SExpUtility(\val)} = \indicator{\SExpUtility\primed \leq \SExpUtility(\val)}$ by construction; and the inequality holds since $\expect{\issuedrounds} \leq \frac{1}{\SExpUtility\primed}$ and $\SExpUtility(\threshold\primed) \geq \SExpUtility\primed$ as we argued above.
\end{proof}

The formal description of {\CPP} is given in \Cref{alg:loglog T dynamic pricing with probabilistic feedback} and its regret guarantee is in \Cref{thm:loglog T dynamic pricing with probabilistic feedback}. Both the algorithm and the analysis are almost the same as the ones (\Cref{alg:loglog T} and \Cref{thm:loglog T regret upper bound}) for our base model. Nonetheless, we include them for completeness.

\begin{theorem}
\label{thm:loglog T dynamic pricing with probabilistic feedback}
For the dynamic pricing problem with probabilistic feedback (\Cref{def:dynamic pricing with probabilistic feedback}), 
there exists an algorithm (\CPP, see \Cref{alg:loglog T dynamic pricing with probabilistic feedback}) that has the expected regret of  at most $O(\log\log T)$.
\end{theorem}

\begin{proof}
We analyze  
the expected 
regret in 
exploring phase I,
exploring phase II,
and exploiting phase separately.
We first assume that \CPP\
finishes exploring phases I and II before $\totaltime$
rounds are exhausted.
A similar argument follows for the other case where 
exploring phase I or exploring phase II is completed 
due to the exhaustion of rounds.

\xhdr{Exploring phase I}
Let $K = -\lceil\log(\SExpUtility(\val))\rceil$.
By definition, 
$\CheckPurchase(2^{-k})=\False$
for $k \in [K - 1]$,
and 
$\CheckPurchase(2^{-K})) = \True$.
Thus, at the end of exploring phase I,
$\SExpUtilityUnderbar$ 
 is $2^{-K}$,
 and 
there are $K$ iterations
in the while loop.
For each iteration $k\in[K]$, 
subroutine $\CheckPurchase(2^{-k})$ 
is called once.
By \Cref{lem:dynamic pricing check regret}, the total expected regret is 
\begin{align*}
    \sum\nolimits_{k \in [K]}
    \frac{\SExpUtility(\val)}{2^{-k}
    }
    \overset{}{\leq} 
    \sum\nolimits_{k \in [K]}
    \frac{2^{-(K-1)}}{
    2^{-k}
    }
    =
    \sum\nolimits_{k\in[K]}
    2^{-(K - k - 1)}
    = O(1)
\end{align*}

\xhdr{Exploring phase II}
By construction, 
there are $O(\log\log \totaltime)$
iterations in the while loop.
Thus, it is sufficient to show 
the expected regret in each iteration 
is $O(1)$.

In each iteration $k$,
let $\ell\primed\in[S]$
be the index 
that subroutine $\CheckPurchase(L + \ell\primed\perc L)$ returns $\False$.
Invoking \Cref{lem:dynamic pricing check regret}, the expected regret in iteration $k$ 
is at most 
\begin{align*}
    &~\sum\nolimits_{\ell=1}^{\ell\primed - 1}
    \left(
    \frac{\SExpUtility(\val)}{L +\ell\perc L
    }
    -1
    \right)
    +
    \frac{\SExpUtility(\val)}{
L +\ell\primed\perc L
    }
\overset{(a)}{\leq} 
\sum\nolimits_{\ell=1}^{\ell\primed - 1}
    \left(
    \frac{R}{
L
    }
    -1
    \right)
    +
    \frac{R}{
L
    }
\overset{(b)}{\leq}
(S - 1) \frac{R - L}{L} + 2
\overset{(c)}{\leq}
~ \frac{(R - L)^2}{\perc L^2} + 2
\end{align*}
where 
inequality~(b) holds 
since $\SExpUtility(\val) \leq R$;
inequality~(c) holds since 
$\ell\primed \leq S$ and $R \leq 2 L$;
and
inequality~(d) holds since $S = 
\lfloor\frac{R - L}{\perc L}\rfloor$.

We finish this part 
by showing
$R - L \leq \sqrt{2\perc} L$
by induction.
Let $L^{(k)}, R^{(k)}$, $\stepperc^{(k)}$
and $\perc^{(k)}$
be the value of $L, R, \stepperc,\perc$
in each iteration $k$.
The claim is satisfied for
iteration $k = 1$,
since $R^{(1)} - L^{(1)} = 2\SExpUtilityUnderbar - \SExpUtilityUnderbar 
= L^{(1)}$
and $\perc^{(1)} = \sfrac{1}{2}$.
Suppose the claim holds for iteration $k - 1$.
Now, for iteration $k$, we know that 
$R^{(k)} - L^{(k)} = \perc^{(k-1)} L^{(k - 1)}
\leq \perc^{(k-1)} L^{(k)}
=
\sqrt{\stepperc^{(k)}} L^{(k)}
=
\sqrt{2\perc^{(k)}} L^{(k)}$,
which finishes the induction.

\xhdr{Exploiting phase}
In this phase, we know that 
$\price\primed \leq \val$ and 
$\SExpUtility(\price\primed) =  \SExpUtility(\threshold\primed) \geq 
\SExpUtility(\val) - \sfrac{1}{\totaltime}$,
which concludes the proof.
\end{proof}

\begin{algorithm}
\caption{\texttt{Con}servative \texttt{P}ricing \texttt{P}olicy (\CPP)}
\label{alg:loglog T dynamic pricing with probabilistic feedback}
\linespread{0.8}\selectfont
\SetAlgoLined\DontPrintSemicolon
\KwIn{number of rounds $\totaltime$}
\tcc{exploring phase I
--
identify 
per-round expected utility lower bound $\SExpUtilityUnderbar$
such that 
$\SExpUtilityUnderbar
\leq \SExpUtility(\val) 
\leq 2 \SExpUtilityUnderbar$}

Initialize $\SExpUtilityUnderbar \gets \frac{1}{2}$
\\
\While{
    $\CheckPurchase\left(\SExpUtilityUnderbar\right) =
    \False$}{
        $\SExpUtilityUnderbar\gets \frac{\SExpUtilityUnderbar}{2}$
    }
\tcc{exploring phase II --
identify a threshold $\threshold\primed$
such that $\SExpUtility(\threshold\primed) \geq \SExpUtility(\val) - \frac{1}{\totaltime}$}
Initialize
$R \gets 2\SExpUtilityUnderbar$,
 $L \gets \SExpUtilityUnderbar$,
$\stepperc \gets 1$
\\
\While{$R - L \geq \frac{1}{\totaltime}$
}{
    $\perc \gets \frac{\stepperc}{2}$,
    $S\gets \lfloor\frac{R - L}{\perc L}\rfloor$,
    $\ell \gets 1$. \\
    \While{
        $\CheckPurchase\left(L + \ell\perc L\right) =
        \True$
    }{
        $R\gets L + \ell\perc L$,
        $L \gets L + (\ell-1)\perc L$,
        $\stepperc \gets \perc^2$,
        $\ell \gets \ell +1$.
    }
}
Set threshold $\threshold\primed \gets 
 \min\{\threshold\in[0, 1]:\SExpUtility(\threshold) \geq \SExpUtility\primed\}$
 and 
price $\price\primed \gets \price(\threshold\primed)$.
\\
\tcc{exploiting phase}
Post pricing $\price\primed$
for all remaining rounds.
\end{algorithm}

\subsection{Comparing to Contextual Dynamic Pricing}
\label{apx:subsec contextual dp}
When the size of state space (i.e., $\totalstate$) is large,
\CRP\
incurs an $O(\totalstate\cdot 2^{\totalstate-1})$ regret dependence,
which may not be ideal.
A natural question is whether we can 
improve the dependence on $\totalstate$
to $\poly(\totalstate)$.
To answer this question, one natural attempt is to 
revisit
the multi-dimension generalization of 
the single-item dynamic pricing problem --
contextual dynamic pricing problem,
in which 
\citet{LS-18,LLS-21} design
a contextual dynamic pricing mechanism with 
$O(\poly(\totalstate)\log\log\totaltime)$ regret.
In the contextual dynamic pricing problem,
the item has $\totalstate$ features,
and buyers have private value $\optval(i)$
for each feature $i$. In each round $t\in[\totaltime]$, 
nature selects a vector 
$(x_t(1), \dots, x_t(\totalstate))\in\reals_{\geq0}^\totalstate$,
and 
the seller wants to sell a new unit of the item 
by setting a price $\price_t$ to buyer $t$,
who will buy the item (and pay $\price_t$) 
if and only if 
$\sum_{i\in[\totalstate]}\optval(i)x_t(i) \geq \price_t$.

The contextual dynamic pricing problem 
shares some similarity to our problem with multiple states.
Specifically,
there is an unknown vector $\{\optval(i)\}$
(resp. $\{\RUtilityDiff(i)\}$),
and the optimal in hindsight benchmarks 
can be formulated as similar linear programs depending
on $\{\optval(i)\}$
(resp. $\{\RUtilityDiff(i)\}$).
Nonetheless, there exist fundamental differences
between the two problems besides the 
probabilistic and limited feedback feature 
mentioned before.
In particular, in each round,
the contextual dynamic pricing mechanism
chooses a price $\price_t$
which is a scalar,
while the online Bayesian recommendation policy
chooses a signaling scheme 
(i.e., a high-dimensional function).
In \citet{LS-18}, the authors obtain $O(\poly(\totalstate)\log\log \totaltime)$
regret by formulating 
the contextual dynamic pricing as 
solving linear programs with a separation oracle.\footnote{In particular, vector $(x_t(1), \dots, x_t(\totalstate))$
is served as 
the separating hyperplane for the oracle.}
However, in our 
problem, 
it is unclear if such a simple separation oracle exists. 
In \Cref{sec:log T},
we introduce an online policy with
$\polymregret$
regret by formulating our problem as 
solving linear programs with a membership oracle.\footnote{Membership oracle is weaker than separation oracle. See more discussion between
the two oracles
in \Cref{sec_logn}.}

\section{Omitted Proofs in Section~\ref{sec:prelim}}
\label{apx:proofs in prelim}

In this section, we 
present the omitted proofs 
of \Cref{lem:persuasive costState},
\Cref{lem:persuasive check correctness},
and \Cref{lem:persuasive check regret}
in \Cref{sec:prelim}.

\persuasiveCostState*
\begin{proof}
When action $1$ is recommended,
the posterior distribution is $\posteriori(1,i) 
= 
\frac{
\prior(\realizedstatei)\scheme(\realizedstatei)
}{
\sum_{\realizedstatej\in[\totalstate]}
\prior(\realizedstatej)\scheme(\realizedstatej)
}$.
Thus, the \Receiver\ 
takes action 1 if and only if 
\begin{align*}
\frac{1}{\sum_{\realizedstatej\in[\totalstate]}\prior(\realizedstatej)\scheme(\realizedstatej)}
\sum_{\realizedstatei\in[\totalstate]}
\RUtility(\realizedstatei, 1)
\prior(\realizedstatei)\scheme(\realizedstatei)
\geq 
\frac{1}{\sum_{\realizedstatej\in[\totalstate]}\prior(\realizedstatej)\scheme(\realizedstatej)}
\sum_{\realizedstatei\in[\totalstate]}
\RUtility(\realizedstatei, 0)
\prior(\realizedstatei)\scheme(\realizedstatei)
\end{align*}
Rearranging the terms finishes the proof.
\end{proof}

\persuasiveCheckCorrectness*
\begin{proof}
By construction, Procedure~\ref{alg:persuasive check}
returns \True\ if $\signali = 1 = \actioni$,
which is exactly the definition of incentive compatibility.
Similarly,  Procedure~\ref{alg:persuasive check}
returns \False\ if $\signali = 1 = 1 - \actioni$
or $\signali = 0 = 1 - \actioni$.
The correctness of the former case holds due to
the definition of incentive compatibility.
To see the correctness of the latter case,
note that when action 0 is recommended, 
the \Receiver\ takes action 1 if and only if 
$\sum_{\realizedstatei\in[\totalstate]}
\costStatei(1 - \scheme(\realizedstatei)) \geq 0$.
Hence,
\begin{align*}
\sum_{\realizedstatei\in[\totalstate]}
\costStatei\scheme(\realizedstatei)
\leq 
\sum_{\realizedstatei\in[\totalstate]}
\costStatei
< 0
\end{align*}
where the last inequality holds due to \Cref{asp:receiver utility 2}.
Invoking \Cref{lem:persuasive costState}
finishes the proof.
\end{proof}

\persuasiveCheckRegret*
\begin{proof}
Let $\issuedrounds$
be the number of rounds used in 
Procedure~\ref{alg:persuasive check}.
We start by
upper bounding $\expect{\issuedrounds}$.
Note that 
Procedure~\ref{alg:persuasive check} 
returns if action 1 is recommended,
which happens with probability 
$\sum_{\realizedstatei}\prior(\realizedstatei)\scheme(\realizedstatei)$ in each round.
Thus, $\expect{\issuedrounds} \leq \frac{1}{
\sum_{\realizedstatei}
\prior(\realizedstatei)\scheme(\realizedstatei)}$,
and the expected regret is at most 
\begin{align*}
\expect{\issuedrounds} 
(\SExpUtility(\optoff)-\SExpUtility(\scheme))
\leq 
\frac{\SExpUtility(\optoff)}{
\sum_{\realizedstatei}\prior(\realizedstatei)\scheme(\realizedstatei)}
-
\frac{\SExpUtility(\scheme)}{
\sum_{\realizedstatei}\prior(\realizedstatei)\scheme(\realizedstatei)}
\leq
\frac{\SExpUtility(\optoff)}{
\sum_{\realizedstatei}\prior(\realizedstatei)\scheme(\realizedstatei)
}
-
\indicator{\text{$\scheme$ is \IC}}
\end{align*}
where the last inequality holds 
since $\SExpUtility(\scheme) 
\geq 
(\sum_{\realizedstatei}\prior(\realizedstatei)\scheme(\realizedstatei))
\cdot 
\indicator{\text{$\scheme$ is \IC}}$.
\end{proof} 
\section{Omitted Proofs and Algorithm in Section~\ref{sec:unknwon order}}
\label{apx:proofs unknwon order}

\label{apx-missing-algo}

\newcommand{\AlgoTwoNameUnknown}{\texttt{Con}servative \texttt{R}ecommendation \texttt{P}olicy (\CRP) for unknown order preference}
\begin{algorithm}
\caption{\AlgoTwoNameUnknown}
\label{alg:loglog T w unknown order}
\linespread{0.8}\selectfont
\SetAlgoLined\DontPrintSemicolon
\KwIn{number of rounds $\totaltime$, 
number of states $\totalstate$,
prior distribution $\prior$}
Initialize the set $\permus$ to 
contain all possible orders described as above.
\\
\tcc{exploring phase I
--
identify 
$\SExpUtilityUnderbar$
such that 
$\SExpUtilityUnderbar
\leq \SExpUtility(\optoff) 
\leq 2 \SExpUtilityUnderbar$}
Initialize $\SExpUtilityUnderbar \gets \frac{1}{2}$
\\
\While{\True
}{
    \If{there exists $\permu\in \permus$ such that 
    $\CheckPersu\left(\scheme^{(\permu,\SExpUtilityUnderbar)}\right) =
    \True$}{
        $\permus\gets
        \left\{\permu \in\permus:
        \CheckPersu\left(\scheme^{(\permu,\SExpUtilityUnderbar)}\right) = 
        \True
        \right\}$\\
        \cc{break}
    }
    \uElse{
    $\SExpUtilityUnderbar\gets \frac{\SExpUtilityUnderbar}{2}$
    }
}
\tcc{exploring phase II --
identify a signaling scheme $\scheme\primed$
such that $\SExpUtility(\scheme\primed) \geq \SExpUtility(\optoff) - \frac{1}{\totaltime}$}
Initialize
$R \gets 2\SExpUtilityUnderbar$,
 $L \gets \SExpUtilityUnderbar$,
$\stepperc \gets 1$
\\
\While{$R - L \geq \frac{1}{\totaltime}$
}{
    $\perc \gets \frac{\stepperc}{2}$,
    $S\gets \lfloor\frac{R - L}{\perc L}\rfloor$
    \\
    \For{$\ell = 1, 2, \dots, S$}{
        \If{there exists $\permu\in \permus$ such that 
        $\CheckPersu\left(\scheme^{(\permu,L + \ell\perc L)}\right) =
        \True$}{
            $\permus\gets \left\{\permu\in\permus:
            \CheckPersu\left(\scheme^{(\permu,L + \ell\perc L)}\right) =
            \True
        \right\}$
        }\uElse{
            $R\gets L + \ell\perc L$,
            $L \gets L + (\ell-1)\perc L$,
            $\stepperc \gets \perc^2$\\
            \cc{break}
        }
    }
}
Set $\scheme\primed \gets \scheme^{(\permu,L)}$
for an arbitrary $\permu\in \permus$
\\
\tcc{exploiting phase}
Use signaling scheme $\scheme\primed$
for all remaining rounds.
\end{algorithm}

\oglogTregretupperboundunknownorder*
\begin{proof}
We analyze  
the expected 
regret in 
exploring phase I,
exploring phase II,
and exploiting phase separately.
We first assume that \Cref{alg:loglog T w unknown order}
finishes exploring phase I and II before $\totaltime$
rounds are exhausted.
A similar argument follows for the other case where 
exploring phase I or exploring phase II is completed 
due to the exhaustion of rounds.

\xhdr{Exploring phase I}
Let $K = -\lceil\log(\SExpUtility(\optoff))\rceil$.
By definition, 
$\CheckPersu(\scheme^{(\permu,2^{-k})})=\False$
for all $\permu\in\permus$
and $k \in [K - 1]$,
and 
$\CheckPersu(\scheme^{(\permuTrue,2^{-K})}) = \True$.
Thus, at the end of exploring phase I,
$\SExpUtilityUnderbar$ 
 is $2^{-K}$,
 and 
there are $K$ iterations
in the while loop.
For each iteration $k\in[K]$, 
$\CheckPersu(\scheme^{(\permu,2^{-k})})$ 
is called 
for every $\permu\in\permus$.
By \Cref{lem:persuasive check regret}, the total expected regret is 
\begin{align*}
    \sum_{k \in [K]}
    \sum_{\permu\in\permus}
    \frac{\SExpUtility(\optoff)}{
    \sum_{\realizedstatei\in[\totalstate]}
\prior(\realizedstatei)\scheme^{(\permu,2^{-k})}(\realizedstatei)
    }
    \overset{(a)}{\leq} 
    \sum_{k \in [K]}
    \sum_{\permu\in\permus}
    \frac{2^{-(K-1)}}{
    2^{-k}
    }
    =
    |\permus|\sum_{k\in[K]}
    2^{-(K - k - 1)}
    \leq
    4\cdot \totalstate!
\end{align*}
where the denominator 
in the right-hand side 
of inequality~(a) is due to the 
construction of $\scheme^{(\permu,2^{-k})}$.

\xhdr{Exploring phase II}
By construction, 
there are $O(\log\log \totaltime)$
iterations in the while loop.
Thus, it is sufficient to show 
the expected regret in each iteration 
is $O(\totalstate2^{\totalstate-1})$.

In each iteration $k$,
for every $\permu \in \permus$,
let $\ell\primed\in[S]$
be the smallest index 
that the signaling scheme
$\scheme^{(\permu,L + \ell\primed\perc L)}$
is not \IC.
The expected regret in iteration $k$ 
for $\permu\in \permus$
is at most 
\begin{align*}
    &~\sum_{\ell=1}^{\ell\primed - 1}
    \left(
    \frac{\SExpUtility(\optoff)}{
    \sum_{\realizedstatei\in[\totalstate]}
\prior(\realizedstatei)\scheme^{
(\permu,
L +\ell\perc L)}(\realizedstatei)
    }
    -1
    \right)
    +
    \frac{\SExpUtility(\optoff)}{
    \sum_{\realizedstatei\in[\totalstate]}
\prior(\realizedstatei)
\scheme^{
(\permu,
L +\ell\primed\perc L)}
(\realizedstatei)
    }
    \\
    \overset{(a)}{=} &
    \sum_{\ell=1}^{\ell\primed - 1}
    \left(
    \frac{\SExpUtility(\optoff)}{
L +\ell\perc L
    }
    -1
    \right)
    +
    \frac{\SExpUtility(\optoff)}{
L +\ell\primed\perc L
    }
\overset{(b)}{\leq} 
\sum_{\ell=1}^{\ell\primed - 1}
    \left(
    \frac{R}{
L
    }
    -1
    \right)
    +
    \frac{R}{
L
    }
\overset{(c)}{\leq}
(S - 1) \frac{R - L}{L} + 2
\overset{(d)}{\leq}
~ \frac{(R - L)^2}{\perc L^2} + 2
\end{align*}
where equality~(a) holds due to 
the construction of 
$\scheme^{
(\permu,
L +\ell\perc L)}$
and $\scheme^{
(\permu,
L +\ell\primed\perc L)}$;
inequality~(b) holds 
since $\SExpUtility(\optoff) \leq R$;
inequality~(c) holds since 
$\ell\primed \leq S$ and $R \leq 2 L$;
and
inequality~(d) holds since $S = 
\lfloor\frac{R - L}{\perc L}\rfloor$.

We finish this part 
by showing
$R - L \leq \sqrt{2\perc} L$
by induction.
Let $L^{(k)}, R^{(k)}$, $\stepperc^{(k)}$
and $\perc^{(k)}$
be the value of $L, R, \stepperc,\perc$
in each iteration $k$.
The claim is satisfied for
iteration $k = 1$,
since $R^{(1)} - L^{(1)} = 2\SExpUtilityUnderbar - \SExpUtilityUnderbar 
= L^{(1)}$
and $\perc^{(1)} = \sfrac{1}{2}$.
Suppose the claim holds for iteration $k - 1$.
Now, for iteration $k$, we know that 
$R^{(k)} - L^{(k)} = \perc^{(k-1)} L^{(k - 1)}
\leq \perc^{(k-1)} L^{(k)}
=
\sqrt{\stepperc^{(k)}} L^{(k)}
=
\sqrt{2\perc^{(k)}} L^{(k)}$,
which finishes the induction.

\xhdr{Exploiting phase}
By construction, $\permus$ is not empty
at the end of exploring phase II, and 
thus signaling scheme $\permu\primed$
is well-defined.
Additionally, we know that 
$\scheme\primed$ is \IC\ and 
$\SExpUtility(\scheme\primed) \geq 
\SExpUtility(\optoff) - \sfrac{1}{\totaltime}$,
which concludes the proof.
\end{proof}

\section{Missing Technical Details for Alternative LP-based Algorithm}
\label{apx:proofs in log T}

We present a formal description of \LPRP\ below (see \Cref{alg:log T}). 
In this algorithm,
three specific subclasses 
of direct signaling schemes
$\{\schemeanchorpair\}$, $\{\schemeanchorpairk\}$,
and $\{\schemestartingpoint\}$
are used,
whose constructions are as follows.
We note that the
aforementioned signaling scheme subset 
$\anchorsetHat$ in the
algorithm overview is 
not explicitly defined in \LPRP.
Its formal definition is 
$\anchorsetHat\triangleq
\{\schemeanchorpair \text{ induced by}
(\realizedstateanchori,\realizedstateanchorj)\in
\anchorset\}$.

\mybox{
\setlength{\itemsep}{1pt}
  \setlength{\parskip}{1pt}
  \setlength{\parsep}{1pt}
  \openup 0.3em
\vspace{1mm}
\begin{itemize}[leftmargin=*]
\setlength{\itemsep}{1pt}
\item 
Given $(\realizedstateanchori,
\realizedstateanchorj,\SExpUtilityUnderbar)$,
let 
$\schemeanchorpair$ denote a direct signaling scheme 
with
    $\schemeanchorpair(\realizedstateanchori)= 1$,
    $\schemeanchorpair(\realizedstateanchorj)= \frac{\SExpUtilityUnderbar}
    {\prior(\realizedstateanchorj)}$ if $\realizedstateanchorj\not=\realizedstateanchori$,
    and 
    $\schemeanchorpair(\realizedstatei)= 0$
    for every $\realizedstatei\not
    \in\{\realizedstateanchori,\realizedstateanchorj\}$.

\item 
Given $(\realizedstateanchori,
\realizedstateanchorj,\SExpUtilityUnderbar,\realizedstatei)$,
let 
$\schemeanchorpairk$ denote a direct signaling scheme 
with
    $\schemeanchorpairk(\realizedstateanchori)= 1$,
    $\schemeanchorpairk(\realizedstateanchorj)= \frac{\SExpUtilityUnderbar}
    {2\prior(\realizedstateanchorj)}$ 
    if $\realizedstateanchorj\not=\realizedstateanchori$,
    $\schemeanchorpairk(\realizedstatei)=
    \frac{3}{2\totalstate\totaltime}$,
    and 
    $\schemeanchorpairk(\realizedstatej)= 0$
    for every $\realizedstatej\not
    \in\{\realizedstateanchori,\realizedstateanchorj,\realizedstatei\}$.

\item 
Given $(\realizedstateanchori,
\realizedstateanchorj,\SExpUtilityUnderbar,\meaningfulset)$,
let
$\schemestartingpoint$
denote a direct signaling scheme 
with
$\schemestartingpoint(\realizedstateanchori) = \frac{1}{2}
+\frac{1}{16\totalstate^2\totaltime}$,
$\schemestartingpoint(\realizedstateanchorj) =
\frac{\SExpUtilityUnderbar}{8\prior(\realizedstateanchorj)}$
if $\realizedstateanchorj\not=\realizedstateanchori$,
$\schemestartingpoint(\realizedstatei) = \frac{1}{8\totalstate^2\totaltime}$
for every $\realizedstatei\in\meaningfulset\backslash
\{\realizedstateanchori,\realizedstateanchorj\}$,
and $\schemestartingpoint(\realizedstatei) = 0$
for every $\realizedstatei\not\in\meaningfulset$.
\end{itemize}
}

In the remainder of this section, we explain each phase of \Cref{alg:log T}, i.e., \LPRP,
in detail.
By combining the regret analysis in all phases, 
we prove \Cref{thm:log T}
in the end of this section.

\SetKwInput{KWSchemeI}{Notation $\schemeanchorpair$}
\SetKwInput{KWSchemeII}{Notation $\schemeanchorpairk$}
\SetKwInput{KWSchemeStart}{Notation $\schemestartingpoint$}
\begin{algorithm}[ht]
\linespread{0.8}\selectfont
\caption{\texttt{LP}-based \texttt{R}ecommendation \texttt{P}olicy (\LPRP)}
\label{alg:log T}
\SetAlgoLined\DontPrintSemicolon
\KwIn{number of rounds $\totaltime$, 
number of states $\totalstate$,
prior distribution $\prior$,
and 
linear program solver $\lpsolver$ \citep{LSV-18}
with membership oracle access
}

\tcc{exploring phase I}
Initialize $\SExpUtilityUnderbar \gets \frac{1}{2}$
\\
\While{$\SExpUtilityUnderbar\geq\frac{1}{\totalstate^2\totaltime}$
}{
    \If{there exists
    $(\realizedstateanchori,\realizedstateanchorj)
    \in[\totalstate]\times[\totalstate]$
    such that
    $\CheckPersu\left(
    \schemeanchorpair
    \right) =
    \True$}{
        $\anchorset \gets
        \left\{(\realizedstateanchori,\realizedstateanchorj)
    \in[\totalstate]\times[\totalstate]:
        \CheckPersu\left(
       \schemeanchorpair
    \right) =
    \True
        \right\}$\\
        \cc{break}
    }
    \uElse{
    $\SExpUtilityUnderbar\gets \frac{\SExpUtilityUnderbar}{2}$
    }
}
\If{$\anchorset = \emptyset$}{
Set $\scheme\primed:[\totalstate]\rightarrow\{0\}$ to be the signaling scheme
which reveals no information.
\\
Move to exploiting phase.
}
\tcc{exploring phase II}
Initialize
$\meaningfulset \gets \emptyset $
\\
\For{each state pair $(\realizedstateanchori,\realizedstateanchorj)\in\anchorset$}{
$\meaningfulset\gets\meaningfulset\cup
\left\{\realizedstatei\in[\totalstate]:
\realizedstatei=\realizedstateanchori\,\,
\mbox{or}\,\,
\realizedstatei=\realizedstateanchorj\,\,
\mbox{or}\,\,
     \CheckPersu\left(\schemeanchorpairk\right) = \True
\right\}$
}
\tcc{exploring phase III}
\For{for each pair $(\realizedstateanchori,\realizedstateanchorj)
\in\anchorset$
}{
Solve $\scheme^{(\realizedstateanchori,\realizedstateanchorj)
}$
by linear program solver $\lpsolver$ 
for program~\ref{eq:optimum in hindsight within meaningfulset}:
set the interior
point $x\zeroed\gets 
\schemestartingpoint$,
lower-bound radius $r\gets \frac{1}{16\totalstate^2\totaltime}$,
upper-bound radius $R \gets \sqrt{\totalstate}$,
precision $\epsilon\gets \frac{1}{\totaltime}$
and success probability $\delta \gets \frac{1}{\totaltime}$.
}
Set $\scheme\primed$
be best $\scheme^{(\realizedstateanchori,\realizedstateanchorj)
}$
\big(i.e., maximizing $\SExpUtility(^{(\realizedstateanchori,\realizedstateanchorj)
})$
\big)
for all $(\realizedstateanchori,\realizedstateanchorj)\in
\anchorset$
\\
\tcc{exploiting phase}
Use signaling scheme $\scheme\primed$
for all remaining rounds.
\end{algorithm}

\vspace{10pt} 

\xhdr{The analysis of exploring phase I}
We use the following lemma to characterize 
exploring phase~I.
\begin{lemma}[restatement of \Cref{lem:existence of anchor pair main}]
\label{lem:existence of anchor pair}
Suppose $\SExpUtility(\optoff) \geq \frac{1}{\totaltime}$.
Let $\realizedstateanchori =
\argmax_{\realizedstatei\in[\totalstate]}\costStatei$.
When exploring phase I terminates, 
$\SExpUtilityUnderbar \geq
\frac{\SExpUtility(\optoff)}{\totalstate^2}$
and there exists a state $\realizedstateanchorj\in[\totalstate]$
such that $(\realizedstateanchori,\realizedstateanchorj)\in
\anchorset$.
\end{lemma}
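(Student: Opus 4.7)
The plan is to identify a specific persuasive pair that forces the while loop in exploring phase I to terminate at its first iteration, and then read off both clauses of the lemma from that termination. The natural candidate is the degenerate pair $(i^*, i^*)$.

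I would first unfold $\schemeanchorpair$ for $j' = i' = i^*$ using the notation box: the clause assigning $\schemeanchorpair(j') = \SExpUtilityUnderbar/\prior(j')$ is gated on $j' \neq i'$ and so does not apply, leaving the single-state scheme $\scheme^{(i^*,i^*)}(i^*) = 1$ with $\scheme^{(i^*,i^*)}(k) = 0$ for $k \neq i^*$, independently of $\SExpUtilityUnderbar$. By \Cref{lem:persuasive costState} this scheme is persuasive iff $\sum_k \costState(k)\,\scheme^{(i^*,i^*)}(k) = \costState(i^*) \geq 0$, and strict positivity follows because $i^*$ maximizes $\costStatei$ and \Cref{asp:receiver utility 1} guarantees at least one state with positive $\costState$. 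By \Cref{lem:persuasive check correctness}, $\CheckPersu(\scheme^{(i^*,i^*)})$ therefore returns $\True$ whenever rounds remain, so the very first iteration of the while loop (at $\SExpUtilityUnderbar = 1/2$) enters the \textbf{if}-branch, places $(i^*, i^*)$ into $\anchorset$, and breaks. This directly gives the second clause with $j' := i^*$.

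For the first clause, I would note that \Cref{asp:receiver utility 1} and \Cref{asp:receiver utility 2} jointly rule out $m = 1$ (else $\costState(1)$ would be both strictly positive and strictly negative), so $m \geq 2$; combined with the trivial bound $\SExpUtility(\optoff) \leq 1$ one gets $\SExpUtility(\optoff)/m^2 \leq 1/4 \leq 1/2 = \SExpUtilityUnderbar$. The hypothesis $\SExpUtility(\optoff) \geq 1/T$ is not invoked here --- it merely separates the algorithm's trivial $O(1)$-regret regime used elsewhere.

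The only subtlety, and the main obstacle I anticipate, is confirming that the notation box genuinely covers the degenerate case $j' = i'$ with a well-defined, valid, and persuasive scheme; once that edge case is read correctly, the rest of the argument is a mechanical application of \Cref{lem:persuasive check correctness} together with elementary arithmetic. If a reader prefers the algorithm to enumerate only pairs with $j' \neq i'$, the same bound can be recovered via a pigeonhole step on \Cref{lem:optimum in hindsight characterization}: pick $j^* \in [m]\setminus\{i^*\}$ maximizing $\prior(j^*)\optoff(j^*)$, round this quantity down to the nearest halved value of $\SExpUtilityUnderbar$, and show that $\scheme^{(i^*,j^*)}$ at that $\SExpUtilityUnderbar$ inherits persuasiveness from $\optoff$ by linearity of the IC constraint, absorbing the factor-two halving loss into the $1/m^2$ slack.
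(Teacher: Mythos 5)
There is a genuine gap, on both of the routes you offer. Your primary route assumes the algorithm queries the degenerate pair $(\realizedstateanchori,\realizedstateanchori)$ at the very first iteration, $\SExpUtilityUnderbar=\tfrac12$. Under that reading the lemma is indeed trivially true, but that reading cannot be the intended one: by \Cref{lem:persuasive check regret}, running $\CheckPersu$ on the scheme that recommends action $1$ only in state $\realizedstateanchori$ costs expected regret about $\SExpUtility(\optoff)/\prior(\realizedstateanchori)$, which is $\Omega(\totaltime)$ when $\prior(\realizedstateanchori)$ is tiny, contradicting the $O(\totalstate^4)$ bound of \Cref{lem:regret log T phase I}; that lemma's proof makes explicit that only pairs with $\prior(\realizedstateanchorj)\geq\SExpUtilityUnderbar$ are queried, and downstream lemmas likewise need $\SExpUtilityUnderbar\leq\prior(\realizedstateanchorj)$ for the surviving anchor pairs. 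Under the consistent reading, $(\realizedstateanchori,\realizedstateanchori)$ only becomes available once $\SExpUtilityUnderbar\leq\prior(\realizedstateanchori)$, and since $\prior(\realizedstateanchori)$ can be far smaller than $\SExpUtility(\optoff)/\totalstate^2$, this pair alone does not yield the first clause. Your fallback has two quantitative problems. First, restricting the pigeonhole to $j^*\in[\totalstate]\setminus\{\realizedstateanchori\}$ loses the guarantee $\prior(j^*)\optoff(j^*)\geq\SExpUtility(\optoff)/\totalstate$, because the maximizer of $\prior(j)\optoff(j)$ may be $\realizedstateanchori$ itself (the paper takes the argmax over all states and treats $j'=\realizedstateanchori$ as a sub-case). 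Second, a factor-two reduction of $\optoff(j^*)$ does not preserve persuasiveness: from $\sum_i\costState(i)\optoff(i)\geq 0$ one only gets $-\costState(j^*)\optoff(j^*)\leq(\totalstate-1)\costState(\realizedstateanchori)$, so one must scale $\optoff(j^*)$ down by $\totalstate-1$, not by $2$, to make $\costState(\realizedstateanchori)+\costState(j^*)\optoff(j^*)/(\totalstate-1)\geq 0$ (a four-state example with three states of $\costState=1$ and one of $\costState=-3.1$ kills the factor-two version). This $1/(\totalstate-1)$ scaling is exactly where the paper's $\SExpUtility(\optoff)/(\totalstate(\totalstate-1))$ threshold, hence the $1/\totalstate^2$, comes from.

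Separately, neither route establishes the second clause in the general case. The loop may terminate at some $\SExpUtilityUnderbar$ because an arbitrary pair $(\realizedstatei',\realizedstatej')$ passes $\CheckPersu$; one must then show that some pair whose \emph{first} coordinate is $\realizedstateanchori$ also passes at that same $\SExpUtilityUnderbar$, so that it is collected into $\anchorset$. The paper does this with a monotonicity argument: replacing the first coordinate by $\realizedstateanchori=\argmax_{\realizedstatei}\costStatei$ only relaxes the IC constraint, so among all two-state persuasive schemes of this form the best first coordinate is $\realizedstateanchori$. Your proof only gets this clause under the ``terminates at the first iteration'' reading, which, as above, is not tenable.
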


\begin{proof}
Let state $\realizedstatej' =
\argmax_{\realizedstatej\in[\totalstate]}
\prior(\realizedstatej)\optoff(\realizedstatej)$,
namely, $\realizedstatej'$
is the state that contributes the most to
$\SExpUtility(\optoff)$.
Consider a direct signaling scheme $\scheme$
with $\scheme(\realizedstateanchori) = 1$,
$\scheme(\realizedstatej') = \frac{\optoff(\realizedstatej')}{\totalstate - 1}$
if $\realizedstatej'\not=\realizedstateanchori$,
and $\scheme(\realizedstatei) = 0$
for every $\realizedstatei\not\in\{
\realizedstateanchori,\realizedstatej'\}$.
We claim that $\scheme$ is \IC. To see this,
note that if $\realizedstatej'\not=\realizedstateanchori$,
\begin{align*}
\sum\nolimits_{\realizedstatei\in[\totalstate]}
\costStatei\scheme(\realizedstatei)
&=
\costState(\realizedstateanchori)
\scheme(\realizedstateanchori)
+
    \costState(\realizedstatej')
    \scheme(\realizedstatej')
    = 
    \costState(\realizedstateanchori) 
+
\frac{1}{\totalstate - 1}
    \costState(\realizedstatej')
    \optoff(\realizedstatej')
    \\
    &\overset{(a)}{\geq}
    \frac{1}{\totalstate-1}
    \sum\nolimits_{\realizedstatei\in[\totalstate]:
    \realizedstatei\not=\realizedstatej'}
    \costState(\realizedstatei)\optoff(\realizedstatei)
+
\frac{1}{\totalstate - 1}
    \costState(\realizedstatej')
    \optoff(\realizedstatej')
    \overset{(b)}{\geq} 0
\end{align*}
where inequality~(a) holds since
$\costState(\realizedstateanchori)\geq
\costState(\realizedstatei)\optoff(\realizedstatei)$
for all $\realizedstatei\in[\totalstate]$ by definition,
and inequality~(b) holds since $\optoff$ is \IC.
A similar argument holds for $\realizedstatej'=\realizedstateanchori$.
Moreover, we know that
\begin{align*}
    \SExpUtility(\scheme) \geq 
    \prior(\realizedstatej')\scheme(\realizedstatej')
    \geq 
    \frac{1}{\totalstate - 1}
     \prior(\realizedstatej')\optoff(\realizedstatej')
     \overset{(a)}{\geq} 
     \frac{1}{\totalstate - 1}
     \frac{1}{\totalstate}
     \sum\nolimits_{\realizedstatei\in[\totalstate]}
     \prior(\realizedstatei)\optoff(\realizedstatei)
     =
     \frac{1}{\totalstate(\totalstate-1)}
     \SExpUtility(\optoff)
\end{align*}
where inequality~(a) holds due to the definition
of state $\realizedstatej'$.

The existence of the \IC\ signaling scheme $\scheme$
constructed above 
implies that when the if-condition (line 3
in \LPRP)
is
satisfied if $\SExpUtilityUnderbar < \frac{\SExpUtility}{\totalstate(\totalstate - 1)}$.
Hence, if $\SExpUtility\geq \frac{1}{\totaltime}$,
when exploring phase I terminates, 
$\SExpUtilityUnderbar \geq
\frac{\SExpUtility(\optoff)}{\totalstate(\totalstate - 1)}
\geq 
\frac{\SExpUtility(\optoff)}{\totalstate^2}$.

Next, we argue the second part of the lemma statement --
when exploring phase I terminates,
there exists a state $\realizedstateanchorj\in[\totalstate]$
such that $(\realizedstateanchori,\realizedstateanchorj)\in
\anchorset$.
For each pair of states $(\realizedstatei'',\realizedstatej'')$
such that $\costState(\realizedstatei'') \geq 0$,
consider a direct signaling scheme 
$\scheme^{(\realizedstatei'',\realizedstatej'')}$
with 
\begin{align*}
\scheme^{(\realizedstatei'',\realizedstatej'')}
(\realizedstatei'') &= 1
\qquad
\scheme^{(\realizedstatei'',\realizedstatej'')}
(\realizedstatej'')= 
\prior(\realizedstatej)
\left(
\indicator{\costState(\realizedstatej) \geq 0}
+
\min\left\{
\frac{-\costState(\realizedstateanchori)}{
\costState(\realizedstatej)},
1\right\}
\cdot 
\indicator{\costState(\realizedstatej) < 0}
\right)
\\
\scheme^{(\realizedstatei'',\realizedstatej'')}
(\realizedstatei) &= 0
\qquad
\text{for every state }\realizedstatei\not\in\{
\realizedstatei'',\realizedstatej''\}
\end{align*}
Namely, 
$\scheme^{(\realizedstatei'',\realizedstatej'')}$
is the direct \IC\ signaling scheme that maximizes
$\prior(\realizedstatej'')\scheme(\realizedstatej'')$
when action 1 is only allowed to be recommended in 
state $\realizedstatei''$ or $\realizedstatej''$.
By definition, among all pairs of states 
$(\realizedstatei'', \realizedstatej'')$,
the pair that maximizes $\prior(\realizedstatej'')\scheme(\realizedstatej'')$
must be $\realizedstatei'' = \realizedstateanchori$,
which shows the second part of the lemma statement.
\end{proof}

Due to \Cref{lem:existence of anchor pair},
when exploring phase I terminates,
$\SExpUtilityUnderbar \geq
\frac{\SExpUtility(\optoff)}{\totalstate^2}$.
This enables us to build the regret bound
of exploring phase I as follows.
\begin{lemma}
\label{lem:regret log T phase I}
In \LPRP,
the expected regret in exploring phase I
is at most $O(\totalstate^4)$.
\end{lemma}
\begin{proof}
Let $K_1 = -\log(\SExpUtility(\optoff))$,
and $K_2 = -\lceil\log\left(
\frac{\SExpUtility(\optoff)}{\totalstate^2}
\right)\rceil$.
By \Cref{lem:existence of anchor pair}
when exploring phase I terminates,
$\SExpUtilityUnderbar \geq 
\frac{\SExpUtility(\optoff)}{\totalstate^2}$,
and thus
there are at most $K_2$ iterations
in the while loop (line 2 in \LPRP).
For each iteration $k\in[K]$, 
$\CheckPersu(\schemeanchorpair)$ 
is called 
for every pair of states $(\realizedstatei,\realizedstatej)$
with $\SExpUtilityUnderbar\leq \prior(\realizedstatej)$.
By \Cref{lem:persuasive check regret}, the total expected regret is at most
\begin{align*}
    \totalstate^2\cdot 
    \sum_{k \in [K_2]}
    \frac{\SExpUtility(\optoff)}{
    \sum_{\realizedstatei\in[\totalstate]}
\prior(\realizedstatei)\schemeanchorpair(\realizedstatei)
    }
    \overset{(a)}{\leq}
    \totalstate^2\cdot 
    \sum_{k \in [K_2]}
    \frac{2^{-K_1}}{
    2^{-k}
    }
    =
    \totalstate^2\cdot 
    \sum_{k=K_1-K_2}^{K_1}
    2^{-k}
    =
    O(\totalstate^4)
\end{align*}
where the denominator 
in the right-hand side 
of inequality~(a) is due to the 
construction of $\schemeanchorpair$.
\end{proof}

\xhdr{The analysis of exploring phase II}

The goal of exploring phase II is to 
identify an interior point $\schemestartingpoint$
for the linear program solver $\lpsolver$
with membership oracle access.
To achieve this, 
\LPRP\ 
excludes degenerate states which 
contribute little to $\SExpUtility(\optoff)$,
and the remaining states form the subset 
$\meaningfulset$.
We characterize $\meaningfulset$
by the following lemma.

\begin{numberedlemma}{\ref{lem:meaningful set characterization}}
When exploring phase II terminates,
\begin{itemize}
\item for each state $\realizedstatei
    \in\meaningfulset$:~
    $\costStatei \geq -\totalstate\totaltime
    \cdot 
    \max_{\realizedstatej\in[\totalstate]}
    \costState(\realizedstatej)$;
    \item
     for each state $\realizedstatei\not\in
    \meaningfulset$:~
    $\costStatei < - 
    \frac{\totalstate\totaltime}{3} 
    \cdot
    \max_{\realizedstatej\in[\totalstate]}
    \costState(\realizedstatej)$.
    \end{itemize}
\end{numberedlemma}

\begin{proof}
Let $\realizedstateanchori =
\argmax_{\realizedstatei\in[\totalstate]}\costStatei$.
For each state $\realizedstatei\in\meaningfulset$,
suppose it is added into $\meaningfulset$ due to a pair of state $(\realizedstatei',\realizedstatej')\in 
\anchorset$. 
Suppose $\realizedstatei'\not=\realizedstatej'$
(A similar argument holds for
$\realizedstatei'=\realizedstatej'$).
In this case, we know that
the signaling scheme $\schemeanchorpairk$ 
corresponding to $(\realizedstatei',
\realizedstatej',\realizedstatei,\SExpUtilityUnderbar)$ 
is \IC,
i.e., 
\begin{align*}
    0 
    &\leq \sum_{\realizedstatej\in[\totalstate]}
    \costState(\realizedstatej)
    \schemeanchorpairk(\realizedstatej)
    \overset{(a)}{=}
    \costState(\realizedstatei')
    +
    \costState(\realizedstatej')
    \frac{\SExpUtilityUnderbar}
    {2\prior(\realizedstatej')}
    +
    \costState(\realizedstatei)
    \frac{3}{2\totalstate\totaltime}
\overset{(b)}{\leq} 
    \costState(\realizedstateanchori)
    +
    \frac{1}{2}\costState(\realizedstateanchori)
    +
    \costState(\realizedstatei)
    \frac{3}{2\totalstate\totaltime}
\end{align*}
which implies that
$\costState(\realizedstatei) \geq
-\totalstate\totaltime\costState(\realizedstateanchori)$.
Here equality~(a) holds due to the construction of $\schemeanchorpairk$,
and inequality~(b) holds since 
$\costState(\realizedstatei')\leq
\costState(\realizedstateanchori)$,
$\costState(\realizedstatej')\leq
\costState(\realizedstateanchori)$,
and $\SExpUtilityUnderbar \leq \prior(\realizedstatej')$.

By \Cref{lem:existence of anchor pair},
there exists a state $\realizedstateanchorj$
such that $(\realizedstateanchori,\realizedstateanchorj)
\in \anchorset$.
For each state $\realizedstatei\not\in
\meaningfulset$,
we know that 
the signaling scheme $\schemeanchorpairk$
corresponded to 
$(\realizedstateanchori,\realizedstateanchorj,
\realizedstatei,\SExpUtilityUnderbar)$
is not \IC,
i.e., 
\begin{align*}
    0 
    > \sum_{\realizedstatej\in[\totalstate]}
    \costState(\realizedstatej)
    \schemeanchorpairk(\realizedstatej)
    \overset{(a)}{=}
    \costState(\realizedstateanchori)
    +
    \costState(\realizedstateanchorj)
    \frac{\SExpUtilityUnderbar}
    {2\prior(\realizedstateanchorj)}
    +
    \costState(\realizedstatei)
    \frac{3}{2\totalstate\totaltime}
\overset{(b)}{\geq} 
    \costState(\realizedstateanchori)
    -
    \frac{1}{2}\costState(\realizedstateanchori)
    +
    \costState(\realizedstatei)
    \frac{3}{2\totalstate\totaltime}
\end{align*}
which implies that 
$\costState(\realizedstatei) <
-
\frac{\totalstate\totaltime}{3}
\costState(\realizedstateanchori)$.
Here equality~(a) holds due to the construction of $\schemeanchorpairk$,
and inequality~(b) holds since 
$\costState(\realizedstateanchori)
+
\costState(\realizedstateanchorj)
\frac{\SExpUtilityUnderbar}{\prior(\realizedstateanchorj)}
\geq 0$.
\end{proof}

The first part of \Cref{lem:meaningful set characterization}
guarantees that 
there exists 
a pair of state 
$(\realizedstateanchori,\realizedstateanchorj)\in
\meaningfulset$
such that the corresponding $\schemestartingpoint$
is an interior point of program~\ref{eq:optimum in hindsight within meaningfulset}
(see \Cref{lem:interior point}).
The second part of \Cref{lem:meaningful set characterization}
guarantees that the optimal signaling scheme 
$\meaningfulsetopt$
in program~\ref{eq:optimum in hindsight within meaningfulset}
is close to 
the optimal signaling scheme $\optoff$
in program~\ref{eq:optimal in hindsight}
(see \Cref{lem:meaningful set approximate optimal}).

\begin{numberedlemma}{\ref{lem:meaningful set approximate optimal}}
Let $\meaningfulsetopt$
be the optimal solution in 
program~\ref{eq:optimum in hindsight within meaningfulset},
i.e., $\meaningfulsetopt
= \argmax$
\ref{eq:optimum in hindsight within meaningfulset}.
Then
$\SExpUtility(\meaningfulsetopt) \geq 
\SExpUtility(\optoff) - O(\frac{1}{\totaltime})$.
\end{numberedlemma}
\begin{proof}
By \Cref{lem:optimal in hindsight characterization},
in the optimal signaling scheme $\optoff$,
there exists a threshold state
$\realizedstatei\primed\in[\totalstate]$.
For each state $\realizedstatei$ 
above $\realizedstatei\primed$,
$\optoff(\realizedstatei) = 1$;
and for each state $\realizedstatei$ below
$\realizedstatei\primed$,
$\optoff(\realizedstatei) = 0$.

Let 
$\realizedstateanchori =
\argmax_{\realizedstatei\in[\totalstate]}\costStatei$.
We first show that for each state $\realizedstatei$
above $\realizedstatei\primed$, 
$\realizedstatei\in\meaningfulsetopt$.
To see this, note that $\optoff$ is \IC, i.e., 
\begin{align*}
    0 = 
    \sum_{\realizedstatej\in[\totalstate]}
    \costState(\realizedstatej)
    \optoff(\realizedstatej)
    =
    \costState(\realizedstatei)
    +
    \sum_{\realizedstatej\in[\totalstate]
    \backslash\{\realizedstatei\}}
    \costState(\realizedstatej)
    \optoff(\realizedstatej)
    \leq 
     \costState(\realizedstatei)
     +
     (\totalstate-1)\costState(\realizedstateanchori)
\end{align*}
which implies that $\costStatei\geq 
- (\totalstate-1)\costState(\realizedstateanchori)$.
By \Cref{lem:meaningful set characterization},
we conclude that $\realizedstatei\in\meaningfulset$.\footnote{Here we assume $\totaltime\geq 3$.}
Hence, we can now upperbound
$\SExpUtility(\optoff) - \SExpUtility(\meaningfulsetopt)$
as follows,
\begin{align*}
    \SExpUtility(\optoff) - \SExpUtility(\meaningfulsetopt)
    &
    \overset{(a)}{\leq}
    \prior(\realizedstatei\primed)
    \optoff(\realizedstatei\primed)
    \cdot \indicator{\realizedstatei\primed\not\in
    \meaningfulset}
    \leq 
    \optoff(\realizedstatei\primed)
    \cdot \indicator{\realizedstatei\primed\not\in
    \meaningfulset}
    \overset{(b)}{\leq} 
    -\frac{(\totalstate-1)\costState(\realizedstateanchori)}{
    \costState(\realizedstatei\primed)}
    \cdot \indicator{\realizedstatei\primed\not\in
    \meaningfulset}
    \overset{(c)}{\leq} 
    O\left(\frac{1}{\totaltime}\right)
\end{align*}
where 
inequality~(a) holds since $\realizedstatei\in\meaningfulset$
for every $\realizedstatei$ such that 
$\optoff(\realizedstatei) = 1$;
inequality~(c) holds 
due to \Cref{lem:meaningful set characterization};
and inequality~(b) holds due
to the incentive compatibility of $\optoff$, i.e.,
\begin{align*}
    0 = 
    \sum_{\realizedstatej\in[\totalstate]}
    \costState(\realizedstatej)
    \optoff(\realizedstatej)
    =
    \costState(\realizedstatei\primed)
    \optoff(\realizedstatei\primed)
    +
    \sum_{\realizedstatej\in[\totalstate]
    \backslash\{\realizedstatei\primed\}}
    \costState(\realizedstatej)
    \optoff(\realizedstatej)
    \leq 
     \costState(\realizedstatei\primed)
     \optoff(\realizedstatei\primed)
     +
     (\totalstate-1)\costState(\realizedstateanchori)
\end{align*}
and $\costState(\realizedstatei\primed) < 0$
if $\realizedstatei\primed\not\in\meaningfulset$.
\end{proof}

Finally, we present the regret guarantee
in exploring phase II.

\begin{lemma}
\label{lem:regret log T phase II}
In \LPRP,
the expected regret in exploring phase II
is at most $O(\totalstate^5)$.
\end{lemma}
\begin{proof}
In exploring phase II,
$\CheckPersu(\schemeanchorpairk)$ 
is called 
for every pair of states $(\realizedstateanchori,\realizedstateanchorj)
\in\anchorset$ and $\realizedstatei\in [\totalstate]
\backslash\{\realizedstateanchori,\realizedstateanchorj\}$.
By \Cref{lem:persuasive check regret}, the total expected regret is at most 
\begin{align*}
\sum_{(\realizedstateanchori,\realizedstateanchorj)
\in\anchorset}
\sum_{\realizedstatei\in [\totalstate]
\backslash\{\realizedstateanchori,\realizedstateanchorj\}}
    \frac{\SExpUtility(\optoff)}{
    \sum_{\realizedstatej\in[\totalstate]}
\prior(\realizedstatej)\schemeanchorpairk(\realizedstatej)
    }
    \overset{(a)}{\leq}
    \totalstate^3\cdot 
    \frac{\SExpUtility(\optoff)}{
    \frac{1}{2}\SExpUtilityUnderbar
    }
    \overset{(b)}{\leq}
    O(\totalstate^5)
\end{align*}
where the denominator 
in the right-hand side 
of inequality~(a) is due to the 
construction of $\schemeanchorpairk$,
and inequality~(b) is due to 
\Cref{lem:existence of anchor pair}.
\end{proof}

\xhdr{The analysis of exploring phase III}
Let $\convexset$(\ref{eq:optimum in hindsight within meaningfulset}) be the convex set in 
program~\ref{eq:optimum in hindsight within meaningfulset}.
Here we show that we can find an interior 
point $\schemestartingpoint$
for some pair of states
$(\realizedstateanchori,\realizedstateanchorj)\in\anchorset$.

\begin{lemma}[restatement of \Cref{lem:interior point main}]
\label{lem:interior point}
There exists 
a pair of state 
$(\realizedstateanchori,\realizedstateanchorj)\in
\anchorset$
such that 
$\schemestartingpoint$ is an interior point 
of program~\ref{eq:optimum in hindsight within meaningfulset}.
In particular, let $r = \frac{1}{16\totalstate^2\totaltime}$,
then $\TwoNormBall(\schemestartingpoint,r)\subseteq 
\convexset($\ref{eq:optimum in hindsight within meaningfulset}$)$.
\end{lemma}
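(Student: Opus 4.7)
The plan is to exhibit the desired anchor pair explicitly and then verify that each of the three types of constraints in program~\ref{eq:optimum in hindsight within meaningfulset} holds with slack at least $r\cdot\|\mathrm{coefficient~vector}\|_2$ at $\schemestartingpoint$, so that Cauchy--Schwarz guarantees the entire ball $\TwoNormBall(\schemestartingpoint,r)$ (viewed in the subspace $\{\scheme:\scheme(\realizedstatei)=0~\forall \realizedstatei\notin\meaningfulset\}$) stays feasible. Concretely, I take $\realizedstateanchori=\argmax_{\realizedstatei\in[\totalstate]}\costStatei$ and let $\realizedstateanchorj$ be the state guaranteed by \Cref{lem:existence of anchor pair} so that $(\realizedstateanchori,\realizedstateanchorj)\in\anchorset$; this is where \Cref{lem:existence of anchor pair} feeds into the argument, because it both gives me the persuasiveness of $\schemeanchorpair$ and the bound $\SExpUtilityUnderbar\ge\SExpUtility(\optoff)/\totalstate^2\ge 1/(\totalstate^2\totaltime)$ that I will use below.

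For the box constraints $\scheme(\realizedstatei)\in[0,1]$, I would check each coordinate separately. At $\realizedstateanchori$ the value is $\tfrac12+\tfrac1{16\totalstate^2\totaltime}$, which is at distance at least $r$ from both $0$ and $1$. At $\realizedstateanchorj$, since $\schemeanchorpair\in\anchorset$ forces $\SExpUtilityUnderbar\le\prior(\realizedstateanchorj)$, we have $\schemestartingpoint(\realizedstateanchorj)=\tfrac{\SExpUtilityUnderbar}{8\prior(\realizedstateanchorj)}\in[\tfrac{\SExpUtilityUnderbar}{8},\tfrac18]$, and combining with $\SExpUtilityUnderbar\ge 1/(\totalstate^2\totaltime)$ gives a $2r$-buffer from both $0$ and $1$. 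For the remaining $\realizedstatei\in\meaningfulset\setminus\{\realizedstateanchori,\realizedstateanchorj\}$, the assigned value $\tfrac1{8\totalstate^2\totaltime}=2r$ is exactly at $2r$ from $0$, which is enough. For the probability lower bound, the anchor term alone contributes $\prior(\realizedstateanchorj)\schemestartingpoint(\realizedstateanchorj)=\SExpUtilityUnderbar/8$, so the slack above $\tfrac{\SExpUtilityUnderbar}{16}$ is $\SExpUtilityUnderbar/16$, while a perturbation of size $r$ shifts the objective by at most $\|\prior\|_2\cdot r\le r$; using $\SExpUtilityUnderbar\ge 1/(\totalstate^2\totaltime)=16r$, the slack dominates.

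The persuasiveness constraint is the main obstacle and the only place where \Cref{lem:meaningful set characterization} is really used. Splitting the sum as $\costState(\realizedstateanchori)\schemestartingpoint(\realizedstateanchori)+\costState(\realizedstateanchorj)\schemestartingpoint(\realizedstateanchorj)+\sum_{\realizedstatei\in\meaningfulset\setminus\{\realizedstateanchori,\realizedstateanchorj\}}\costStatei\schemestartingpoint(\realizedstatei)$, the persuasiveness of $\schemeanchorpair$ (namely $\costState(\realizedstateanchori)+\costState(\realizedstateanchorj)\tfrac{\SExpUtilityUnderbar}{\prior(\realizedstateanchorj)}\ge 0$) scaled by $\tfrac18$ handles the contribution up through $\schemestartingpoint(\realizedstateanchorj)$; the remaining $(\tfrac12+\tfrac1{16\totalstate^2\totaltime}-\tfrac18)\costState(\realizedstateanchori)\ge\tfrac38\costState(\realizedstateanchori)$ is then weighed against the small-value tail, whose size is controlled by \Cref{lem:meaningful set characterization}'s lower bound $\costStatei\ge-\totalstate\totaltime\costState(\realizedstateanchori)$. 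Each of the at most $\totalstate-2$ tail terms contributes at least $-\totalstate\totaltime\costState(\realizedstateanchori)\cdot\tfrac1{8\totalstate^2\totaltime}=-\tfrac{\costState(\realizedstateanchori)}{8\totalstate}$, summing to at least $-\tfrac{\costState(\realizedstateanchori)}{8}$. Thus $\sum_i\costStatei\schemestartingpoint(\realizedstatei)\ge\tfrac14\costState(\realizedstateanchori)$.

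Finally, to turn this slack into ball containment I would bound the Euclidean norm of the persuasiveness coefficient vector by $\|\costState|_{\meaningfulset}\|_2\le\sqrt{\totalstate}\cdot\totalstate\totaltime\costState(\realizedstateanchori)=\totalstate^{3/2}\totaltime\,\costState(\realizedstateanchori)$, again invoking \Cref{lem:meaningful set characterization}, so that any perturbation within $\TwoNormBall(\schemestartingpoint,r)$ shifts the persuasiveness value by at most $r\cdot\totalstate^{3/2}\totaltime\,\costState(\realizedstateanchori)=\tfrac{\costState(\realizedstateanchori)}{16\sqrt{\totalstate}}$, which is strictly smaller than the slack $\costState(\realizedstateanchori)/4$ for every $\totalstate\ge 1$. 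Combining the three verifications yields $\TwoNormBall(\schemestartingpoint,r)\subseteq\convexset(\ref{eq:optimum in hindsight within meaningfulset})$, and thus $\schemestartingpoint$ is an interior point with the claimed inner radius. The anchor-pair choice and the persuasiveness-vs-tail balancing are the critical steps; all other checks are routine.
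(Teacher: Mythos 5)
Your proposal is correct and follows essentially the same route as the paper's proof: the same anchor pair $\realizedstateanchori=\argmax_{\realizedstatei}\costStatei$ with $\realizedstateanchorj$ supplied by \Cref{lem:existence of anchor pair}, the same three constraint checks, and the same use of \Cref{lem:meaningful set characterization} to control the tail of the persuasiveness constraint. The only (immaterial) difference is that you certify slack at the center and absorb the perturbation via Cauchy--Schwarz against the coefficient vectors, whereas the paper plugs coordinate-wise bounds $|\scheme(\realizedstatei)-\schemestartingpoint(\realizedstatei)|\leq r$ directly into each constraint.
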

\begin{proof}
Let 
$\realizedstateanchori =
\argmax_{\realizedstatei\in[\totalstate]}\costStatei$.
By \Cref{lem:existence of anchor pair},
there exists a state $\realizedstateanchorj$
such that $(\realizedstateanchori,\realizedstateanchorj)
\in \anchorset$.
It is sufficient to show that 
the signaling scheme $\schemestartingpoint$
corresponding to $(\realizedstateanchori,\realizedstateanchorj,\meaningfulset)$ defined here
satisfies the requirement.
In particular, 
Fix an arbitrary 
$\scheme\in\TwoNormBall(\schemestartingpoint,r)$.
Below, we show that every constraint
in program~\ref{eq:optimum in hindsight within meaningfulset}
is satisfied.

We first examine
the feasibility constraint, i.e.,
$\scheme(\realizedstatei)\in[0, 1]$
for every $\realizedstatei\in\meaningfulset$.
For every state 
$\realizedstatei\not=\realizedstateanchorj$,
the feasibility constraint 
is satisfied obviously.
For state $\realizedstateanchorj$, 
note that $\SExpUtilityUnderbar\geq
\frac{1}{\totalstate^2\totaltime}$
and thus $\schemestartingpoint(\realizedstateanchorj) \geq 
\frac{1}{\totalstate^2\totaltime}$,
which guarantees the feasibility constraint.

We next examine the constraint that 
$\sum_{\realizedstatei}
         \prior(\realizedstatei)
         \scheme(\realizedstatei)
         \geq \frac{1}{16}\,\SExpUtilityUnderbar$.
To see this, note that 
\begin{align*}
    \sum_{\realizedstatei\in\meaningfulset}
         \prior(\realizedstatei)
         \scheme(\realizedstatei)
         \geq 
         \prior(\realizedstateanchorj)
         \scheme(\realizedstateanchorj)
         \geq 
         \prior(\realizedstateanchorj)
         \left(
         \frac{\SExpUtilityUnderbar}{
         8\prior(\realizedstateanchorj)}
         -
         r
         \right)
         \geq 
          \prior(\realizedstateanchorj)
         \frac{\SExpUtilityUnderbar}{
         16\prior(\realizedstateanchorj)}
         =\frac{1}{16}\,\SExpUtilityUnderbar
\end{align*}

Finally, we examine the \IC\ constraint.
\begin{align*}
    \sum_{\realizedstatei\in[\totalstate]}
    \costState(\realizedstatei)
    \scheme(\realizedstatei)
    &\overset{}{\geq}
    \frac{1}{2}\costState(\realizedstateanchori)
    +
    \costState(\realizedstateanchorj)
    \scheme(\realizedstateanchorj)
+
    \sum_{\realizedstatei\in
    \meaningfulset\backslash\{\realizedstateanchori,
    \realizedstateanchorj\}}
    \costState(\realizedstatei)
    \scheme(\realizedstatei)
\\
    &\overset{(a)}{\geq}
    \frac{1}{4}
    \costState(\realizedstateanchori)
    -
    \frac{\prior(\realizedstateanchorj)}{\SExpUtilityUnderbar}
    \costState(\realizedstateanchori)
   \scheme(\realizedstateanchorj)
    +
    \sum_{\realizedstatei\in
    \meaningfulset\backslash\{\realizedstateanchori,
    \realizedstateanchorj\}}
    \left(
    \costState(\realizedstateanchori)
    \frac{1}{4\totalstate}
    -
    \totalstate\totaltime\cdot \costState(\realizedstateanchori)
\scheme(\realizedstatei)
    \right)
    \\
    &\overset{}{\geq}
    \frac{1}{4}
    \costState(\realizedstateanchori)
    -
    \costState(\realizedstateanchori)
   \left(
   \frac{1}
    {8}
         +
         r
             \frac{\prior(\realizedstateanchorj)}{\SExpUtilityUnderbar}
   \right)
    +
    \sum_{\realizedstatei\in
    \meaningfulset\backslash\{\realizedstateanchori,
    \realizedstateanchorj\}}
    \left(
    \costState(\realizedstateanchori)
    \frac{1}{4\totalstate}
    -
    \totalstate\totaltime\cdot \costState(\realizedstateanchori)
    \left(\frac{1}{8\totalstate^2\totaltime}
    +r
    \right)
\right)
    \\
    &\overset{(b)}{\geq} 0
\end{align*}
where 
inequality~(a) holds 
since 
$\costState(\realizedstateanchori)
+
\costState(\realizedstateanchorj)
\frac{\SExpUtilityUnderbar}{\prior(\realizedstateanchorj)}
\geq 0$,
and 
$\costState(\realizedstatei)\geq 
-\totalstate\totaltime\cdot \costState(\realizedstateanchori)$
by \Cref{lem:meaningful set characterization};
and inequality~(b) holds since 
$ r\frac{\prior(\realizedstateanchorj)}{\SExpUtilityUnderbar}
\leq \frac{1}{16}$.
\end{proof}

Next, we present the regret guarantee
in exploring phase III.

\begin{lemma}
\label{lem:regret log T phase III}
In \LPRP,
the expected regret in exploring phase III
is at most $
    O\left(
    \totalstate^6
     \log^{O(1)}
    \left(\totalstate\totaltime
    \right)
    \right)$.
\end{lemma}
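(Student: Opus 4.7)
The plan is to decompose the expected regret of exploring phase III into a product of three quantities: the number of state pairs over which the outer loop iterates, the number of membership-oracle queries $\lpsolver$ makes per pair, and the expected regret charged to each such query. The outer loop has at most $|\anchorset|\leq \totalstate^2$ iterations by construction. For the query count, I will plug the algorithm's parameter choices -- interior point from \Cref{lem:interior point}, $r=\sfrac{1}{(16\totalstate^2\totaltime)}$, $R=\sqrt{\totalstate}$ (every feasible scheme lies in the unit box of $\reals^\totalstate$), $\epsilon=\sfrac{1}{\totaltime}$, and failure probability $\delta=\sfrac{1}{\totaltime}$ -- directly into \Cref{thm:membership oracle query bound}, yielding $O(\totalstate^2\log^{O(1)}(\totalstate\totaltime))$ queries per call to $\lpsolver$.

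The most delicate step is bounding the regret of a single oracle query, and here the key idea is to implement the membership oracle for $\convexset(\mathcal{P}^{\texttt{opt}}_{\meaningfulset})$ in a two-stage manner. Given a query scheme $\scheme$, the platform first verifies the ``cheap'' constraints that require no interaction with any user: the box constraints $\scheme(\realizedstatei)\in[0,1]$, the support condition $\scheme(\realizedstatei)=0$ for $\realizedstatei\notin\meaningfulset$, and the auxiliary lower bound $\sum_{\realizedstatei}\prior(\realizedstatei)\scheme(\realizedstatei)\geq \sfrac{1}{16}\SExpUtilityUnderbar$. Only if all of these are satisfied does the platform invoke $\CheckPersu(\scheme)$ to test the persuasiveness constraint. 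Consequently, every $\CheckPersu$ call is made on a scheme whose total probability of recommending action $1$ is at least $\sfrac{\SExpUtilityUnderbar}{16}$, and combining \Cref{lem:persuasive check regret} with $\SExpUtility(\optoff)\leq \totalstate^2\SExpUtilityUnderbar$ from \Cref{lem:existence of anchor pair} gives an expected regret of at most $\sfrac{16\SExpUtility(\optoff)}{\SExpUtilityUnderbar}=O(\totalstate^2)$ per query.

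Multiplying the three factors produces $O(\totalstate^2)\cdot O(\totalstate^2\log^{O(1)}(\totalstate\totaltime))\cdot O(\totalstate^2)=O(\totalstate^6\log^{O(1)}(\totalstate\totaltime))$. The remaining error terms are negligible: the $\sfrac{1}{\totaltime}$ suboptimality of $\lpsolver$'s returned solution contributes at most $O(1)$ over the $\totaltime$ exploiting rounds, and the $\sfrac{1}{\totaltime}$ failure probability contributes at most $O(1)$ in expectation since any scheme has instantaneous regret at most $1$.

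The main obstacle I anticipate is justifying that the two-stage membership oracle is legitimate and that $\CheckPersu$ is genuinely invoked only on schemes meeting the auxiliary lower bound; without this safeguard, \Cref{lem:persuasive check regret} would charge $\sfrac{\SExpUtility(\optoff)}{\sum_i \prior(i)\scheme(i)}$, which can blow up arbitrarily when the oracle is probed near the boundary of the feasible cone. Everything else reduces to routine substitution into the already-established bounds.
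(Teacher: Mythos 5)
Your proposal is correct and follows essentially the same route as the paper's proof: the same three-factor decomposition ($|\anchorset|\leq\totalstate^2$ executions, $O(\totalstate^2\log^{O(1)}(\totalstate\totaltime))$ oracle queries per execution from \Cref{thm:membership oracle query bound}, and $O(\totalstate^2)$ regret per query), and crucially the same observation that $\CheckPersu$ is invoked only on schemes already satisfying $\sum_{\realizedstatei}\prior(\realizedstatei)\scheme(\realizedstatei)\geq\sfrac{1}{16}\,\SExpUtilityUnderbar$, so that \Cref{lem:persuasive check regret} combined with \Cref{lem:existence of anchor pair} caps the per-query regret. Your additional remarks on the $\sfrac{1}{\totaltime}$ suboptimality and failure probability are correct but belong to the exploiting-phase analysis (\Cref{lem:regret log T phase IIII}) rather than this lemma.
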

\begin{proof}
In exploring phase II,
$\lpsolver$ is executed for each 
$(\realizedstateanchori,\realizedstateanchorj)\in
\anchorset$
where $|\anchorset|\leq \totalstate^2$.
Within each execution of $\lpsolver$,
$\CheckPersu(\schemeanchorpairk)$ 
is called 
as the membership oracle.
Note that we 
run Procedure~\ref{alg:persuasive check}
only
if constraint $\sum_{\realizedstatei}
         \prior(\realizedstatei)
         \scheme(\realizedstatei)
         \geq \frac{1}{16}\,\SExpUtilityUnderbar$
         is satisfied.
Thus, by \Cref{lem:persuasive check regret}
and \Cref{lem:existence of anchor pair},
the expected regret in exploring phase III 
is at most $O(\totalstate^2)$
times 
the total number of queries
to the membership oracle in all executions.
Invoking \Cref{thm:membership oracle query bound}
finishes the proof.
\end{proof}

\xhdr{The analysis of exploiting phase}

Here we present the regret guarantee
in exploiting phase.

\begin{lemma}
\label{lem:regret log T phase IIII}
In \LPRP,
the expected regret in exploiting phase
is at most $
    O\left(
   1
    \right)$.
\end{lemma}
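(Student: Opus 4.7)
The plan is to establish that the signaling scheme $\scheme\primed$ committed to throughout the exploiting phase has per-round expected payoff at least $\SExpUtility(\optoff) - O(\sfrac{1}{\totaltime})$ with high probability, and then bound the contribution from the low-probability failure event by $O(1)$ via a standard expectation decomposition. Since at most $\totaltime$ rounds remain, this yields the desired $O(1)$ total regret.

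I would split into two cases depending on whether $\anchorset = \emptyset$ at the end of exploring phase~I. In the empty case, the contrapositive of \Cref{lem:existence of anchor pair} applies: since the while loop of exploring phase~I terminated without ever finding a persuasive $\schemeanchorpair$ while $\SExpUtilityUnderbar \geq \sfrac{1}{\totalstate^2 \totaltime}$, the hypothesis $\SExpUtility(\optoff) \geq \sfrac{1}{\totaltime}$ must fail, i.e.\ $\SExpUtility(\optoff) < \sfrac{1}{\totaltime}$. Hence the no-information-revealing scheme used in the exploiting phase incurs per-round regret at most $\sfrac{1}{\totaltime}$ and total regret at most $1 = O(1)$.

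For the main case $\anchorset \neq \emptyset$, I would invoke \Cref{lem:interior point} to assert that at least one pair $(\anchorCostState, \anchorOptState) \in \anchorset$ gives $\lpsolver$ a valid interior point $\schemestartingpoint$ with inscribed radius $r = \sfrac{1}{16 \totalstate^2 \totaltime}$, circumscribed radius $R = \sqrt{\totalstate}$, and $\epsilon = \delta = \sfrac{1}{\totaltime}$. By \Cref{thm:membership oracle query bound}, with probability at least $1 - \sfrac{1}{\totaltime}$ the returned scheme for this pair is feasible for program~\ref{eq:optimum in hindsight within meaningfulset} and $\sfrac{1}{\totaltime}$-optimal; combined with \Cref{lem:meaningful set approximate optimal}, its payoff is at least $\SExpUtility(\optoff) - O(\sfrac{1}{\totaltime})$. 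Because \Cref{alg:log T} sets $\scheme\primed$ to maximize the (known) objective $\SExpUtility(\cdot) = \sum_\realizedstatei \prior(\realizedstatei)\scheme(\realizedstatei)$ across pairs in $\anchorset$, and because every iterate produced by $\lpsolver$ is certified against $\CheckPersu$ so that $\scheme\primed$ is persuasive, we conclude $\SExpUtility(\scheme\primed) \geq \SExpUtility(\optoff) - O(\sfrac{1}{\totaltime})$ with probability at least $1 - \sfrac{1}{\totaltime}$.

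Summing over the at most $\totaltime$ exploiting rounds via a two-event decomposition: on the good event (probability $1 - \sfrac{1}{\totaltime}$) the per-round regret is $O(\sfrac{1}{\totaltime})$, contributing $O(1)$; on the bad event (probability $\sfrac{1}{\totaltime}$) the per-round regret is at most $1$, contributing $\sfrac{1}{\totaltime} \cdot \totaltime = O(1)$ in expectation. The main obstacle I expect is handling the fact that $\lpsolver$ may be called on a pair whose $\schemestartingpoint$ is \emph{not} a valid interior point: the guarantee of \Cref{thm:membership oracle query bound} degrades there, but since the algorithm's final selection maximizes $\SExpUtility(\cdot)$ across all pairs in $\anchorset$ and at least one pair produces the good scheme described above, the $\max$ inherits the near-optimality bound, and persuasiveness of every candidate is preserved by the membership-oracle protocol.
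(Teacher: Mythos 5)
Your proof is correct and follows essentially the same route as the paper's: the same case split on whether $\anchorset$ is empty (using the contrapositive of \Cref{lem:existence of anchor pair} in the empty case), the same invocation of \Cref{lem:interior point}, \Cref{thm:membership oracle query bound}, and \Cref{lem:meaningful set approximate optimal} in the nonempty case, and the same $\sfrac{1}{\totaltime}\cdot\totaltime = O(1)$ accounting for the solver's failure event. Your added discussion of how the final $\max$ over pairs in $\anchorset$ inherits the guarantee from the one pair with a valid interior point is a slightly more careful treatment of a step the paper leaves implicit, but it is not a different argument.
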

\begin{proof}
There are three different cases.
If exploring phase III terminates due to 
$\SExpUtilityUnderbar < \frac{1}{\totalstate^2\totaltime}$,
by \Cref{lem:existence of anchor pair},
we know that $\SExpUtility(\optoff) <\frac{1}{\totaltime}$.
Hence, using any signaling scheme (including 
$\scheme\primed$) induces $O(1)$ regret.

If exploring phase III terminates with
$\anchorset\not=\emptyset$,
then linear program solver $\lpsolver$
is executed.
Recall that $\lpsolver$
is a randomized algorithm with success probability
$1- \frac{1}{\totaltime}$.
If it fails, the regret is at most $\totaltime$,
which happens with probability $\frac{1}{\totaltime}$.
If it succeeds, 
by \Cref{lem:interior point},
$\SExpUtility(\scheme\primed)\geq 
\SExpUtility(\meaningfulsetopt) - O(\frac{1}{\totaltime})
\geq 
\SExpUtility(\optoff) - O(\frac{1}{\totaltime})$.
\end{proof}

\smallskip

Combining the regret analysis in all phases,
we can prove \Cref{thm:log T}.
\begin{proof}[Proof of \Cref{thm:log T}]
Invoking \Cref{lem:regret log T phase I},
\Cref{lem:regret log T phase II},
\Cref{lem:regret log T phase III},
and \Cref{lem:regret log T phase IIII}
finishes the proof.
\end{proof} 

\section{Omitted Proofs in Section~\ref{sec:loglog T lower bound}}
\label{apx:proofs lower bound}

In this section, we present the omitted proofs
in \Cref{sec:loglog T lower bound}.

\multilabelpolicytobinarylabelpolicy*
The proof of \Cref{lem:multi label policy to binary label policy} relies on the following lemma,
which
guarantees that for any distribution $\posterior$
of posterior belief over binary state, there exists 
a distribution of signaling schemes with binary signal space
that implements $\posterior$.

\begin{lemma}
\label{lem:multi label signal to binary label signal}
Let $\scheme:[2] \rightarrow \stochastic(\signalSpace)$ 
be a signaling scheme 
that maps binary state
into probability distributions a over finite signal space $\signalSpace$,
and 
$\posterior:\signalSpace\rightarrow
\stochastic([2])$ be the 
distribution of posterior belief
induced by $\scheme$.
There exists a positive integer $K$,
and a finite set 
$\{\scheme\ked\}_{k\in[K]}$
where each 
$\scheme\ked:
[2]\rightarrow
\stochastic(\{0, 1\})$
is a signaling scheme
with binary signal space.
Let $\posterior\ked$ be 
the distribution of posterior belief
induced by $\scheme\ked$ 
for each $k\in[K]$.
Then,
there exists a distribution $F$
over $[K]$
such that 
for every possible posterior belief realization
$x\in\supp(\posterior)$,
$\prob{\posterior=x} = 
\expect[k\sim F]{
\prob{\posterior\ked = x}}$.
\end{lemma}

The proof of \Cref{lem:multi label signal to binary label signal}
relies on 
\Cref{lem:posterior Bayesian plausibility}
and 
\Cref{lem:multi label signal to binary label signal random variable} as follows.
\ifarxiv
\else
The proof of \Cref{lem:multi label signal to binary label signal random variable}
is deferred to \Cref{apx:proofs in loglog T}.
\fi

\begin{lemma}[\citealp{KG-11}]
\label{lem:posterior Bayesian plausibility}
Let $\prior\in\stochastic([2])$ be a prior distribution 
over binary state space $[2]$.
A distribution of posterior belief
$\posterior\in
\stochastic(\stochastic([2]))$
is implementable (i.e., 
can be induced by some signaling scheme)
if and only if
$\prob[x\sim \posterior,\state\sim x]
{
\state = 1} = 
\prior(1)$.
\end{lemma}

\begin{restatable}{lemma}{multiLabelToBinaryLabelRV}
\label{lem:multi label signal to binary label signal random variable}
Let $X$ be a random variable 
with discrete support 
$\supp(X)$.
There exists a 
positive integer $K$,
a finite set of $K$
random variables $\{X_k\}_{k\in[K]}$,
and 
convex combination coefficients
$\convexcombinbf\in [0, 1]^K$ 
with $\sum_{k\in[K]} \convexcombin_k = 1$
such that:
\begin{enumerate}
    \item \underline{Bayesian-plausibility}: for each $k\in [K]$, $\expect{X_k} = \expect{X}$;
    \item \underline{Binary-support}:
    for each $k\in[K]$, the size of $X_k$'s support is at most 2,
    i.e., $|\supp(X_k)| \leq 2$
    \item \underline{Consistency}:
    for each $x\in \supp(X)$, 
$\prob{X=x} = \sum_{k\in[K]}
\convexcombin_k\cdot 
\prob{X_k = x}$
\end{enumerate}
\end{restatable}
\begin{proof}
For notation simplicity, 
we first introduce the following notations.
We denote $\expect{X}$ as $\mean$,
and $\supp(X)$ as $\support$.
We partition $\support$
into $\supportlareq
\triangleq 
\{\xlareq^{(1)}, \dots, \xlareq^{(n_1)}\}$ and 
$\supportsma 
\triangleq
\{\xsma^{(1)},\dots,\xsma^{(n_2)}\}$
where 
$\forall i\in[n_1]$,
$\xlareq^{(i)} \geq \mean$;
and 
$\forall j\in[n_2]$,
$\xsma^{(j)} < \mean$.
We first show the statement holds
if $\mean \not\in \support$.
A similar argument holds for the other case 
where $\mean \in \support$.

\smallskip
Now suppose 
$\mean \not\in \support$.
Let $\problari$ denote 
$\prob{X = \xlari}$
and $\probsmai$ denote 
$\prob{X = \xsmai}$.
Consider the following linear system
with variable $\{\convexcombini\}
_{i\in [n_1],j\in[n_2]}$:
\begin{align}
\label{eq:binary label linear system original}
\left\{
\begin{array}{ll}
\sum_{j\in[n_2]}
\frac{\mean - \xsmai}{\xlari-\xsmai}
\convexcombini = \problari
&  \forall i\in[n_1]\\
\sum_{i\in[n_1]}
\frac{\xlari - \mean}{\xlari-\xsmai}
\convexcombini = \probsmai
&  \forall j\in[n_2]\\
    \convexcombini\geq 0 &
    \forall i\in[n_1],j\in[n_2]
\end{array}
    \right.
\end{align}
Below we first show how to construct 
random variable $\{X_k\}_{k\in [K]}$
required in the lemma statement
with
any feasible 
solution in
linear system~\eqref{eq:binary label linear system original} as the convex combination coefficients.
Then we show the existence
of the feasible solution 
in
linear system~\eqref{eq:binary label linear system original}.

Fix a feasible solution 
$\{\convexcombini\}_{i\in [n_1],j\in[n_2]}$ 
in
linear system~\eqref{eq:binary label linear system original}.
Let $K = n_1\cdot n_2$.
Consider the set of random variables 
$\{X_{ij}\}_{i\in[n_1],j\in[n_2]}$
as follows,
\begin{align*}
    X_{ij} = \left\{
    \begin{array}{ll}
    \xlari     &  
    \text{with probability~} 
    \frac{\mean - \xsmai}{\xlari - \xsmai}\\
    \xsmai     & 
    \text{otherwise}
    \end{array}
    \right.
\end{align*}
Note that $\{\convexcombini\}$
is valid convex combination coefficient for 
$\{X_{ij}\}_{i\in[n_1],j\in[n_2]}$.
In particular,
\begin{align*}
    \sum_{i\in[n_1]}\sum_{j\in[n_2]}
    \convexcombin_{ij}
    =
    \sum_{i\in[n_1]}\sum_{j\in[n_2]}
    \convexcombin_{ij}
    \left(
    \frac{\mean - \xsmai}{\xlari-\xsmai}
    +
\frac{\xlari - \mean}{\xlari-\xsmai}
    \right)
    =
    \sum_{i\in[n_1]}\problari +
    \sum_{j\in[n_2]}\probsmai
    = 1
\end{align*}
To see why
random variables
$\{X_{ij}\}_{i\in[n_1],j\in[n_2]}$ with 
convex combination coefficients
$\{\convexcombini\}_{i\in [n_1],j\in[n_2]}$
satisfy the statement requirement,
note that
``Bayesian-plausibility''
property
and ``Binary-support'' property
are satisfied
by construction.
To verify ``Consistency'' property,
consider each $\xlari\in \supportlar$,
\begin{align*}
    \sum_{i\in[n_1]}\sum_{j\in[n_2]}
    \convexcombin_{ij}\cdot 
    \prob{X_{ij} = \xlari}
    =&
    \sum_{j\in [n_2]}
    \convexcombin_{ij}\cdot 
    \prob{X_{ij} = \xlari}
    \\
    =&
    \sum_{j\in [n_2]}
    \convexcombin_{ij}\cdot 
    \frac{\mean - \xsmai}{\xlari-\xsmai}
= \problari = \prob{X = \xlari}
\end{align*}
Similarly, for each 
$\xsmai\in \supportsma$,
\begin{align*}
    \sum_{i\in[n_1]}\sum_{j\in[n_2]}
    \convexcombin_{ij}\cdot 
    \prob{X_{ij} = \xsmai}
    =&
    \sum_{i\in [n_1]}
    \convexcombin_{ij}\cdot 
    \prob{X_{ij} = \xsmai}
    \\
    =&
    \sum_{i\in [n_1]}
    \convexcombin_{ij}\cdot 
    \frac{\xlari - \mean}{\xlari-\xsmai}
= \probsmai = \prob{X = \xsmai}
\end{align*}
Hence, we conclude that 
random variables $\{X_{ij}\}_{i\in[n_1],j\in[n_2]}$ with 
convex combination coefficients
$\{\convexcombini\}_{i\in [n_1],j\in[n_2]}$
satisfy the statement requirement.

\smallskip
Next, we show the existence of a feasible solution 
$\{\convexcombini\}_{i\in [n_1],j\in[n_2]}$
in linear system \eqref{eq:binary label linear system original}.
Let $\convexcombinhati = 
\frac{\convexcombini}{\xlari-\xsmai}$.
It is equivalent to showing that
\begin{align}
\label{eq:binary label linear system modified}
\left\{
\begin{array}{ll}
\sum_{j\in[n_2]}
(\mean - \xsmai)
\convexcombinhati = \problari
&  \forall i\in[n_1]\\
\sum_{i\in[n_1]}
(\xlari - \mean)
\convexcombinhati = \probsmai
&  \forall j\in[n_2]\\
    \convexcombini\geq 0 &
    \forall i\in[n_1],j\in[n_2]
\end{array}
    \right.
\end{align}
has a feasible solution.
We show this by an induction argument.

\smallskip
\noindent
\textbf{Induction Hypothesis.}
Fix any positive integer 
$n_1, n_2 \in \natures_{\geq 1}$,
and arbitrary non-negative numbers
\footnote{Here we do not 
assume that
$\sum_{i\in[n_1]}\problari
+
\sum_{j\in[n_2]}\probsmai = 1$.}
$\{\xlari,\problari\}_{i\in[n_1]}$,
$\{\xsmai,\probsmai\}_{i\in[n_2]}$,
and
$\mean$.
If 
$\xlari > \mean$
for all $i\in[n_1]$,
$\xsmai < \mean$
for all $j\in[n_2]$,
and
$
\sum_{i\in[n_1]}\xlari \problari
+
\sum_{j\in[n_2]}\xsmai \probsmai
=
\left(
\sum_{i\in[n_1]}\problari
+
\sum_{j\in[n_2]}\probsmai
\right)\,
\mean$;
then linear system~\eqref{eq:binary label linear system modified}
has a feasible solution.

\noindent\textbf{Base Case 
($n_1 = 1$ or $n_2 = 1$).}
Here we show the induction 
hypothesis holds for $n_1 = 1$.
A similar argument
holds for $n_2 = 1$.
Consider a 
solution $\{\convexcombinhat_{1j}\}_{j\in[n_2]}$
constructed as follows,
\begin{align*}
    \convexcombinhat_{1j} = 
    \frac{\probsmai}{\xlar^{(1)} - \mean}
\end{align*}
It is obvious that 
$\{\convexcombinhati\}$
is non-negative and
the equality for every $j\in[n_2]$
is satisfied.
Now consider the equality for $i=1$. Note that
\begin{align*}
  \sum_{j\in[n_2]}
(\mean - \xsmai)
\convexcombinhat_{1j}
&=
  \sum_{j\in[n_2]}
\frac{(\mean - \xsmai)\probsmai}
{\xlar^{(1)} - \mean}
=
\frac{\sum_{j\in[n_2]}
(\mean - \xsmai)\probsmai}
{\xlar^{(1)} - \mean}
\overset{(a)}{=}
\frac{(\xlar^{(1)} - \mean)\problar^{(1)}
}
{\xlar^{(1)} - \mean}
=
\problar^{(1)}
\end{align*}
where equality~(a) uses the assumption that
$
\xlar^{(1)} \problar^{(1)}
+
\sum_{j\in[n_2]}\xsmai \probsmai
=
\left(
\problar^{(1)}
+
\sum_{j\in[n_2]}\probsmai
\right)\,
\mean$.

\noindent\textbf{Inductive Step 
($n_1 \geq 2$ and $n_2 \geq 2$).}
Here we show the induction 
hypothesis holds for $n_1 \geq 2$
and $n_2\geq 2$.
In addition, we assume that
$\xlarn \problarn
+
\xsman \probsman \geq
(\problarn+\probsman)\mean$.
A similar argument holds for
$\xlarn \problarn
+
\xsman \probsman < 
(\problarn+\probsman)\mean$.
Below we show that there exists 
a feasible solution where
we fix
\begin{align*}
\begin{split}
\forall i\in[n_1-1]:
    ~
    \convexcombinhat_{i\,n_2} = 0
    \qquad\mbox{and}
    \qquad \convexcombinhat_{n_1n_2} = 
    \frac{\probsman}{\xlarn - \mean}
    \end{split}
\end{align*}
To see this,
observe that
the equality for $j = n_2$ is satisfied.
Next, we invoke the induction 
hypothesis 
on instance $(n_1, n_2 - 1,
\xlar^{(1)},
\problar^{(1)},
\dots,
\xlar^{(n_1-1)},
\problar^{(n_1-1)},
\xlarn,
\problartilden,
\xsma^{(1)},
\probsma^{(1)},
\dots,
\xsma^{(n_2-1)},
\probsma^{(n_2-1)},
\mean)$
where $\problartilden = 
\problarn - 
\convexcombinhat_{n_1n_2}(\mean - \xsman)$.
It is sufficient to show that
this instance
satisfies the assumption in the induction 
hypothesis.
In particular, we can verify that
\begin{align*}
    \problartilden = 
\problarn - 
\convexcombinhat_{n_1n_2}(\mean - \xsman)
=
\problarn - 
\frac{\probsman(\mean - \xsman)}
{\xlarn - \mean}
\geq 0
\end{align*}
since we assume $\xlarn \problarn
+
\xsman \probsman \geq
(\problarn+\probsman)\mean$;
and
\begin{align*}
    &
\sum_{i\in[n_1 - 1]}\xlari\cdot \problari
    +
    \xlarn \cdot \problartilden
+
\sum_{j\in[n_2 - 1]}\xsmai\cdot \probsmai
\\
=&
    \sum_{i\in[n_1]}\xlari\cdot \problari
+
\sum_{j\in[n_2]}\xsmai\cdot \probsmai
-
\left(
\xlarn(\problarn - \problartilden)
+
\xsman
\probsman
\right)
\\
=&
 \sum_{i\in[n_1]}\xlari\cdot \problari
+
\sum_{j\in[n_2]}\xsmai\cdot \probsmai
-
\left(
\xlarn
\frac{\probsman(\mean - \xsman)}
{\xlarn - \mean}
+
\xsman
\probsman
\right)
\\
\overset{(a)}{=}
&
\left(
\sum_{i\in[n_1]}\problari
+
\sum_{j\in[n_2]}\probsmai
\right)
\mean
-
\left(
\xlarn
\frac{\probsman(\mean - \xsman)}
{\xlarn - \mean}
+
\xsman
\probsman
\right)
&
\\
\overset{(b)}{=}
&
\left(
\sum_{i\in[n_1]}\problari
+
\sum_{j\in[n_2]}\probsmai
\right)
\mean
-
\left(
\frac{\probsman(\mean - \xsman)}
{\xlarn - \mean}
+
\probsman
\right)\,\mean
\\
=
&
\left(
\sum_{i\in[n_1 - 1]}\problari
+
\problartilden
+
\sum_{j\in[n_2 - 1]}\probsmai
\right)
\mean
\end{align*}
where equality~(a) uses the assumption that
$$\sum_{i\in[n_1]}\xlari \problari
+
\sum_{j\in[n_2]}\xsmai \probsmai
=
\left(
\sum_{i\in[n_1]}\problari
+
\sum_{j\in[n_2]}\probsmai
\right)\,
\mean$$
and equality~(b)
is by algebra.
\end{proof}

Now we are ready to show \Cref{lem:multi label signal to binary label signal} and then \Cref{lem:multi label policy to binary label policy}.

\begin{proof}[Proof of \Cref{lem:multi label signal to binary label signal}]
Let $\prior\in\stochastic([2])$ 
be the prior distribution
over binary state space $[2]$,
and $\state$ be the state drawn from $\prior$.
Fix an arbitrary signaling scheme $\scheme$
and let $\posterior$ 
be the distribution of posterior belief induced 
by $\scheme$.
Let $\signal$ be the signal issued by 
signaling scheme $\scheme$,
and
set random variable 
$X = \prob{\state = 1\condition \signal}$.
By \Cref{lem:posterior Bayesian plausibility},
$\expect[\signal]{X} = \prior(1)$.

\Cref{lem:multi label signal to binary label signal random variable}
ensures that there exists a positive integer $K$,
a finite set of $K$ random variables $\{X_k\}$,
and convex combination coefficients $\convexcombinbf$ that satisfy 
``Bayesian-plausibility'' property,
``binary-support'' property,
and ``consistency'' property.
Invoking \Cref{lem:posterior Bayesian plausibility},
we know that each random variable $X_k$
can be thought of as a distribution of 
posterior belief $\posterior\ked$
which can be induced by some signaling scheme
$\scheme\ked$ due to the ``Bayesian-plausibility'' property.
The ``binary-support'' property ensures 
that $\scheme\ked$ has binary signal space.
Let $F$ be the distribution over $[K]$
such that $\prob[k\sim F]{k=\ell} = f_{\ell}$.
The ``consistency'' property
guarantees that for every possible posterior belief realization
$x\in\supp(\posterior)$,
$\prob{\posterior=x} = 
\expect[k\sim F]{
\prob{\posterior\ked = x}}$.
\end{proof}

\begin{proof}[Proof of \Cref{lem:multi label policy to binary label policy}]
Fix an arbitrary problem instance $\instance$
with binary state,
and an arbitrary online policy $\ALG$.
Below we construct a randomized online policy
$\ALG\primed$
that only uses signaling scheme with binary signal.
Then, through a coupling argument, 
we show that \Receiver s' actions 
under $\ALG$ and 
\Receiver s' actions under $\ALG\primed$
are the same for each sample path, which finishes the 
proof.

We can
consider 
online policy $\ALG:
[\totaltime]
\times \historiesaction
\times \histories
 \rightarrow
\schemes$ as a 
mapping from 
the round index $t$, 
\Receiver s' action history 
$\historyaction :=(\action_1,\dots, \action_{t-1})$
in the previous $t - 1$ rounds,
and 
the other \Receiver-irrelevant
history\footnote{For example, \Receiver-irrelevant
history
may encode 
the realized states in previous rounds
and 
the random seed for the randomness of 
selecting signaling schemes in $\ALG$.} $\history$
to 
the signaling scheme $\scheme$
used by $\ALG$ in round $t$.
Here $\historiesaction$
is the set of all possible \Receiver s' action history,
$\histories$
is the set of all possible \Receiver-irrelevant 
history,
and $\schemes$ is the set of all signaling schemes.

Now we describe the construction of $\ALG\primed$
which uses mapping $\ALG$ as a blackbox.\footnote{We 
use notation $\dagger$ to denote terms under 
constructed policy $\ALG\primed$.}
We also need to define a coupling 
between the sample path under $\ALG$
and the sample path under $\ALG\primed$.
We construct $\ALG\primed$
and its coupling with $\ALG$ inductively 
(i.e., round by round).
Under the construction of $\ALG\primed$,
together with its coupling, each \Receiver\ $t$ 
forms
the same posterior belief and takes
the same action on both the sample path under $\ALG$
and the sample path under $\ALG\primed$.

We start with round 1. Let $\history$
be the \Receiver-irrelevant history under $\ALG$.
Since $\history$ is \Receiver-irrelevant, 
it can be simulated in $\ALG\primed$.
$\ALG\primed$
first determines the signaling scheme 
$\scheme_1\triangleq \ALG(1,\emptyset,\history)$
that $\ALG$ uses in round 1 given 
\Receiver s' action history $\emptyset$ (which is empty in
the beginning of round 1),
and \Receiver-irrelevant history $\history$.
By \Cref{lem:multi label signal to binary label signal},
there exists a distribution $F_1\primed$ over 
signaling schemes with binary signal space
such that the distribution of posterior belief 
is the same as $\scheme_1$.
Then $\ALG\primed$ randomly draws a signaling scheme
$\scheme_1\primed$ from distribution $F_1\primed$,
and commits to it in round $1$.
By coupling state $\theta_1$ with 
state $\theta_1\primed$, 
and properly coupling signal $\signal_1$
from $\scheme_1$
with signal $\signal_1\primed \sim \scheme_1\primed 
~(\sim F_1\primed)$,
we can ensure that the realized posterior belief 
$\posterior_1$ under $\ALG$
is the same as the realized posterior belief
$\posterior_1\primed$
under $\ALG\primed$,
and thus \Receiver\ $1$'s action $\action_1$ under $\ALG$
is the same as his $\action_1\primed$ under $\ALG\primed$.

Suppose we have constructed $\ALG\primed$
together with its coupling for the first $t-1$ rounds.
In round $t$, 
let $\historyaction$
be the \Receiver s' action history 
under $\ALG$,
and $\historyaction\primed$
be the \Receiver s' action history 
under $\ALG\primed$.
Because of the coupling in 
the first $t-1$ rounds,
we have $\historyaction = \historyaction\primed$.
Let $\history$
be the \Receiver-irrelevant history under $\ALG$.
Again, since $\history$ is \Receiver-irrelevant, 
$\ALG\primed$
can
compute the distribution of $\history$
that is consistent with 
the \Receiver s' action history
$\historyaction\primed=\historyaction$,
and sample $\history\primed$ from this distribution.
Here we couple  
the \Receiver-irrelevant history $\history$ under $\ALG$
with the simulated $\history\primed$ in $\ALG\primed$, 
so that $\history\primed = \history$.
$\ALG\primed$
first determines the signaling scheme 
$\schemei\triangleq \ALG(t,\historyaction\primed,\history\primed)$
that $\ALG$ uses in 
this round $t$ given 
\Receiver s' action history $\historyaction\primed$,
and \Receiver-irrelevant history $\history\primed$.
The remaining construction of distribution $F_t\primed$ 
over 
signaling schemes with binary signal space,
realized signaling scheme $\schemei\primed$
and their coupling (so that $\actioni\primed = \actioni$) 
are the same as what we do in round $1$.
We omit them to avoid redundancy.

Given the construction of $\ALG\primed$
and its coupling described above, 
we conclude that \Receiver s' actions 
are the same under $\ALG\primed$ and $\ALG$,
which finishes the proof.
\end{proof}

\dynamicPricingReduction*
\begin{proof}[Proof of \Cref{lem:dynamic pricing reduction}]
Fix an arbitrary single-item dynamic pricing problem instance
$\instance  = (\totaltime, \optval)$
such that there are $\totaltime$ rounds and 
each buyer has private value $\optval$.
Without loss of generality, we assume 
that $\optval \leq \frac{1}{2}$
and $\totaltime \geq 2$.
We first present the 
construction of 
the Bayesian recommendation
problem instance $\instance\primed$.
Then, given any online 
policy $\ALG\primed$ 
with binary signal space for 
the Bayesian recommendation instance $\instance\primed$,
we present the construction of dynamic pricing mechanism $\ALG$
for the original dynamic pricing instance $\instance$
with regret 
$\Reg[\instance]{\ALG} \leq \Reg[\instance\primed]{\ALG\primed} + 1$.

\xhdr{Construction of the Bayesian recommendation instance}
Consider the following 
Bayesian recommendation instance $\instance\primed$.\footnote{We 
use notation $\dagger$ and $\ddagger$ to denote the Bayesian recommendation instance.}
There are $\totalstate\primed = 2$ states, and 
$\totaltime\primed = \totaltime$ rounds.
Let $\epsilon = \sfrac{1}{\totaltime\primed}$.
State 1 is realized with probability 
$\prior\primed(1) = \eps$
and state 2 is realized with probability 
$\prior\primed(2) = 1 - \eps$.
The \Receiver s' utility is defined as follows,
\begin{align*}
    &\text{for state 1:}\quad 
    \RUtility\primed(1, \action\primed) = 
    \indicator{\action\primed = 1} \\
    &\text{for state 2:}\quad 
    \RUtility\primed(2, \action\primed) 
    = -\frac{\epsilon}{\optval} \cdot 
    \indicator{\action\primed = 1} 
\end{align*}
For notation simplicity, in this section, 
we use $\costState(\realizedstatei)$ to denote $\costStatei$.
By construction, $\costState\primed(1) = \eps$,
$\costState\primed(2) = - \frac{\epsilon(1-\epsilon)}{\optval}$,
and the optimal signaling in hindsight $\optoffPrimed$
satisfies that 
$\optoffPrimed(1) = 1$, $\optoffPrimed(2) = \frac{\optval}{1-\epsilon}$,
and $\SExpUtility(\optoffPrimed)=\optval + \eps$.

\xhdr{Construction of the 
dynamic pricing mechanism}
Fix an arbitrary online policy $\ALG\primed$ 
with binary signal space 
for 
the Bayesian recommendation instance $\instance\primed$.
Here we show that there exists a dynamic pricing mechanism $\ALG$
with regret 
$\Reg[\instance]{\ALG} \leq \Reg[\instance\primed]{\ALG\primed}
+ 1$.
Our argument contains two steps. 
First, we show that every online policy $\ALG\primed$
can be converted into an online policy $\ALG\doubleprimed$
within a subclass,
which has weakly smaller regret.
Then, we show how to construct a dynamic pricing mechanism $\ALG$ 
based on online policy $\ALG\doubleprimed$.

\textbullet\ \textbf{[Step I]} 
Notice that in the construction of 
the Bayesian recommendation
instance $\instance\primed$,
\Receiver s prefer action 1 in state 1
and action 0 in state 2. 
Thus, in the optimal signaling scheme
in hindsight $\optoffPrimed$,
the threshold state is state 2
and state 1 is above state 2.
Intuitively speaking, it is natural to consider 
a subclass of signaling schemes $\schemes\doubleprimed$ 
such that 
for every signaling scheme $\scheme\doubleprimed
\in\schemes\doubleprimed$,
it
issues signal $\signal\doubleprimed = 1$
(i.e., recommends action 1)
deterministically 
(i.e., $\scheme\doubleprimed(1) = 1$)
when the state is 1,
and 
issues signal $\signal\doubleprimed = 1$ 
with probability $\scheme\doubleprimed(2)$
when the state is 2.
Following this intuition, below we show that 
for any online policy $\ALG\primed$
(with binary signal space),
there exists an online policy $\ALG\doubleprimed$
which only uses signaling schemes in $\schemes\doubleprimed$
and achieves weakly smaller regret.

Suppose in round $t$, signaling scheme $\schemei\primed$ is used in $\ALG\primed$.
Recall $\posteriori\primed(\signal\primed,\realizedstatei)$ 
is the posterior probability 
$\prob{\statei\primed = \realizedstatei\condition \signal\primed}$
under $\schemei\primed$.
Without loss of generality, we assume that 
$\posterior\primed(1,2)\leq \prior\primed(2) 
\leq \posterior\primed(0, 2)$.
Since $\costState\primed(1) + \costState\primed(2) < 0$,
\Receiver\ $t$ must take action 0
when the realized signal $\signal\primed = 0$.
Thus, the regret under $\schemei\primed$ is
\begin{align*}
    \Reg[\instance\primed]{\schemei\primed} 
    =&~\SExpUtility(\optoffPrimed) - 
    (\prior\primed(1)\schemei\primed(1) + 
    \prior\primed(2)\schemei\primed(2))
    \cdot
    \indicator{\text{\Receiver\ $t$ takes action 1}\condition\signal\primed=1}
    \\
    =&~
    \SExpUtility(\optoffPrimed) 
    -
    (\prior\primed(1)\schemei\primed(1) + 
    \prior\primed(2)\schemei\primed(2))
    \cdot 
    \indicator{\text{\Receiver\ $t$ takes action 1}\condition
    \text{posterior belief is $\posterior\primed(1,\cdot)$}}
\end{align*}
Thus, in round $t$,  
$\ALG\doubleprimed$ can use signaling scheme $\schemei\doubleprimed$ such that 
$\schemei\doubleprimed(1) \triangleq 1$ 
and $\schemei\doubleprimed(2) \triangleq 
\sfrac{\schemei\primed(2)}{\schemei\primed(1)}$.
Since 
$\posterior\primed(1,2)\leq \prior\primed(2)$
and $\posterior\primed(1,2) = 
\sfrac{(\prior\primed(2)\schemei\primed(2))}{
(\prior\primed(1)\schemei\primed(1) 
+ \prior\primed(2)\schemei\primed(2))}$,
we have $\schemei\doubleprimed(2) 
~(= 
\sfrac{\schemei\primed(2)}{\schemei\primed(1)}) \leq 1$
and thus $\schemei\doubleprimed$ is well-defined.
Additionally, by construction, posterior belief 
$\posteriori\doubleprimed(1,\cdot)$ under signal 1
in $\schemei\doubleprimed$ is the same as 
posterior belief $\posteriori\primed(1,\cdot)$
under signal 1 in $\schemei\primed$. 
Thus, 
\begin{align*}
\Reg[\instance\primed]{\schemei\doubleprimed} 
    =&~\SExpUtility(\optoffPrimed) - 
    (\prior\primed(1)\schemei\doubleprimed(1) + 
    \prior\primed(2)\schemei\doubleprimed(2))
    \cdot
    \indicator{\text{\Receiver\ $t$ takes action 1}\condition\signal\doubleprimed=1}
    \\
    =&~
    \SExpUtility(\optoffPrimed) 
    -
    (\prior\primed(1)\schemei\doubleprimed(1) + 
    \prior\primed(2)\schemei\doubleprimed(2))
    \cdot 
    \indicator{\text{\Receiver\ $t$ takes action 1}\condition
    \text{posterior belief is $\posterior\doubleprimed(1,\cdot)$}}
    \\
    \leq&~ \Reg[\instance\primed]{\schemei\primed}
\end{align*}
Therefore, 
suppose
$\ALG\primed$ uses signaling scheme $\schemei\primed$
in round $t$.
$\ALG\doubleprimed$ can mimic $\ALG\primed$ by using 
signaling scheme $\schemei\doubleprimed$ defined above
and suffers a weakly smaller expected regret.
Though posterior belief 
$\posteriori\doubleprimed(0,\cdot)$ 
under signal 0
in
$\schemei\doubleprimed$ 
may not equal posterior belief $\posteriori\primed(0,\cdot)$ 
under signal 0 in 
$\schemei\primed$,
\Receiver\ $t$ must take action 0 under both 
$\posteriori\doubleprimed(0,\cdot)$ and $\posteriori\primed(0,\cdot)$.
Thus, $\ALG\doubleprimed$ has more information than 
$\ALG\primed$ and can continue mimicking $\ALG\primed$
in the future rounds.

\textbullet\ \textbf{[Step II]} Here we show that 
given any online policy $\ALG\doubleprimed$ 
which only uses signaling scheme 
in $\schemes\doubleprimed$ defined in [Step I],
we can construct a dynamic pricing mechanism $\ALG$ with 
regret $\Reg[\instance]{\ALG} \leq
\Reg[\instance\primed]{\ALG\doubleprimed}
+ 1$.

Suppose in round $t$, signaling scheme $\schemei\doubleprimed$ is used in $\ALG\doubleprimed$.
By the definition of $\ALG\doubleprimed$,
we know that $\schemei\doubleprimed(1) = 1$.
Thus, \Receiver\ $t$ takes action 1
if and only if the realized signal $\signal\doubleprimed = 1$
and his expected utility of taking action 1 is better than 
taking action 0 under his 
posterior belief, i.e.,
\begin{align*}
    \costState\primed(1)\,\schemei\doubleprimed(1)
    +
    \costState\primed(2)\,\schemei\doubleprimed(2) \geq 0
    \quad\Rightarrow
    \quad
    \schemei\doubleprimed(2) \leq \frac{\optval}{1-\eps}
\end{align*}
Hence, the expected regret induced by signaling scheme 
$\schemei\doubleprimed$ is 
\begin{align*}
    \Reg[\instance\primed]{\schemei\doubleprimed} 
    =&~
    \SExpUtility(\optoffPrimed) 
    -
    (\prior\primed(1)\schemei\doubleprimed(1) + 
    \prior\primed(2)\schemei\doubleprimed(2))
    \cdot
    \indicator{\text{\Receiver\ $t$ takes action 1}\condition\signal\doubleprimed=1}
    \\
    =&~
    \optval + \eps -
    \left(\eps + (1-\eps)\schemei\doubleprimed(2)\right)
    \cdot 
    \indicator{\schemei\doubleprimed(2) \leq 
    \frac{\optval}{1-\eps}}
\end{align*}
Therefore, 
suppose
online policy $\ALG\doubleprimed$ 
uses signaling scheme $\schemei\doubleprimed$
in round $t$ for the Bayesian recommendation instance $\instance\primed$.
We can construct the following dynamic pricing mechanism 
$\ALG$ which posts price $\pricei \triangleq
(1-\eps)\schemei\doubleprimed(2)$
in round $t$ for the dynamic pricing instance $\instance$.
The regret of posting price $\pricei$ is 
\begin{align*}
    \Reg[\instance]{\pricei} = \optval - 
    \pricei \cdot \indicator{\pricei \leq \optval}
    \leq 
    \Reg[\instance\primed]{\schemei\doubleprimed} + \eps
\end{align*}
Since dynamic pricing mechanism $\ALG$ 
has more information than 
online policy $\ALG\doubleprimed$,\footnote{In particular,
dynamic pricing mechanism deterministically learns
whether $\pricei \leq \optval$ (a.k.a., 
$ \indicator{\schemei\doubleprimed(2) \leq 
    \frac{\optval}{1-\eps}}$),
    while online policy $\ALG\doubleprimed$
    only learns this information when signal 1 is realized.}
$\ALG$ can simulate $\ALG\doubleprimed$ in the future rounds.
The total regret is 
\begin{align*}
    \Reg[\instance]{\ALG}
    -
    \Reg[\instance\primed]{\ALG\doubleprimed}
    =
    \sum_{t\in[\totaltime]}
    \left(
    \Reg[\instance]{\pricei}
    -
    \Reg[\instance\primed]{\schemei\doubleprimed}
    \right)
    \leq 
    \eps\cdot \totaltime = 1
\end{align*}
which concludes the proof.
\end{proof} 

\section{Missing Technical Details and Proofs in Section~\ref{sec_generalization}}
\label{apx:extension}

\newcommand{\cateSet}{\mathcal{N}}

In this section, we present the missing technical details and formal analysis for our extension models.

\xhdr{Platform with state-dependent utility function}
In the modified version of \Cref{alg:loglog T},
variables $\SExpUtilityUnderbar, L, R$ now denote 
the lower bound and upper bound of $\SExpUtility(\optoff) = 
\sum_{\realizedstatei\in[\totalstate]}\prior(\realizedstatei)
\SUtility(\realizedstatei,1)\optoff(\realizedstatei)$.
Each direct signaling scheme $\scheme^{(\permu,u)}$
used in the exploring phase 
is a signaling scheme such that 
there exists a threshold state $\realizedstatei\primed$
(a) for every state
$\realizedstatei\not=\realizedstatei\primed$,
$\scheme^{(\permu,u)}(\realizedstatei)
= 
\indicator{\permu(\realizedstatei) >\permu(\realizedstatei\primed)}$,
and (b) 
$\scheme^{(\permu,u)}(\realizedstatei\primed)
=
\frac{
u - \sum_{\realizedstatei\not=\realizedstatei\primed}
\prior(\realizedstatei)
\SUtility(\realizedstatei,1)
\scheme^{(\permu,u)}(\realizedstatei)
}{\prior(\realizedstatei\primed)\SUtility(\realizedstatei,1)}$.
All other parts of \Cref{alg:loglog T} remain the same.

The modification of \Cref{alg:log T} is similar.
Variable $\SExpUtilityUnderbar$
now denotes the lower bound of $\SExpUtility(\optoff)$.
In the construction of signaling scheme $\schemeanchorpair$,
$\schemeanchorpairk$, and $\schemestartingpoint$,
holding everything else the same as before,
we modify $\schemeanchorpair(\realizedstateanchorj) = 
\frac{\SExpUtilityUnderbar}{\prior(\realizedstateanchorj)\SUtility(\realizedstateanchorj,1)}$,
$\schemeanchorpairk(\realizedstateanchorj) = 
\frac{\SExpUtilityUnderbar}{2\prior(\realizedstateanchorj)\SUtility(\realizedstateanchorj,1)}$,
and 
$\schemestartingpoint(\realizedstateanchorj) = 
\frac{\SExpUtilityUnderbar}{8\prior(\realizedstateanchorj)\SUtility(\realizedstateanchorj,1)}$.
All other parts of \Cref{alg:log T} remain the same.

We note that the state-dependent utility extension also captures the scenario where the platform is a benevolent player that aims to maximize the social welfare. For example, we can define a net utility function $\netSUtility(i, a) = \SUtility(i, a) + \RUtility(i, a)$ where $\SUtility(\cdot, \cdot), \RUtility(\cdot, \cdot)$ are the utility function of the platform and the user, respectively.
Then given a direct and \IC\ signaling scheme $\scheme$, we can write the platform's expected net utility as $U(\scheme) = \sum_{i\in[\totalstate]} \prior(i)\cdot \left(\scheme(i, 1) \netSUtility(i, 1) + \scheme(i, 0)\netSUtility(i, 0)\right) =
\sum_{i\in[\totalstate]} \prior(i)\cdot \scheme(i, 1) \left(\netSUtility(i, 1) - \netSUtility(i, 0)\right) + \sum_{i\in[m]} \prior(i) \netSUtility(i, 0)$. 
When $\SUtility(i, a) \equiv a$ (i.e., our main setting),  we redefine $\newnetSUtility(i, 1) \triangleq \netSUtility(i, 1) - \netSUtility(i, 0)$, and it is easy to see $\newnetSUtility(i, 1) = 1 + \RUtility(i, 1) - \RUtility(i, 0) \ge 0$ provided that we normalize $\RUtility(i, 1) - \RUtility(i, 0)\in[-1, 1]$. 
Thus, we can exactly recover the state-dependent utility extension that we just mentioned.

\xhdr{Users with misspecified beliefs}
In this setting, the optimal signaling scheme in hindsight $\optoff$ can be solved by the following linear program,
\begin{align*}
    \begin{array}{llll}
      \optoff = \argmax\limits_{\scheme}   &  
      \displaystyle\sum_{\realizedstatei\in[\totalstate]}
      \prior(\realizedstatei)\scheme(\realizedstatei,1)
      &  \text{s.t.}
      &\\
& \displaystyle\sum_{\realizedstatei\in[\totalstate]}
         \left(\RUtility(\realizedstatei,1)
         -
         \RUtility(\realizedstatei,0)\right)
         \prior\primed(\realizedstatei)
         \scheme(\realizedstatei,1)
         \geq 0
         \quad
         & & \\
         &\scheme(\realizedstatei, 1)
         +
         \scheme(\realizedstatei, 0) = 1
         & \realizedstatei\in[\totalstate]
         &\\
         &\scheme(\realizedstatei, 1)
         \geq 0, 
         \scheme(\realizedstatei, 0) \geq 0
         &
         \realizedstatei\in[\totalstate]
         &
    \end{array}
\end{align*}
By rewriting $\left(\RUtility(\realizedstatei,1)
         -
         \RUtility(\realizedstatei,0)\right)
         \prior\primed(\realizedstatei)$
as
$\costState(\realizedstatei)$,
we can observe that this is equivalent to\footnote{Specifically, 
there exists a bijection between the problem instances
in both problems,
such that the optimal signaling schemes in hindsight
and the corresponding utilities of the platform
are identical.
} the original 
Bayesian recommendation problem
solved by \Cref{alg:loglog T} and \Cref{alg:log T},
where the platform and users share the same beliefs.
Moreover, it can be verified that the regret guarantees
for both online policies continue to hold in this extension.

\xhdr{Which video to display}
We formally describe the extension as follows. There are $\videoNum$ categories of videos, and the video in each category $k\in [\videoNum]$ has $\totalstate_k\in\N^+$ number of states.
For each category $k$, its video states are realized according to a prior distribution $\prior_k \in \stochastic([\totalstate_k])$, which is common knowledge to all players. 
For the videos in each category $k$, the user has a utility function $\RUtility_k:[\totalstate]\times \cA \rightarrow \reals$ mapping from the state of the video in the category $k$ and his action to his utility, and the platform has a state-independent utility function $\SUtility: \cA \rightarrow \reals$ that only depends on the user's action $\action\in\cA$.
At the beginning of each time round, the platform decides which category of videos to display to the users, and also decides how to recommend the video with different states in this category to the users (i.e., how to signal the video states to the user). 
The user knows which category of the video is displayed at each time round,\footnote{Typically, the thumbnail of the short video can easily tell which category of this video belongs to.} but is unsure about the exact video state. 
The platform's goal is to maximize cumulative payoff. This involves identifying the video category with the highest payoff and optimizing recommendations within that category to achieve the best results.

When the platform knows the user's ordinal preference for each video category, we can show that an extension of our algorithm \CRP\ can still guarantee an $O(\log\log T)$ expected regret. Similar analysis can be extended when the platform does not know the user's ordinal preference, which is omitted to avoid repetition.
\begin{restatable}{proposition}{propmulticategory}
\label{prop:multi category}
In the multi-video-category, there exists an algorithm (see \Cref{alg:loglog T category}) that can achieve expected regret at most $O(\videoNum\log\log T)$.
\end{restatable}
\begin{proof}
We first describe \Cref{alg:loglog T category} when the platform has access to choose which category that can be displayed to the users. 
The algorithm follows similar logic to \Cref{alg:loglog T w unknown order}. 

The \Cref{alg:loglog T category}
relies on a subclass of direct signaling schemes $\scheme^{(k,u)}$, but with an additional parameter 
$k$ to represent a possible category of the videos, which also defines user's preference differences $\{\RUtilityDiff_k(\realizedstatei)\}$ for this category.
In other words, for every category, one can identify a direct signaling scheme $\scheme^{(k,u)}$ such that 
the expected payoff to the platform when implementing this signaling scheme for this category will exactly equal $u$
if this signaling scheme $\scheme^{(k,u)}$ is {\em \IC}.
Meanwhile, whenever the algorithm identifies 
an \IC\ signaling scheme $\scheme^{(k,u)}$ (see Line $4$ and $15$ in \Cref{alg:loglog T category}), it then naturally gives a lower bound 
of the platform's expected payoff of the optimal
signaling scheme, namely, $\SExpUtility(\optoff) \ge u$ where here $\optoff$ denotes the optimal signaling scheme among all categories. 
Then one can use the payoff (i.e., $u$) of
such \IC\ signaling scheme to further prune out 
other signaling schemes that are constructed for other categories but are either non-\IC\ or have payoff less than $u$.
Similar to the argument of \Cref{alg:loglog T w unknown order}, we would like to note that without pruning,  the regret of \Cref{alg:loglog T category} might have linear dependence on the time horizon $\totaltime$. 

With \Cref{alg:loglog T category}, the proof of the regret bound then follows similar arguments to the one used in the proof of \Cref{alg:loglog T w unknown order} (See \Cref{apx:proofs unknwon order} for detailed arguments). 
\end{proof}

\begin{algorithm}[H]
\caption{Conservative Recommendation Policy for Multiple Categories}
\label{alg:loglog T category}
\linespread{0.8}\selectfont
\SetAlgoLined\DontPrintSemicolon
\KwIn{number of rounds $\totaltime$, number of video categories $\videoNum$, 
number of states $\totalstate_k$ and
prior distribution $\prior_k$ for each category $k\in[n]$.}
Initialize the set $\cateSet \triangleq [\videoNum]$ to contain all categories.\\
\tcc{exploring phase I
--
identify 
$\SExpUtilityUnderbar$
such that 
$\SExpUtilityUnderbar
\leq \SExpUtility(\optoff) 
\leq 2 \SExpUtilityUnderbar$}
Initialize $\SExpUtilityUnderbar \gets \frac{1}{2}$
\\
\While{\True
}{
    \If{there exists $k\in \cateSet$ such that 
    $\CheckPersu\left(\scheme^{(k,\SExpUtilityUnderbar)}\right) =
    \True$}{
        $\cateSet\gets\left\{k\in \cateSet:
        \CheckPersu\left(\scheme^{(k,\SExpUtilityUnderbar)}\right) = 
        \True
        \right\}$\\
        \cc{break}
    }
    \uElse{
    $\SExpUtilityUnderbar\gets \frac{\SExpUtilityUnderbar}{2}$
    }
}
\tcc{exploring phase II --
identify a signaling scheme $\scheme\primed$
such that $\SExpUtility(\scheme\primed) \geq \SExpUtility(\optoff) - \frac{1}{\totaltime}$}
Initialize
$R \gets 2\SExpUtilityUnderbar$,
 $L \gets \SExpUtilityUnderbar$,
$\stepperc \gets 1$
\\
\While{$R - L \geq \frac{1}{\totaltime}$
}{
    $\perc \gets \frac{\stepperc}{2}$,
    $S\gets \lfloor\frac{R - L}{\perc L}\rfloor$
    \\
    \For{$\ell = 1, 2, \dots, S$}{
        \If{there exists $k\in \cateSet$ such that 
        $\CheckPersu\left(\scheme^{(k,L + \ell\perc L)}\right) =
        \True$}{
            $\cateSet\gets \left\{k\in\cateSet:
            \CheckPersu\left(\scheme^{(k,L + \ell\perc L)}\right) =
            \True
        \right\}$
        }\uElse{
            $R\gets L + \ell\perc L$,
            $L \gets L + (\ell-1)\perc L$,
            $\stepperc \gets \perc^2$\\
            \cc{break}
        }
    }
}
Choose an arbitrary $k\in \cateSet$ and set $\scheme\primed \gets \scheme^{(k,L)}$.
\\
\tcc{exploiting phase}
Choose the category $k$ to display and 
use the signaling scheme $\scheme\primed$
for all remaining rounds.
\end{algorithm}

\section{Simulations}
\label{subsec:simulation}

\newcommand{\CRPNoBS}{\texttt{\CRP$^-$}}
\newcommand{\NoInfor}{\texttt{NoInfor}}
\newcommand{\FullInfor}{\texttt{FullInfor}}
In this section, we provide some insights from numerical experiments 
that test the empirical performance
of our proposed algorithm and 
highlight some of its salient features. 
The main goal of this section is to evaluate the 
performance of our main algorithm (i.e., \CRP). Thus, we will focus on the 
case where the user's ordinal preference is known 
to the platform, but the cardinal preference remains unknown to the platform.
\begin{figure}[h!]
    \centering
    \includegraphics[width=0.6\textwidth]{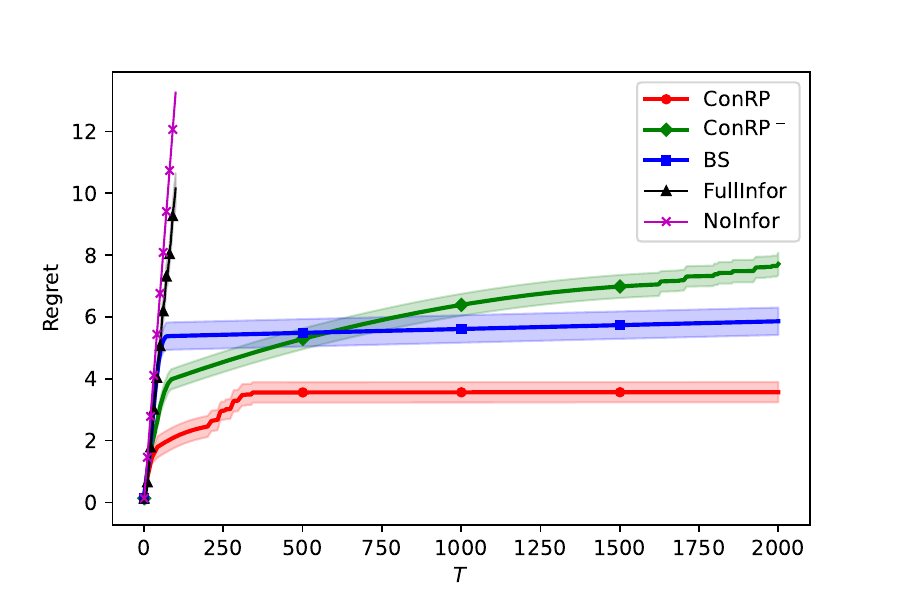}
    \caption{The above figure plots the regret growth with $T$ for various algorithms on a randomly generated instance with 
    $\totalstate = 50$, 
    and platform's optimal payoff is
    $\SExpUtility(\optoff) = 0.1326$. }
    \label{fig:algo comparison}
    \vspace{-10pt}
\end{figure}
We simulate an instance of the Bayesian recommendation problem with the number of states $\totalstate = 50$ and time horizon $T = 2000$,\footnote{\label{footnote:interaction number in practice}For context, real-world platforms often involve users making hundreds of sequential decisions per day: for example, TikTok users view roughly 350-600 videos daily \citep{Flixier,WG-22}, and Tinder users swipe through 140-200 profiles \citep{dating-app}. While these horizons are modest compared to the asymptotic regime in which our theoretical guarantees become tight, they motivate our choice to report cumulative regret over the first \(2{,}000\) rounds as a representative scale. Our results show that the proposed algorithm already performs well at this horizon, despite the idealized assumptions of our model.}
where the prior distribution $\prior$ is generated randomly in the space $\Delta([\totalstate])$, and the user's preference differences $\{\RUtilityDiff(i)\}$ are generated randomly from $\Unif[-2, 2]$.\footnote{Notice that, in our setting, 
the user's preference differences are the sufficient quantities to summarize user's behavior.}
And, we compute the average regret based on 
50 independent simulations from this randomly generated instance. 
In \Cref{fig:algo comparison}, we 
report the performance of the following algorithms:
\begin{enumerate}
    \item \CRP: This is our proposed algorithm 
    (\Cref{alg:loglog T} in \Cref{sec:knwon order}), which attains optimal $O(\log\log T)$ regret.
    \item \CRPNoBS: This is a modification of our \CRP\ with no exploring phase I (i.e., the binary-search steps) to lower bound the 
    platform's optimal payoff. 
    \item \texttt{BS}: This is a binary-search algorithm, and notice that since the user's ordinal preference is known to the platform, identifying the optimal signaling scheme is equivalent to identifying the platform's optimal payoff. Thus, this algorithm implements a 
    binary search to identify the value of platform's optimal payoff. 
    Particularly, this algorithm can utilize the user's response to determine whether the payoff of the current signaling scheme is lower/higher than the optimal payoff.
    \item \cc{FullInfor}: This is a naive algorithm that 
    keeps using a full-information signaling scheme.
    For the above generated random instance, the expected one-round payoff to the platform is $0.0304$.
    \item \NoInfor: This is a naive algorithm that 
    keeps using a no-information signaling scheme. 
    For the above generated random instance, the expected one-round payoff to the platform is $0$.   
\end{enumerate}
The last two baseline algorithms
$\FullInfor$ and $\NoInfor$ are used to 
demonstrate the usefulness of signaling in our Bayesian recommendation problem. 
It is expected that the performance of these two algorithms grows linearly with the time horizon $T$ (in \Cref{fig:algo comparison}, we only plot out the performance of these two algorithms for $T\le 100$ due to the limited margin of the presented figure).
As we can see in \Cref{fig:algo comparison},
the superior performance of our proposed \CRP\ demonstrates its effectiveness compared to other baseline algorithms. 
In particular, 
we observe that our \CRP\ 
performs significantly better than the binary-search algorithm $\texttt{BS}$, this is due to the asymmetric property of the payoff of the platform's signaling scheme (i.e.,  
the payoff $\SExpUtility(\scheme)$ of a particular signaling scheme $\scheme$ is 
$\SExpUtility(\scheme) = \sum_{i}\prior(i)\scheme(i) \cdot \indicator{\scheme \text{ is \IC}}$). 
Our \CRP\ also performs significantly better than the algorithm \CRPNoBS, and this significant improvement in performance is due to the additional exploring phase I to identify a 
lower bound of the platform's optimal payoff.

\end{document}